\definecolor{darkred}{rgb}{0.8,0.1,0.1}
\theoremstyle{plain}
\newtheorem{theo}{Theorem}[section]
\newtheorem{lem}[theo]{Lemma}
\newtheorem{propo}[theo]{Proposition}
\theoremstyle{definition}
\newenvironment{rem}
  {\pushQED{\qed}\remm}
  {\popQED\endremm}
\numberwithin{equation}{section}
\def\nn{\nonumber}
\def\bbK{\mathbb{K}}
\def\bbR{\mathbb{R}}
\def\bbC{\mathbb{C}}
\def\bbZ{\mathbb{Z}}
\def\bbL{\mathbb{L}}
\def\hom{\underline{\mathrm{hom}}}
\def\Aut{\mathrm{Aut}}
\def\id{\mathrm{id}}
\def\supp{\mathrm{supp}}
\def\dd{\mathrm{d}}
\def\vol{\mathrm{vol}}
\def\cc{\mathrm{c}}
\def\1{\mathbf{1}}
\def\rce{\mathrm{rce}}
\def\RCE{\mathrm{RCE}}
\def\Loc{\mathbf{Loc}}
\def\Lan{\operatorname{Lan}}
\def\AQFT{\mathbf{AQFT}}
\def\Fun{\mathbf{Fun}}
\def\Alg{\mathbf{Alg}}
\def\Vec{\mathbf{Vec}}
\def\Ch{\mathbf{Ch}}
\def\bCh{\mathbf{bCh}}
\def\astdgAlg{\mathbf{dg}^{\ast}\mathbf{Alg}}
\def\dgAlg{\mathbf{dgAlg}}
\def\BB{\mathbf{B}}
\def\CC{\mathbf{C}}
\def\DD{\mathbf{D}}
\def\PoCh{\mathbf{PoCh}}
\def\AAA{\mathfrak{A}}
\def\BBB{\mathfrak{B}}
\def\LLL{\mathfrak{L}}
\def\BBB{\mathfrak{B}}
\def\Sol{\mathfrak{Sol}}
\def\CCR{\mathfrak{CCR}}
\def\O{\mathcal{O}}
\def\colim{\mathrm{colim}}
\def\Tot{\mathrm{Tot}^\oplus}
\def\RCE{\mathrm{RCE}}
\def\rce{\mathrm{rce}}
\newcommand\ovr[1]{\overline{#1}}
\newcommand\mycom[2]{\genfrac{}{}{0pt}{}{#1}{#2}}
\def\sk{\vspace{1mm}}
\let\@fnsymbol\@alph
\title{%
Relative Cauchy evolution for linear homotopy AQFTs
}
\author{%
Simen Bruinsma$^{1,2,a}$, 
Christopher J.~Fewster$^{2,b}$\ and\ 
Alexander Schenkel$^{1,c}$\vspace{4mm}\\
{\small ${}^1$ School of Mathematical Sciences, University of Nottingham,}\\
{\small University Park, Nottingham NG7 2RD, United Kingdom.}\vspace{2mm}\\
{\small ${}^2$ Department of Mathematics, University of York,}\\
{\small Heslington, York YO10 5DD, United Kingdom.}\vspace{4mm}\\
{\small \begin{tabular}{ll}
Email: & ${}^a$~\texttt{simen.bruinsma@nottingham.ac.uk}\\
& ${}^b$~\texttt{chris.fewster@york.ac.uk}\\
& ${}^c$~\texttt{alexander.schenkel@nottingham.ac.uk}\vspace{2mm}
\end{tabular}
}
}
\date{February 2022}
\begin{document}

\maketitle

\vspace{-5mm}

\begin{abstract}
\noindent This paper develops a concept of relative Cauchy evolution for the class of homotopy algebraic quantum field theories (AQFTs) that are obtained by canonical commutation relation quantiz{\-}ation of Poisson chain complexes.  The key element of the construction is a rectification theorem proving that the homotopy time-slice axiom,  which is a higher categorical relaxation of the time-slice axiom of AQFT,  can be strictified for theories in this class.  The general concept is illustrated through a detailed study of the relative Cauchy evolution for the homotopy AQFT associated with linear Yang-Mills theory, for which the usual stress-energy tensor is recovered.
\end{abstract}

\vspace{-1mm}

\paragraph*{Keywords:} algebraic quantum field theory, relative Cauchy evolution, gauge theory, homotopical algebra, chain complexes, BRST/BV formalism
\vspace{-2mm}

\paragraph*{MSC 2020:} 81Txx, 18N40
\vspace{-1mm}

\renewcommand{\baselinestretch}{0.8}\normalsize
\tableofcontents
\renewcommand{\baselinestretch}{1.0}\normalsize



\section{\label{sec:intro}Introduction and summary}
An algebraic quantum field theory (AQFT) on Lorentzian manifolds can be 
described by a functor from a category of spacetimes to a category of algebras,
so that $\AAA(M)$ is the algebra of observables assigned to the spacetime $M$ by the theory.\footnote{It is 
also possible for $\AAA(M)$ to be the field algebra of possibly unobservable fields but for brevity 
we suppress this point.}    
An important concept in this framework is the relative Cauchy evolution (RCE) \cite{BFV,FewsterVerch} 
that allows one to study the response of
a theory $\AAA$ to perturbations of the spacetime geometry. 
It is described by a coherent family of automorphisms
$\RCE_{(M,h)} : \AAA(M)\to\AAA(M)$ of the algebras
of quantum observables that is labeled by
pairs $(M,h)$ consisting of a spacetime $M$ and a sufficiently small 
compactly supported metric perturbation $h$ on $M$.
The first derivatives $\frac{d}{d\epsilon} \RCE_{(M,\epsilon h)} \big\vert_{\epsilon=0}$
of the family of RCE automorphisms along the metric perturbation $h$ 
determine the quantum stress-energy tensor of the theory $\AAA$.
\sk

In the usual construction of the RCE automorphisms,
see e.g.\ \cite{BFV,FewsterVerch}, it is crucial that the theory $\AAA$ satisfies 
the {\em time-slice axiom}. This axiom requires that $\AAA$ assigns to
every \emph{Cauchy morphism} $f:M\to N$ (a morphism of spacetimes
whose image $f(M)\subseteq N$ contains a Cauchy surface of $N$)
an isomorphism $\AAA(f) : \AAA(M)\to\AAA(N)$ of algebras. These isomorphisms introduce a concept 
of time evolution for the quantum observables, which is the key ingredient 
to define the RCE automorphisms. See Section \ref{sec:ordinaryRCE}
for more details and a concise review of this construction.
\sk

The main goal of this paper is to develop a generalization of the concept of 
relative Cauchy evolution to {\em homotopy AQFTs} \cite{BSWhomotopy},
which are higher categorical refinements of AQFTs that are relevant to 
describe quantum gauge theories.  Examples of homotopy AQFTs arise from
the BRST/BV formalism \cite{FredenhagenRejzner,FredenhagenRejzner2}
and its mathematical incarnation in terms of derived geometry \cite{LinearYM}.
See also \cite{CostelloGwilliam} for similar developments in the context 
of factorization algebras.
The key difference between homotopy AQFTs and ordinary ones is that
they assign {\em differential graded algebras} (in short, dg-algebras) of quantum observables
which, in the terminology of the BRST/BV formalism, contain also
observables for the ghost fields and the antifields of a gauge theory.
As evidenced by the concrete examples studied in \cite{FredenhagenRejzner,FredenhagenRejzner2}
and \cite{LinearYM}, physically relevant models of homotopy AQFT usually satisfy
only a higher categorical relaxation of the time-slice axiom, which was
called the {\em homotopy time-slice axiom}. This axiom demands that
the homotopy AQFT $\AAA$ assigns to every Cauchy morphism $f:M\to N$
a {\em weak equivalence} $\AAA(f) : \AAA(M)\to\AAA(N)$ of dg-algebras,
i.e.\ a morphism of dg-algebras whose underlying chain map is a 
quasi-isomorphism (an isomorphism at the level of homologies). In contrast to the isomorphisms in ordinary AQFT, 
such weak equivalences do not admit (strict) inverses and hence the usual
construction of the RCE can not be applied directly to 
homotopy AQFTs. 
\sk

Our proposal to remedy this issue is to seek a {\em rectification theorem}
for the homotopy time-slice axiom of homotopy AQFTs.
Loosely speaking, this consists of a replacement of a given homotopy AQFT $\AAA$ that satisfies
the homotopy time-slice axiom by a weakly equivalent homotopy AQFT $\AAA^{\mathrm{st}}$
that satisfies the strict time-slice axiom and hence admits the usual RCE automorphisms.
The technical details of this proposal are explained in Section \ref{sec:proposalhoRCE}.
The main result is Theorem \ref{theo:rectificationhomotopyAQFT}
which proves that, when restricted to a category that is relevant
for the relative Cauchy evolution associated with a fixed but arbitrary pair $(M,h)$,
every {\em linear} homotopy AQFT satisfying the homotopy time-slice axiom 
admits such a strictification. Let us recall
that a linear homotopy AQFT is a theory $\AAA = \CCR(\LLL,\tau)$ that
is obtained through canonical commutation relation (CCR) quantization
of Poisson chain complexes. In physics terminology,
these are non-interacting (i.e.\ ``free'') quantum gauge theories, such as e.g.\ 
the linear Yang-Mills model studied in \cite{LinearYM}.
Our construction of the weakly equivalent strictified theory 
$\AAA^{\mathrm{st}}$ is rather explicit as it is determined
by CCR-quantization of (a functor of) Poisson chain complexes $(\bbL L_! (\LLL),\tau_\bbL)$ that we 
describe in detail in Sections \ref{sec:rectification} and \ref{sec:Poisson}.
The relative Cauchy evolution for the weakly equivalent strictified theory $\AAA^{\mathrm{st}}$ 
takes a very simple form, given explicitly in \eqref{eqn:bigRCEmap} and \eqref{eqn:smallRCEmap}.
\sk

Even though our concept of relative Cauchy evolution that is 
obtained by the Rectification Theorem \ref{theo:rectificationhomotopyAQFT}
is mathematically sound and relatively simple to describe, its physical 
interpretation is a priori less clear. We shall address this issue
in Section \ref{sec:Example} by studying as a concrete example the linear 
Yang-Mills model from \cite{LinearYM}. The key result is 
Proposition \ref{propo:rcequantumfield} which proves that, at the 
level of linear quantum fields, our abstract concept of relative Cauchy evolution
through rectification admits an equivalent, but physically more transparent and familiar,
description in terms of a generalization of the usual RCE automorphism
construction obtained by choosing quasi-inverses for the quasi-isomorphisms
assigned to Cauchy morphisms. This equivalent model
can be worked out in detail, see in particular
Proposition \ref{propo:rcemapsexplicit} and Remark \ref{rem:rcemapsexplicit},
leading to explicit formulas for the relative Cauchy evolution
of linear Yang-Mills theory that generalize earlier non-homotopical 
results in \cite{FewsterLang}.
We shall further compute the associated stress-energy tensor
and find that, up to exact terms that can be removed by a homotopy, 
it only receives contributions from the gauge field and {\em not} 
from the ghosts and antifields. More precisely, the stress-energy
tensor that is obtained via our homotopical concept of relative 
Cauchy evolution agrees with the one of the standard Maxwell action
$S= -\int_M \frac{1}{2}F\wedge\ast F$. This is a very pleasing 
result because it substantiates mathematically the physical expectation that
ghosts and antifields do not contribute to the energy content of a theory.
\sk

The outline of the remainder of this paper is as follows:
In Section \ref{sec:ordinaryRCE} we shall briefly recall the concept of relative Cauchy evolution
in ordinary AQFT \cite{BFV,FewsterVerch} and provide a reformulation
in terms of localization of categories that will be useful for generalizing to homotopy AQFT.
Section \ref{sec:proposalhoRCE} outlines our proposal for how to obtain 
a well-defined concept of relative Cauchy evolution
for homotopy AQFTs satisfying the homotopy time-slice axiom. The key idea of our approach is
to seek a rectification theorem that allows us to strictify the homotopy time-slice axiom
and thereby make available the usual RCE automorphism construction in the context of homotopy AQFT.
Restricting to linear homotopy AQFTs \cite{LinearYM} and the category \eqref{eqn:RCEcategory}
that is relevant for the relative Cauchy evolution associated with a fixed but arbitrary pair $(M,h)$,
we can successfully prove such a rectification theorem, see Theorem \ref{theo:rectificationhomotopyAQFT}.
The formulation and proof of this theorem is slightly abstract as it requires techniques from model 
category theory \cite{Hovey,DHKS,Riehl}, in particular derived functors and their 
concrete models obtained via bar resolutions \cite{Fresse}. These more technical aspects
are discussed and worked out in detail in Sections \ref{sec:rectification} and \ref{sec:Poisson}.
In Section \ref{sec:Example} we apply our novel concept of relative Cauchy evolution
for homotopy AQFTs to the linear Yang-Mills model from \cite{LinearYM}. We 
obtain explicit formulas involving Green operators, see in particular
Proposition \ref{propo:rcemapsexplicit} and Remark \ref{rem:rcemapsexplicit},
which generalize the earlier non-homotopical results in \cite{FewsterLang}.
We also compute explicitly the stress-energy tensor for this example.
This paper contains three appendices:
Appendix \ref{app:bimcomplexes} summarizes our conventions
for bicomplexes, Appendix \ref{app:bar} recalls the bar construction,
and Appendix \ref{app:zigzag} lists explicitly the homotopy coherence data
that is needed to construct the Poisson structure in Section \ref{sec:Poisson}.


\section{\label{sec:ordinaryRCE}Relative Cauchy evolution and localization of categories}
In this section we shall briefly review the concept of relative Cauchy evolution (RCE) 
for ordinary AQFTs \cite{BFV,FewsterVerch} from a perspective 
that will be useful for our generalization to homotopy AQFTs. 
We adopt the operadic formulation of AQFT \cite{BSWoperad,BSreview} that automatically incorporates 
Einstein causality as an intrinsic part of the structure.
Let us recall that an orthogonal category 
is a pair $\ovr{\CC} := (\CC,\perp)$ consisting of a category $\CC$ and 
a subset $\perp\, \subseteq\mathrm{Mor}(\CC)\, {{}_{\mathsf{t}}\times}{{}_{\mathsf{t}}}\, \mathrm{Mor}(\CC)$
expressing which pairs of morphisms to a common target are considered to be orthogonal. 
A central example for AQFT is $\ovr{\Loc} := (\Loc,\perp_{\Loc}^{})$, where $\Loc$ is
the usual category of all oriented and time-oriented globally hyperbolic Lorentzian manifolds $M$
and morphisms $f:M\to N$ given by orientation and time-orientation preserving isometric embeddings
with open and causally convex image.
The orthogonality relation $\perp_{\Loc}^{}$ is determined by causal disjointness, i.e.\ 
$(f_1:M_1\to N)\perp_{\Loc}^{} (f_2:M_2\to N)$ if and only if the images
$f_1(M_1)$ and $f_2(M_2)$ are causally disjoint open subsets of $N$.
Associated to every orthogonal category $\ovr{\CC}$ 
is a category $\AQFT(\ovr{\CC})$ of AQFTs on $\ovr{\CC}$
with values in the symmetric monoidal category $\Vec_\bbK$ 
of vector spaces over a (fixed) field $\bbK$ of characteristic $0$. 
More precisely, the objects of $\AQFT(\ovr{\CC})$ are given by $\Vec_\bbK$-valued algebras
over the AQFT operad $\O_{\ovr{\CC}}$ associated with the orthogonal category $\ovr{\CC} =(\CC,\perp)$. 
By \cite[Theorem 3.16]{BSWoperad}, each such object admits an equivalent description as a 
functor $\AAA : \CC\to \Alg_{\bbK}$ to the category of associative and unital $\bbK$-algebras
satisfying an abstract version of the Einstein causality axiom encoded by the orthogonality 
relation $\perp$ on $\CC$. For $\ovr{\Loc}$, the category
$\AQFT(\ovr{\Loc})$ describes AQFTs on $\Loc$ in the sense of \cite{BFV}, 
which satisfy the usual Einstein causality axiom but not necessarily the time-slice axiom. 
\sk

Turning to the time-slice axiom, let us recall that a $\Loc$-morphism
$f:M\to N$ is called a \emph{Cauchy morphism} if its image $f(M)\subseteq N$
contains a Cauchy surface of $N$. We denote by $W\subseteq \mathrm{Mor}(\Loc)$
the subset of all Cauchy morphisms in $\Loc$. There are two equivalent ways to implement 
the time-slice axiom. On the one hand, one can consider the full subcategory
$\AQFT(\ovr{\Loc})^{W}\subseteq \AQFT(\ovr{\Loc})$ consisting of all AQFTs $\AAA\in  \AQFT(\ovr{\Loc})$
that assign to every Cauchy morphism $(f: M\to N)\in W$ an isomorphism $\AAA(f) : \AAA(M)\to\AAA(N)$.
On the other hand, one can proceed by \emph{localization}. Informally speaking,
the localization $\Loc[W^{-1}]$ of $\Loc$ at $W$ is a universally constructed 
category in which every Cauchy morphism $f\in W$ possesses an inverse.
More precisely, the localization is characterized by a functor
$L : \Loc\to \Loc[W^{-1}]$ that satisfies the universal property
stated in e.g.\ \cite[Section 7.1]{KashiwaraSchapira}.
The localization functor determines a localization of orthogonal categories 
$\ovr{\Loc}[W^{-1}]:=(\Loc[W^{-1}],L_\ast(\perp_{\Loc}^{}))$,
where $L_\ast(\perp_{\Loc}^{})$ is the minimal orthogonality relation
on $\Loc[W^{-1}]$ containing every $\{(L(f_1),L(f_2))\,:\, f_1 \perp_{\Loc}^{} f_2\}$. 
By construction, the localization functor $L$ defines an orthogonal functor
$L : \ovr{\Loc}\to \ovr{\Loc}[W^{-1}]$, i.e.\ a functor that preserves the orthogonality relations.
With this notation established, the time-slice axiom is automatically implemented 
in every theory belonging to the category $\AQFT(\ovr{\Loc}[W^{-1}])$.
These two perspectives are equivalent due to the following result, which was proven
in \cite[Proposition 4.4]{BSWoperad} and \cite[Proposition 2.21]{BSreview}.
\begin{propo}\label{prop:localizationequivalence}
The orthogonal localization functor $L : \ovr{\Loc}\to \ovr{\Loc}[W^{-1}]$ 
determines an adjunction
\begin{flalign}
\xymatrix@C=3.5em{
L_! \,:\, \AQFT(\ovr{\Loc})~\ar@<0.75ex>[r]&\ar@<0.75ex>[l]  ~\AQFT(\ovr{\Loc}[W^{-1}]) \,:\, L^\ast
}
\end{flalign}
that exhibits $\AQFT(\ovr{\Loc}[W^{-1}])$ as a full reflective subcategory of $\AQFT(\ovr{\Loc})$,
i.e.\ the counit $\epsilon : L_! L^\ast\to \id$ is a natural isomorphism.
This adjunction restricts to an adjoint equivalence
\begin{flalign}
\xymatrix@C=3.5em{
L_! \,:\, \AQFT(\ovr{\Loc})^{W}~\ar@<0.75ex>[r]_-{\sim}&\ar@<0.75ex>[l]  ~\AQFT(\ovr{\Loc}[W^{-1}]) \,:\, L^\ast
}
\end{flalign}
between the two categories  $\AQFT(\ovr{\Loc})^{W}$ and $\AQFT(\ovr{\Loc}[W^{-1}])$,
i.e.\ the restriction of the unit $\eta : \id \to L^\ast L_!$ to 
$\AQFT(\ovr{\Loc})^{W}\subseteq \AQFT(\ovr{\Loc})$ is a natural isomorphism.
\end{propo}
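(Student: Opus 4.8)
The plan is to obtain the adjunction from the functoriality of the AQFT operad construction, to identify the counit being an isomorphism with full faithfulness of $L^\ast$, and then to characterise the essential image of $L^\ast$ as exactly the time-slice subcategory $\AQFT(\ovr{\Loc})^W$. For the first point, the orthogonal functor $L:\ovr{\Loc}\to\ovr{\Loc}[W^{-1}]$ induces a morphism of AQFT operads $\O_L:\O_{\ovr{\Loc}}\to\O_{\ovr{\Loc}[W^{-1}]}$ in $\Vec_\bbK$, and any operad morphism yields a restriction functor $L^\ast$ on algebras together with a left adjoint $L_!$ given by operadic left Kan extension. Equivalently, under the identification of AQFTs with causality-satisfying functors to $\Alg_\bbK$ from \cite[Theorem 3.16]{BSWoperad}, $L^\ast$ is precomposition with $L$ and $L_!$ is left Kan extension along $L$ composed with the reflection into the causality-satisfying subcategory; either description produces the stated adjunction $L_!\dashv L^\ast$.

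Next I would show that the counit $\epsilon:L_!L^\ast\to\id$ is a natural isomorphism, equivalently that $L^\ast$ is fully faithful. For this it suffices to use the universal property of the localization $L:\Loc\to\Loc[W^{-1}]$ (as in \cite[Section 7.1]{KashiwaraSchapira}): for any category $\DD$, precomposition with $L$ is a fully faithful functor $\Fun(\Loc[W^{-1}],\DD)\to\Fun(\Loc,\DD)$ whose essential image consists of the functors sending all of $W$ to isomorphisms. Applying this with $\DD=\Alg_\bbK$ and observing that a morphism of AQFTs is precisely a natural transformation of the underlying $\Alg_\bbK$-valued functors — compatibility with the causality structure being a property rather than extra data — shows that $L^\ast$ is fully faithful, hence $\epsilon$ is an isomorphism and $\AQFT(\ovr{\Loc}[W^{-1}])$ is a full reflective subcategory of $\AQFT(\ovr{\Loc})$.

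For the restricted equivalence I would identify the essential image of $L^\ast$ with $\AQFT(\ovr{\Loc})^W$. One inclusion is immediate: $L$ sends every Cauchy morphism to an isomorphism, so $L^\ast\BBB=\BBB\circ L$ does too, i.e.\ $L^\ast\BBB\in\AQFT(\ovr{\Loc})^W$. Conversely, given $\AAA\in\AQFT(\ovr{\Loc})^W$, its underlying functor inverts $W$ and hence factors as $\tilde\AAA\circ L$ for an essentially unique functor $\tilde\AAA:\Loc[W^{-1}]\to\Alg_\bbK$; one checks that $\tilde\AAA$ still satisfies Einstein causality with respect to the pushed-forward orthogonality relation $L_\ast(\perp_{\Loc}^{})$, so that $\tilde\AAA\in\AQFT(\ovr{\Loc}[W^{-1}])$ and $\AAA\cong L^\ast\tilde\AAA$. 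Now for any $\BBB\in\AQFT(\ovr{\Loc}[W^{-1}])$ the triangle identity $L^\ast\epsilon_{\BBB}\circ\eta_{L^\ast\BBB}=\id_{L^\ast\BBB}$ together with $\epsilon$ being an isomorphism forces $\eta_{L^\ast\BBB}$ to be an isomorphism; since every object of $\AQFT(\ovr{\Loc})^W$ is of this form up to isomorphism, the unit $\eta$ restricts to a natural isomorphism on $\AQFT(\ovr{\Loc})^W$, and $L_!\dashv L^\ast$ restricts to an adjoint equivalence $\AQFT(\ovr{\Loc})^W\simeq\AQFT(\ovr{\Loc}[W^{-1}])$.

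The step I expect to be the main obstacle is the verification that the factored functor $\tilde\AAA$ satisfies Einstein causality relative to $L_\ast(\perp_{\Loc}^{})$: this is the only place where the precise (minimal) definition of the pushed-forward orthogonality relation is used, and it requires controlling which pairs of morphisms in $\Loc[W^{-1}]$ are orthogonal. Since morphisms of the localization are formal zig-zags of $\Loc$-morphisms and formal inverses of Cauchy morphisms, one must show that the causality condition on the generating orthogonal pairs $\{(L(f_1),L(f_2)):f_1\perp_{\Loc}^{}f_2\}$ propagates to all pairs in $L_\ast(\perp_{\Loc}^{})$, using the factorisation $\tilde\AAA\circ L=\AAA$ and the causality of $\AAA$. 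Everything else is a formal consequence of the universal property of localization and the triangle identities, so the proof is essentially bookkeeping once this compatibility is settled.
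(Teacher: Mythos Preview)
The paper does not prove this proposition itself; it merely cites \cite[Proposition 4.4]{BSWoperad} and \cite[Proposition 2.21]{BSreview}. Your outline is correct and follows what is essentially the standard argument: obtain the adjunction from the induced operad morphism, use the universal property of localization to show $L^\ast$ is fully faithful, and identify its essential image as the time-slice subcategory.

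The step you flag as the ``main obstacle'' is in fact easy, and you are over-anticipating the difficulty of controlling zig-zags. The key observation is that, for any functor $\tilde\AAA:\Loc[W^{-1}]\to\Alg_\bbK$, the set of pairs $(g_1,g_2)$ of morphisms with common target for which the images of $\tilde\AAA(g_1)$ and $\tilde\AAA(g_2)$ commute is itself an orthogonality relation: it is symmetric, stable under post-composition because algebra homomorphisms preserve commutativity, and stable under pre-composition because the images can only shrink. Since $L_\ast(\perp_{\Loc}^{})$ is by definition the \emph{smallest} orthogonality relation containing the generating pairs $(L(f_1),L(f_2))$ with $f_1\perp_{\Loc}^{}f_2$, it suffices to verify causality on these generators, which reduces immediately to causality of $\AAA$ via $\tilde\AAA\circ L=\AAA$. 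No direct analysis of zig-zags is required.

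A minor quibble: your parenthetical description of $L_!$ as ``left Kan extension along $L$ composed with the reflection into the causality-satisfying subcategory'' presupposes that such a reflection exists, which is not a priori obvious. You do not actually use this description anywhere, and the operadic left Kan extension already gives the required left adjoint; it would be cleaner to drop the alternative description.
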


The concept of relative Cauchy evolution is defined for each AQFT $\AAA$ that satisfies the time-slice axiom.
It records the response of $\AAA$ to perturbations of the spacetime metric and thereby encodes
information about the stress-energy tensor of the theory \cite{BFV,FewsterVerch}. In more detail,
given a spacetime $M\in \Loc$ with metric denoted by $g$ and a sufficiently small compactly supported 
metric perturbation $h$
such that the spacetime $M_h$ with the  perturbed metric $g+h$ is also an object in $\Loc$,
one obtains a diagram
\begin{flalign}\label{eqn:RCEdiagram}
\xymatrix@R=1.2em@C=1.2em{
&\ar[dl]_-{i_+}M_+ \ar[dr]^-{j_+}&\\
M&&M_h\\
&\ar[ul]^-{i_-} M_-\ar[ur]_-{j_-}&
}
\end{flalign}
in the category $\Loc$. In this diagram $M_\pm := M\setminus J^{\mp}_{M}(\supp\, h)\in\Loc$
is the spacetime $M$ with the causal past/future of the support of the perturbation $h$ removed
and $i_\pm$, $j_\pm$ are the canonical inclusion morphisms. It is important to observe that
all morphisms in the diagram \eqref{eqn:RCEdiagram} are Cauchy morphisms,
but (except in the trivial case $h=0$) not isomorphisms. 
Given now any $\AAA\in\AQFT(\ovr{\Loc})^{W}$ that satisfies the time-slice axiom,
its application to the morphisms in the diagram \eqref{eqn:RCEdiagram} gives isomorphisms
and one defines the relative Cauchy evolution associated with the pair $(M,h)$ as the automorphism
\begin{flalign}\label{eqn:RCEA}
\RCE_{(M,h)}^{}\,:=\,\AAA(i_-)\,\AAA(j_-)^{-1}\,\AAA(j_+)\,\AAA(i_+)^{-1}\,:\, \AAA(M)~\longrightarrow~\AAA(M)
\end{flalign}
of associative and unital $\bbK$-algebras. The family $\{\RCE_{(M,h)}^{}\}$ of automorphisms for all admissible
pairs $(M,h)$ satisfies by construction various compatibility conditions among its members and also with respect
to $\Loc$-morphisms $f:M\to N$, see e.g.\ \cite[Section 3.4]{FewsterVerchSPASS}.
\sk

Let us now reformulate the concept of relative Cauchy evolution
from the equivalent point of view given by AQFTs on the localized orthogonal category
$\ovr{\Loc}[W^{-1}]$, cf.\ Proposition \ref{prop:localizationequivalence}.
Because the orthogonal localization functor $L : \ovr{\Loc}\to\ovr{\Loc}[W^{-1}]$ 
maps each morphism in the diagram \eqref{eqn:RCEdiagram} to an isomorphism,
we may define the automorphism
\begin{flalign}\label{eqn:rLoclocalized}
r_{(M,h)}^{}\,:=\, L(i_-)\,L(j_-)^{-1}\,L(j_+)\,L(i_+)^{-1}\,:\,L(M)~\longrightarrow~ L(M)
\end{flalign}
in $\ovr{\Loc}[W^{-1}]$. Given now any $\BBB \in \AQFT(\ovr{\Loc}[W^{-1}])$, the relative Cauchy evolution
is defined by applying $\BBB$ to the automorphism \eqref{eqn:rLoclocalized}, i.e.\
\begin{flalign}\label{eqn:RCEB}
\RCE_{(M,h)}^{}\,:=\,\BBB(r_{(M,h)}^{})\,:\, \BBB(L(M))~\longrightarrow \BBB(L(M))\quad.
\end{flalign}
It is clear that the two formulations of the relative Cauchy evolution given 
in \eqref{eqn:RCEA} and \eqref{eqn:RCEB} agree when $\AAA\in\AQFT(\ovr{\Loc})^{W}$ 
and $\BBB\in\AQFT(\ovr{\Loc}[W^{-1}])$ define the same theory, i.e.\ $\AAA = L^\ast \BBB$
with $L^\ast$ the right adjoint in Proposition \ref{prop:localizationequivalence}.
\sk

Summing up, for theories defined on the localized orthogonal category
$\ovr{\Loc}[W^{-1}]$, the relative Cauchy evolution is simply given by the application
of a theory $\BBB\in\AQFT(\ovr{\Loc}[W^{-1}])$ to certain automorphisms \eqref{eqn:rLoclocalized}
in $\ovr{\Loc}[W^{-1}]$. The 
compatibility conditions between $\RCE$ automorphisms
and $\Loc$-morphisms listed in \cite{FewsterVerchSPASS} are then directly encoded in
the localized orthogonal category $\ovr{\Loc}[W^{-1}]$.
\sk

We may pursue the reformulation a bit further by considering the 
left adjoint $L_!$ to the pullback $L^\ast$ in more detail. 
Instead of working on all of $\Loc$, we consider only the 
subcategory
\begin{flalign}\label{eqn:RCEcategory}
\CC \,:=\, \left(
\parbox{3em}{\xymatrix@R=1.2em@C=1.2em{
&\ar[dl]_-{i_+}M_+ \ar[dr]^-{j_+}&\\
M&&M_h\\
&\ar[ul]^-{i_-} M_-\ar[ur]_-{j_-}&
}}
\right) \,\subseteq\,\Loc
\end{flalign}
that is relevant for determining the relative Cauchy evolution induced by a pair $(M,h)$.
Because $\CC$ does not contain any causally disjoint pairs of morphisms,
i.e.\ the restriction of $\perp_{\Loc}^{}$ to $\CC$ is empty, 
the category $\AQFT(\ovr{\CC}) \simeq \Fun(\CC,\Alg_\bbK)$ of AQFTs on $\ovr{\CC}$
is simply the category of all functors from $\CC$ to the category of associative
and unital $\bbK$-algebras $\Alg_\bbK$. Furthermore, because all morphisms in $\CC$ 
are Cauchy morphisms, the analog of the orthogonal localization $\ovr{\Loc}[W^{-1}]$
in the present context is given by the localization $\CC[\mathrm{All}^{-1}]$ 
at all $\CC$-morphisms, which also has an empty orthogonality relation.
\sk

The localization $\CC[\mathrm{All}^{-1}]$ can be described very explicitly. Let 
us denote by $\BB \bbZ$ the category with a single object $\ast$ and morphisms given 
by the group $\bbZ$ of integers. It is useful to note that any functor 
$\BBB: \BB\bbZ\to\DD$ to any category $\DD$ can be described equivalently 
as an object $\BBB(\ast)\in\DD$, which
we shall often denote simply as $\BBB=\BBB(\ast)$, equipped with a $\bbZ$-action 
$\bbZ\ni k\mapsto \BBB(k)\in\Aut(\BBB(\ast))$. Now consider the functor
\begin{flalign}
\nn L\,:\, \CC ~&\longrightarrow~\BB\bbZ\quad,\\
\nn M,M_-,M_h,M_+ ~&\longmapsto~\ast\quad,\\
\nn i_-~&\longmapsto~1\quad,\\
j_-,j_+,i_+~&\longmapsto~ 0\quad. \label{eqn:localizationfunctor}
\end{flalign}
\begin{lem}
The functor \eqref{eqn:localizationfunctor} is a localization of the category 
$\CC$ given in \eqref{eqn:RCEcategory} at all morphisms.
\end{lem}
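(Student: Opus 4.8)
The plan is to verify directly the universal property of the localization of categories, as formulated in \cite[Section~7.1]{KashiwaraSchapira}, for the functor $L$ of \eqref{eqn:localizationfunctor}. Since every morphism of $\BB\bbZ$ is invertible, $L$ automatically sends all $\CC$-morphisms to isomorphisms, so the content of the lemma is the statement that, for every category $\DD$, the pullback functor $L^\ast : \Fun(\BB\bbZ,\DD)\to \Fun(\CC,\DD)$ is fully faithful with essential image the full subcategory of those functors $\CC\to\DD$ that send every morphism of $\CC$ to an isomorphism. Because $\CC$ is a very small category --- four objects and, apart from identities, only the four generating morphisms $i_\pm,j_\pm$, among which there are no nontrivial composites --- this can be checked by a bare-hands argument; the only conceptual input is the description recalled above of functors $\BB\bbZ\to\DD$ as objects of $\DD$ equipped with an automorphism.

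For essential surjectivity, given a functor $F:\CC\to\DD$ inverting all morphisms, I would define $\ovr F:\BB\bbZ\to\DD$ by $\ovr F(\ast):=F(M)$ and $\ovr F(1):=F(i_-)\,F(j_-)^{-1}\,F(j_+)\,F(i_+)^{-1}\in\Aut(F(M))$, which makes sense precisely because $F$ inverts $i_\pm$ and $j_\pm$; note that this is exactly the RCE automorphism \eqref{eqn:RCEA} of $F$. One then exhibits a natural isomorphism $\theta:\ovr F\circ L\Rightarrow F$ with components $\theta_M=\id_{F(M)}$, $\theta_{M_+}=F(i_+)^{-1}$, $\theta_{M_h}=F(j_+)\,F(i_+)^{-1}$ and $\theta_{M_-}=F(i_-)^{-1}\,\ovr F(1)$; the naturality squares for $i_+$, $j_+$ and $i_-$ hold by construction of these components, and the one for $j_-$ is, after cancelling isomorphisms, precisely the defining relation for $\ovr F(1)$. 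Each $\theta_X$ is an isomorphism, so $\ovr F\circ L\cong F$.

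For full faithfulness of $L^\ast$, consider two functors $G,G':\BB\bbZ\to\DD$, corresponding to pairs $(G(\ast),G(1))$ and $(G'(\ast),G'(1))$. A natural transformation $\beta:G\circ L\Rightarrow G'\circ L$ is a family of morphisms $\beta_X:G(\ast)\to G'(\ast)$ indexed by the objects $X$ of $\CC$: the naturality squares for the three morphisms sent by $L$ to $0\in\bbZ$, namely $i_+$, $j_+$ and $j_-$, force $\beta_{M}=\beta_{M_+}=\beta_{M_h}=\beta_{M_-}$ (here one uses that these three morphisms already connect all four objects of $\CC$), while the naturality square for $i_-$, which $L$ sends to $1\in\bbZ$, says exactly that this common component intertwines $G(1)$ with $G'(1)$. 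This is precisely the data of a natural transformation $G\Rightarrow G'$, so $L^\ast$ is a bijection on hom-sets; together with the previous paragraph this establishes that $L$ is the desired localization.

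I do not expect a genuine obstacle here, since $\CC$ is so small that every step is an explicit finite computation. The one point one must not overlook is that consistency of the natural isomorphism in the essential-surjectivity step forces the $\bbZ$-action on $\ovr F(\ast)$ to be the RCE automorphism --- equivalently, that the generator of $\bbZ$ must correspond under $L$ to the zig-zag $i_-\circ j_-^{-1}\circ j_+\circ i_+^{-1}$ --- which is exactly why the assignments in \eqref{eqn:localizationfunctor} (all four generators to $0$ except $i_-\mapsto 1$) are the right ones. As an alternative one could instead identify $\CC[\mathrm{All}^{-1}]$ with the fundamental groupoid of the classifying space $B\CC$, which deformation retracts onto a $4$-cycle and hence has fundamental group $\bbZ$; but the direct verification above is shorter and has the advantage of pinning down $L$ on the nose.
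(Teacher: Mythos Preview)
Your proof is correct and follows essentially the same approach as the paper: construct $\ovr F(1)$ as the RCE zig-zag, exhibit the natural isomorphism object by object, and verify full faithfulness of $L^\ast$ by observing that the three morphisms sent to $0$ force all components equal while the $i_-$-square gives $\bbZ$-equivariance. The only cosmetic difference is that the paper writes the natural isomorphism in the direction $F\Rightarrow \widetilde F\circ L$ rather than $\ovr F\circ L\Rightarrow F$, so your components $\theta_X$ are the inverses of theirs.
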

\begin{proof}
We have to confirm that the functor $L  : \CC\to \BB\bbZ$ satisfies
the properties of a localization functor, see e.g.\ \cite[Section 7.1]{KashiwaraSchapira}.
Because $\BB\bbZ$ is a groupoid, it is clear that 
$L$ sends every $\CC$-morphism to an isomorphism.
Let now $F : \CC\to \DD$ be any functor to a category $\DD$ that sends every $\CC$-morphism 
to an isomorphism in $\DD$. We have to construct a functor $\widetilde{F} : \BB\bbZ\to\DD$
and a natural isomorphism $\widetilde{F} L \cong F$. Let us define
$\widetilde{F} : \BB\bbZ \to \DD$ by $\widetilde{F}(\ast) := F(M)$
and, for $n\in\bbZ$, $\widetilde{F}(n):= \widetilde{F}(1)^n$ with
$\widetilde{F}(1) := F(i_-)\,F(j_-)^{-1}\,F(j_+)\,F(i_+)^{-1} : F(M)\to F(M)$.
Then a natural isomorphism $\eta : F\to \widetilde{F} L$ can be defined by the components
\begin{flalign}
\nn \eta_{M} := \id_{F(M)}\,:\, F(M)~&\longrightarrow~F(M)\quad,\\
\nn \eta_{M_+} := F(i_+)\,:\, F(M_+)~&\longrightarrow~F(M)\quad,\\
\nn \eta_{M_h} := F(i_+)\, F(j_+)^{-1}\,:\, F(M_h)~&\longrightarrow~F(M)\quad,\\
\eta_{M_-} := F(i_+)\,F(j_+)^{-1}\, F(j_-):\, F(M_-)~&\longrightarrow~F(M)\quad.
\end{flalign}
It remains to prove that the pullback functor
\begin{flalign}
\nn L^\ast\,:\, \Fun(\BB\bbZ,\DD)~&\longrightarrow~\Fun(\CC,\DD)\quad,\\
\nn (G : \BB\bbZ\to \DD) ~&\longmapsto~(G L : \CC\to \DD)\quad,\\
(\zeta : G\to H) ~&\longmapsto~(\zeta L : G L\to H L)
\end{flalign}
is fully faithful. Faithfulness is a consequence of $L$ being 
surjective on objects.
Concerning fullness, let $\kappa : GL\to HL$ be any natural transformation.
Naturality with respect to the three morphisms $i_+$, $j_-$ and $j_+$ implies that all components coincide,
i.e.\ $\kappa_{M} = \kappa_{M_-} = \kappa_{M_h} = \kappa_{M_+} : G(\ast) \to H(\ast)$,
and naturality with respect to $i_-$ implies that
$\zeta_\ast := \kappa_M : G(\ast)\to H(\ast)$ defines a natural transformation
$\zeta: G\to H$ between functors from $\BB\bbZ$ to $\DD$ such that
$\zeta L = \kappa$.
\end{proof}

In the present context, the adjunction from Proposition \ref{prop:localizationequivalence}
reduces to the adjunction
\begin{flalign}\label{eqn:LanSec2}
\xymatrix@C=3.5em{
L_! \,:\, \Fun(\CC,\Alg_\bbK)~\ar@<0.75ex>[r]&\ar@<0.75ex>[l]  ~\Fun(\BB\bbZ,\Alg_\bbK) \,:\, L^\ast
}\quad,
\end{flalign}
where the right adjoint $L^\ast$ is the pullback functor along $L: \CC\to\BB\bbZ$
and the left adjoint $L_!$ is the left Kan extension along $L: \CC\to\BB\bbZ$.
This adjunction restricts to an adjoint equivalence between $\Fun(\BB\bbZ,\Alg_\bbK)$ and
the full subcategory $\Fun(\CC,\Alg_\bbK)^{\mathrm{All}}\subseteq \Fun(\CC,\Alg_\bbK)$ 
of all functors that assign to every $\CC$-morphism an isomorphism.
The left Kan extension $L_!(\AAA) : \BB\bbZ\to\Alg_\bbK$ of any functor $\AAA : \CC\to\Alg_\bbK$
can be computed as the colimit
\begin{flalign}\label{eqn:L!AAA}
L_!(\AAA)\,:=\, \colim\Big(
\xymatrix{
L/\ast \ar[r]^-{\pi} & \CC \ar[r]^-{\AAA} & \Alg_\bbK
}\Big)\quad,
\end{flalign}
where $L/\ast$ denotes the comma category of $L : \CC\to\BB\bbZ$
over the single object $\ast\in\BB\bbZ$
and $\pi: L/\ast\to \CC$ is the forgetful functor.
More explicitly, the objects of the category $L/\ast$
are pairs $(n,N)\in\bbZ\times\CC$ consisting of an object $N\in\CC$ 
and a morphism $n : L(N)=\ast\to\ast$ in $\BB\bbZ$, and the
morphisms $(n,N)\to(n^\prime,N^\prime)$ are $\CC$-morphisms
$f:N\to N^\prime$ such that $n=L(f)+n^\prime$. Furthermore, 
the forgetful functor $\pi: L/\ast\to\CC\,,~(n,N)\mapsto N$ 
forgets the integers. Observe that
the functor $\pi: L/\ast\to \CC$ may be visualized as a kind of ``universal cover''
over the category $\CC$ given in \eqref{eqn:RCEcategory},
which takes the form of a spiral of zig-zags lying over $\CC$. 
Straightening this spiral to a line, $L/\ast$ may be displayed as
\begin{flalign}\label{eqn:spiral}
\xymatrix@C=1.5em{
\cdots ~~ (n-1,M) & \ar[l]_-{i_-}(n,M_-) \ar[r]^-{j_-}& (n,M_h) &\ar[l]_-{j_+} (n,M_+) \ar[r]^-{i_+}& (n,M) & \ar[l]_-{i_-}(n+1,M_-)~~ \cdots
}\quad.
\end{flalign}
The $\bbZ$-action on $L_!(\AAA)$ \eqref{eqn:L!AAA} is defined
through the universal property of colimits by
\begin{flalign}
\xymatrix@C=4em{
L_!(\AAA) \ar[r]^-{L_!(\AAA)(k)} &L_!(\AAA)\\
\ar[u]^-{\iota_{(n,N)}}\AAA(N)\ar[ur]_-{~~\iota_{(n+k,N)}}~&~
}
\end{flalign}
for all $k\in\bbZ$ and $(n,N)\in L/\ast$,
where the $\iota$'s denote the canonical morphisms into the colimit.
Note that this may be visualized by moving $k\in\bbZ$ levels 
up in the spiral \eqref{eqn:spiral}. This completes the description of
$L_!$ on objects. If $\zeta:\AAA \to\AAA^\prime$ is a morphism in 
$\Fun(\CC,\Alg_\bbK)$ (i.e., a natural transformation) 
then $L_!(\zeta):L_!(\AAA)\to L_!(\AAA^\prime)$ is a 
natural transformation with the single component $L_!(\zeta)_\ast$
determined by functoriality of the colimit \eqref{eqn:L!AAA}.
Concretely, $L_!(\zeta)_\ast\, \iota_{(n,N)} = \iota^\prime_{(n,N)}\, \zeta_N$,
for all $(n,N)\in L/\ast$, where $\iota^{\prime}_{(n,N)} : \AAA^\prime(N) \to L_!(\AAA^\prime)$
are the canonical morphisms for $\AAA^\prime$.
\sk

Given now any $\AAA \in \Fun(\CC,\Alg_\bbK)^{\mathrm{All}}$
that satisfies the time-slice axiom, we know that
\eqref{eqn:L!AAA} provides an equivalent description of this theory
because restricting \eqref{eqn:LanSec2} to $\Fun(\CC,\Alg_\bbK)^{\mathrm{All}}$
is an adjoint equivalence. From this perspective, the relative Cauchy evolution
is given (applying the analogs of~\eqref{eqn:RCEB} and~\eqref{eqn:rLoclocalized} 
to $\BBB=L_!(\AAA)$ and using the definition~\eqref{eqn:localizationfunctor}) 
as the action of the generator $1\in\bbZ$, i.e.\
\begin{flalign}\label{eqn:RCESec2L!}
L_!(\AAA)(1)\,:\,L_!(\AAA)~\longrightarrow~L_!(\AAA)\quad,
\end{flalign}
which may be visualized by moving $1$ level up in the spiral \eqref{eqn:spiral}.
The concept of relative Cauchy evolution for homotopy AQFTs 
that we develop in this paper is based on a higher categorical 
generalization of this particular perspective.
\sk

To conclude, let us briefly note that \eqref{eqn:RCESec2L!}
can also be related more directly to the ordinary relative Cauchy evolution 
in \eqref{eqn:RCEA}. Take any object of $L/\ast$, for instance $(0,M)\in L/\ast$ at level $0$.
Then the canonical morphism $\iota_{(0,M)}: \AAA(M)\to L_!(\AAA)$
is an isomorphism which intertwines
the two descriptions of the relative Cauchy evolution 
given in \eqref{eqn:RCEA} and \eqref{eqn:RCESec2L!}, i.e.\ the diagram
\begin{flalign}
\xymatrix@C=4em{
L_!(\AAA) \ar[r]^-{L_!(\AAA)(1)} & L_!(\AAA)\\
\ar[u]^-{\iota_{(0,M)}} \AAA(M)\ar[r]_-{\RCE_{(M,h)}} &\AAA(M)\ar[u]_-{\iota_{(0,M)}}
}
\end{flalign}
commutes.


\section{\label{sec:proposalhoRCE}Relative Cauchy evolution for homotopy AQFTs}
The aim of this section is to propose a general strategy
for how the relative Cauchy evolution may be defined for
homotopy AQFTs in the sense of \cite{BSWhomotopy}.
The latter are higher categorical refinements of traditional AQFTs
that are designed to capture the higher categorical structures
of gauge theories. Models constructed via the BRST/BV formalism 
\cite{FredenhagenRejzner,FredenhagenRejzner2}
define examples of homotopy AQFTs on $\ovr{\Loc}$. A very special class
of such examples, the linear quantum gauge theories, admit a particularly 
elegant and rigorous construction as homotopy AQFTs using
suitable techniques from homotopical algebra and derived geometry, see
\cite{LinearYM} and also \cite{BruinsmaSchenkel}.
Let us also refer the reader to
\cite{BSreview} for a less technical introduction to this subject.
\sk

We start by considering the relative Cauchy evolution
induced by a single pair $(M,h)$ which, as we have explained in
the previous section, is controlled by the subcategory $\CC\subseteq \Loc$
displayed in \eqref{eqn:RCEcategory}. Because this particular
$\CC$ has an empty orthogonality relation, the category of homotopy AQFTs 
on $\ovr{\CC}=(\CC,\emptyset)$ admits a simple description
\begin{flalign}\label{eqn:inftyAQFTC}
\AQFT_\infty(\ovr{\CC})\,\simeq\,\Fun(\CC,\dgAlg_\bbK)
\end{flalign}
in terms of functors from $\CC$ to the category $\dgAlg_{\bbK}$
of associative and unital dg-algebras over $\bbK$.
Under this equivalence, the model category structure on $\AQFT_\infty(\ovr{\CC})$ 
from \cite{BSWhomotopy} gets identified with the projective
model structure on the functor category.
More explicitly, a morphism $\zeta : \AAA\to \AAA^\prime$
in $\Fun(\CC,\dgAlg_\bbK)$, i.e.\ a natural transformation,
is a weak equivalence (respectively, a fibration)
iff each component $\zeta_{N}: \AAA(N)\to \AAA^\prime(N)$,
for $N\in\CC$, is a quasi-isomorphism between the underlying chain complexes 
(respectively, a degree-wise surjection).
The cofibrations are determined by the lifting properties in 
model categories, see e.g.\ \cite{Hovey}; they are the morphisms
with the left lifting property against all weak equivalences that are also fibrations.
\sk

As evidenced by the concrete examples constructed in \cite{LinearYM} and 
also in \cite{FredenhagenRejzner,FredenhagenRejzner2},
homotopy AQFTs satisfy a priori only a higher categorical relaxation 
of the time-slice axiom, which was called the {\em homotopy time-slice axiom}.
In the present context \eqref{eqn:inftyAQFTC}, 
this means that the functor $\AAA : \CC\to\dgAlg_\bbK$
assigns to every (necessarily Cauchy) morphism $f: N\to N^\prime$ in the category $\CC$ given
in \eqref{eqn:RCEcategory} 
a {\em weak equivalence} $\AAA(f) : \AAA(N)\to\AAA(N^\prime)$
in the model category $\dgAlg_\bbK$, i.e.\ a dg-algebra morphism
whose underlying chain map is a quasi-isomorphism. 
We denote by $\Fun(\CC,\dgAlg_\bbK)^{\mathrm{hoAll}}\subseteq \Fun(\CC,\dgAlg_\bbK)$
the full subcategory of such functors. (One could also introduce as in \cite{Carmona} 
a Bousfield localized model structure on $\Fun(\CC,\dgAlg_\bbK)$
in order to capture the homotopy time-slice axiom. This
provides in general a more refined structure,  which however 
is not necessary for the purpose of the present paper.) 
Given any homotopy AQFT $\AAA\in\Fun(\CC,\dgAlg_\bbK)^{\mathrm{hoAll}}$
satisfying the homotopy time-slice axiom, the traditional
construction of the RCE automorphism in \eqref{eqn:RCEA}
is obstructed by the fact that weak equivalences
$\AAA(f) : \AAA(N)\to\AAA(N^\prime)$ between dg-algebras 
do not in general admit strict inverses. While they do admit 
quasi-inverses in the form of $A_\infty$-quasi-isomorphisms 
\cite{LodayVallette}, these have the disadvantage of
being difficult to work with. 
\sk

Our strategy is therefore
to adapt the third of the equivalent perspectives on the ordinary relative 
Cauchy evolution from Section \ref{sec:ordinaryRCE} 
to the setting of homotopy AQFTs. The analog of the adjunction \eqref{eqn:LanSec2}
for homotopy AQFTs on $\ovr{\CC}$ is given by the Quillen adjunction
\begin{flalign}\label{eqn:LanSec3}
\xymatrix@C=3.5em{
L_! \,:\, \Fun(\CC,\dgAlg_\bbK)~\ar@<0.75ex>[r]&\ar@<0.75ex>[l]  ~\Fun(\BB\bbZ,\dgAlg_\bbK) \,:\, L^\ast
}\quad,
\end{flalign}
where $L^\ast$ is the pullback functor and $L_!$ the left Kan extension along
the localization functor $L:\CC\to \BB\bbZ$. Suppose for the moment that we could prove that
the restriction of the derived unit to 
$\Fun(\CC,\dgAlg_\bbK)^{\mathrm{hoAll}}\subseteq \Fun(\CC,\dgAlg_\bbK)$ 
is a natural weak equivalence. Then each homotopy AQFT
$\AAA\in \Fun(\CC,\dgAlg_\bbK)^{\mathrm{hoAll}}$
satisfying the homotopy time-slice axiom is weakly equivalent
to the object $L^\ast \bbL L_!(\AAA)\in \Fun(\CC,\dgAlg_\bbK) $, 
where $\bbL L_!$ denotes the left derived functor.
Observe that the theory $L^\ast \bbL L_!(\AAA)$ satisfies the strict time-slice axiom
because it lies in the image of the pullback functor $L^\ast$,
and hence its relative Cauchy evolution is simply given in complete analogy to \eqref{eqn:RCESec2L!}
by the action $\bbL L_!(\AAA)(1) : \bbL L_!(\AAA)\to \bbL L_!(\AAA)$ of the generator $1\in\bbZ$.
In particular, this is a strict $\bbZ$-action in terms of strict dg-algebra automorphisms,
in contrast to the homotopy coherent $\bbZ$-action in terms of $A_\infty$-quasi-automorphisms
that would arise by generalizing directly the traditional construction \eqref{eqn:RCEA}
to homotopy AQFTs. In other words, proving that the derived unit
of the Quillen adjunction \eqref{eqn:LanSec3} restricts
on $\Fun(\CC,\dgAlg_\bbK)^{\mathrm{hoAll}}\subseteq \Fun(\CC,\dgAlg_\bbK)$ 
to a natural weak equivalence would provide a rectification theorem
that allows us to strictify the homotopy time-slice axiom.
\sk

Proving such a result, and thereby obtaining
the simple description of the relative Cauchy evolution for homotopy AQFTs outlined above, 
is a highly technical task that we currently do not know how to complete in full generality.
In this work we present a concrete solution for a simplified version of the problem
that allows us to discuss the relative Cauchy evolution 
in the context of {\em linear} homotopy AQFTs \cite{LinearYM}.
The latter are functors $\AAA : \CC\to \dgAlg_\bbK$ that are given by a composition
\begin{flalign}
\AAA \,:\, \xymatrix@C=3em{
\CC \ar[r]^-{(\LLL,\tau)} ~&~ \PoCh_\bbK \ar[r]^-{\CCR} ~&~ \dgAlg_\bbK
}
\end{flalign}
of the canonical commutation relation (CCR) quantization functor $\CCR$ 
(in the context of chain complexes \cite{LinearYM})
and a functor $(\LLL,\tau)$ that assigns chain complexes of {\em linear observables} together
with their Poisson structure. More precisely, the category $\PoCh_\bbK$ of Poisson chain 
complexes is defined as follows: Objects are pairs $(V,\tau)$ consisting of a chain complex
$V\in\Ch_\bbK$ and a chain map $\tau : V\wedge V\to \bbK$ from the antisymmetrized tensor
product, while morphisms $f : (V,\tau)\to (W,\sigma)$ are chain maps $f: V\to W$ 
such that $\sigma \circ (f\wedge f) = \tau$.
Our strategy is to prove an analog of the desired rectification theorem
for linear observables, which leads to a very explicit and strict model for the relative 
Cauchy evolution for $\LLL$. This will be achieved in Section~\ref{sec:rectification}. Then, 
in Section~\ref{sec:Poisson}, we will endow this strictified model with a suitable
Poisson structure and lift our construction along the $\CCR$-functor to obtain
a relative Cauchy evolution for the linear homotopy AQFT $\AAA = \CCR(\LLL,\tau) $. 
We emphasize that this relative Cauchy evolution can be computed explicitly 
as we will do in Section~\ref{sec:Example} for linear Yang-Mills theories. 
\sk

To conclude this section, we would like to mention briefly that the main
idea behind our approach admits a potential generalization to
describe the whole coherent family $\{\RCE_{(M,h)}^{}\}$
of RCE automorphisms in the context of homotopy AQFT, 
in contrast to only a single RCE automorphism $\RCE_{(M,h)}$.
For this we consider the Quillen adjunction
\begin{flalign}\label{eqn:Ladjunction}
\xymatrix@C=3.5em{
L_! \,:\, \AQFT_\infty(\ovr{\Loc})~\ar@<0.75ex>[r]&\ar@<0.75ex>[l]  ~\AQFT_{\infty}(\ovr{\Loc}[W^{-1}]) \,:\, L^\ast
}
\end{flalign}
introduced in \cite{BSWhomotopy,BruinsmaSchenkel}, which relates
homotopy AQFTs on $\ovr{\Loc}$ to homotopy AQFTs on the orthogonal 
localization $\ovr{\Loc}[W^{-1}]$ at all Cauchy morphisms.
If one could prove that the derived unit of \eqref{eqn:Ladjunction} restricts 
on $\AQFT_\infty(\ovr{\Loc})^{\mathrm{ho}W}\subseteq  \AQFT_\infty(\ovr{\Loc})$
to a natural weak equivalence, then $L^\ast \bbL L_!(\AAA)$ would provide a strictified 
model for the whole coherent family $\{\RCE_{(M,h)}\}$ of RCE automorphisms for a
theory $\AAA\in \AQFT_\infty(\ovr{\Loc})^{\mathrm{ho}W}$.


\section{\label{sec:rectification}Rectification theorem for linear observables}
In this section we shall prove a rectification theorem 
for linear observables. The latter are modeled by functors $\LLL:\CC\to\Ch_\bbK$
from the category $\CC$ \eqref{eqn:RCEcategory} to the model category
of (possibly unbounded) chain complexes of vector spaces over $\bbK$.
In analogy to \eqref{eqn:LanSec3},
the localization functor \eqref{eqn:localizationfunctor} defines a Quillen adjunction
\begin{flalign}\label{eqn:linearadjunction}
\xymatrix@C=3.5em{
L_! \,:\, \Fun(\CC,\Ch_\bbK)~\ar@<0.75ex>[r]&\ar@<0.75ex>[l]  ~\Fun(\BB\bbZ,\Ch_\bbK) \,:\, L^\ast
}\quad,
\end{flalign}
where both sides are endowed with the projective model structure.
Again, $L^\ast$ is simply the pullback functor and $L_!$ the 
left Kan extension along the localization functor $L: \CC\to \BB\bbZ$.
The analog of the homotopy time-slice axiom for functors  $\LLL : \CC\to \Ch_\bbK$ 
assigning chain complexes of linear observables is given by the property that $\LLL$
assigns to every morphism in $\CC$ a weak equivalence (i.e.\ a quasi-isomorphism) in $\Ch_\bbK$. 
We denote by $\Fun(\CC,\Ch_\bbK)^{\mathrm{ho}\mathrm{All}}\subseteq \Fun(\CC,\Ch_\bbK)$
the full subcategory of such functors.
The rectification problem for linear observables on $\CC$ is then given
by the question of whether the derived unit of \eqref{eqn:linearadjunction} restricts 
on $\Fun(\CC,\Ch_\bbK)^{\mathrm{ho}\mathrm{All}}\subseteq \Fun(\CC,\Ch_\bbK)$
to a natural weak equivalence.
\sk

In order to prove our rectification theorem, we require
explicit models for the derived functors associated with the 
Quillen adjunction \eqref{eqn:linearadjunction} and also for their derived unit and counit.
Recall that the left and right derived functors, in the context of a Quillen 
adjunction between model categories, are compositions $\bbL L_!=L_! Q$, $\bbR L^\ast=L^\ast R$ 
where $Q$ and $R$ are cofibrant and fibrant replacement functors on $\Fun(\CC,\Ch_\bbK)$ 
and $\Fun(\BB\bbZ,\Ch_\bbK)$ respectively.
Because each object in $\Fun(\BB\bbZ,\Ch_\bbK)$ is fibrant in the projective model structure,
we may take $R=\id$, so the right derived functor is 
the ordinary right adjoint, $\bbR L^\ast := L^\ast$. Concerning the left derived functor
\begin{flalign}
\bbL L_!\,:\, \Fun(\CC,\Ch_\bbK) ~\longrightarrow~ \Fun(\BB\bbZ,\Ch_\bbK)\quad,
\end{flalign}
we shall use the bar resolution techniques developed by Fresse 
in \cite[Theorem 17.2.7 and Section 13.3]{Fresse}. 
Following the general construction outlined in Appendix \ref{app:bar},
we obtain that the action of the left derived functor on an object 
$X\in\Fun(\CC,\Ch_\bbK)$ is given by
\begin{flalign}\label{eqn:leftderivedfunctor}
\bbL L_!(X)\,:=\, \Tot(\widetilde{X})\,:= \,\Tot\big(\overline{B_{\Delta}}(\BB\bbZ,\CC,X)\big)\,\in\,
\Fun(\BB\bbZ,\Ch_\bbK)\quad,
\end{flalign}
where we abbreviate by $\widetilde{X}:= \overline{B_{\Delta}}(\BB\bbZ,\CC,X)\in\Fun(\BB\bbZ,\bCh_\bbK)$
the bicomplex-valued functor described in \eqref{eqn:bargeneral} and $\Tot$ denotes the
$\bigoplus$-totalization of bicomplexes (see also Appendix \ref{app:bimcomplexes}).
In the present case, the functor $\widetilde{X}$ admits the following very explicit description
as a bicomplex with a $\mathbb{Z}$-action:
The underlying bicomplex
\begin{subequations}\label{eqn:tildeXbicomplex}
\begin{flalign}
\widetilde{X} \,=\, \Big(
\xymatrix{
\widetilde{X}_{0,\bullet} ~&~\ar[l]_-{\delta}\widetilde{X}_{1,\bullet}
}\Big) \,\in\,\bCh_{\bbK}
\end{flalign}
is concentrated only in vertical degrees $0$ and $1$ because,
due to the specific form of the category $\CC$ in \eqref{eqn:RCEcategory},
there exist no composable $m$-tuples $(f_1,\dots,f_m)\in\mathrm{Mor}_m(\CC)$
of $\CC$-morphisms with each $f_i\neq \id$ in the case of $m>1$.
It is concretely given by direct sums of chain complexes
\begin{flalign}\label{eqn:tildeXbicomplex_dirsums}
\widetilde{X}_{0,\bullet} \,=\,\bigoplus_{n\in\bbZ}\bigoplus_{N\in\CC} X(N)_\bullet \quad,\qquad
\widetilde{X}_{1,\bullet}\,=\, \bigoplus_{n\in\bbZ} \bigoplus_{\mycom{f\in \mathrm{Mor}(\CC)}{f\neq \id}} X(\mathsf{s}f)_\bullet \quad,
\end{flalign}
where $\mathsf{s}f, \mathsf{t}f \in\CC$ denote the source/target 
of the morphism $(f: \mathsf{s}f \to \mathsf{t} f) \in\mathrm{Mor}(\CC)$. (Here, 
we have simplified the triple direct sum appearing in~\eqref{eqn:bargeneralmge1} in an obvious way.) 
Note that the direct sum in $\widetilde{X}_{0,\bullet}$ is over the objects of the spiral
\eqref{eqn:spiral} and that the direct sum in $\widetilde{X}_{1,\bullet}$ is over its  non-identity
morphisms. We shall use the notation $(n,N,x)\in \widetilde{X}_{0,\bullet}$
to denote the element $x\in X(N)$ in the summand indexed by $(n,N)$, and similarly
$(n,f,x)\in \widetilde{X}_{1,\bullet}$. The vertical differential $\delta$
is given by
\begin{flalign}
\delta\big(n,f,x\big)\,=\,(-1)^{\vert x\vert} \,\Big( \big(n+L(f) , \mathsf{s}f, x\big) - \big(n,\mathsf{t} f, X(f)x\big)\Big)\quad,
\end{flalign}
\end{subequations}
for all $(n,f,x)\in \widetilde{X}_{1,\bullet}$, 
where $\vert x\vert$ denotes the degree of $x\in X(\mathsf{s}f)$,
and the $\bbZ$-action on the bicomplex $\widetilde{X}$ is given
by addition, i.e.\ $\widetilde{X}(k) :\widetilde{X} \to \widetilde{X}$, 
for $k\in \bbZ$, maps  $(n,N,x)\mapsto(k+n,N,x)$ and $(n,f,x)\mapsto (k+n,f,x)$.
\sk

We still have to provide explicit models for the derived unit and 
counit of the Quillen adjunction \eqref{eqn:linearadjunction}.
The component at $Y\in \Fun(\BB\bbZ,\Ch_\bbK)$ of the derived counit is 
given by (see Appendix~\ref{app:bar}\footnote{To obtain \eqref{eqn:epsilontilde} and 
\eqref{eqn:etatilde} from Appendix~\ref{app:bar} one should keep in mind the way 
that the various multiple direct sums have been simplified to 
obtain~\eqref{eqn:tildeXbicomplex_dirsums} and~\eqref{eqn:Xdelta_dirsums}.})
\begin{flalign}
\nn \epsilon_Y\,:\,\bbL L_! (L^\ast Y)~&\longrightarrow~ Y\quad,\\
\nn (n,N,y)~&\longmapsto~Y(n)y\quad,\\
(n,f,y)~&\longmapsto~0\quad,\label{eqn:epsilontilde}
\end{flalign}
which clearly is $\bbZ$-equivariant, i.e.\ a
natural transformation between functors from $\BB\bbZ$ to $\Ch_\bbK$.
(Note that every complex $L^\ast Y(N)$ is a copy of $Y(\ast)$.)
\sk

The component at $X\in \Fun(\CC,\Ch_\bbK)$
of the derived unit, $\eta_{X} : Q(X)\to L^\ast \bbL L_! (X)$,
is a morphism whose source 
\begin{flalign}
Q(X) \,:=\,  \Tot(X^\Delta)\,:=\, \Tot\big(\overline{B_\Delta}(\CC,\CC,X)\big)\,\in\,\Fun(\CC,\Ch_\bbK) 
\end{flalign}
is a resolution of $X$ that is determined by the bar construction for $\id: \CC\to \CC$. 
Following again the general construction outlined in Appendix \ref{app:bar},
the object $X^\Delta := \overline{B_\Delta}(\CC,\CC,X)  \in \Fun(\CC,\bCh_\bbK)$ is given 
explicitly by, for all $N\in\CC$,
\begin{subequations}\label{eqn:Xdelta}
\begin{flalign}
X^\Delta(N) \,=\, \Big(
\xymatrix{
X^\Delta(N)_{0,\bullet} ~&~\ar[l]_-{\delta} X^\Delta(N)_{1,\bullet}
}\Big) \,\in\,\bCh_{\bbK}\quad,
\end{flalign}
with (simplifying the appropriate double and triple direct sums from~\eqref{eqn:bargeneral})
\begin{flalign}\label{eqn:Xdelta_dirsums}
X^\Delta(N)_{0,\bullet} \,=\, \bigoplus_{\mycom{g\in\mathrm{Mor}(\CC)}{\mathsf{t}g=N}} X(\mathsf{s}g)_\bullet \quad,\qquad
X^\Delta(N)_{1,\bullet} \,=\, \bigoplus_{\mycom{(g,f)\in \mathrm{Mor}_2(\CC)}{\mathsf{t}g=N\,,\,f\neq \id}}  X(\mathsf{s}f)_\bullet\quad,
\end{flalign}
where $ \mathrm{Mor}_2(\CC)$ denotes
the set of composable pairs of morphisms, and vertical differential
\begin{flalign}
\delta\big(g,f,x\big) \,=\, (-1)^{\vert x\vert}\,\Big(\big(gf,x\big) - \big(g, X(f)x\big)\Big)\quad,
\end{flalign}
\end{subequations}
for all $(g,f,x)\in X^\Delta(N)_{1,\bullet}$. 
(Note that since the nerve of $\CC$ is degenerate in degrees $\geq 2$, it follows that
each element $(g,f,x)\in X^\Delta(N)_{1,\bullet}$ is necessarily of the form $(g,f,x)=(\id_N,f,x)$.)
The functor structure on $X^\Delta$
is given by post-composition, i.e.\ for every morphism $h: N\to N^\prime$ in $\CC$, 
the $\bCh_\bbK$-morphism $X^\Delta(h): X^\Delta(N)\to X^\Delta(N^\prime)$
maps $(g,x)\mapsto (hg,x)$ and $(g,f,x)\mapsto (hg,f,x)$.
The component at $X\in \Fun(\CC,\Ch_\bbK)$ of the derived unit is then given by
the components
\begin{flalign} 
\nn \eta_{X,N} \,:\, Q(X)(N) ~&\longrightarrow~\bbL L_!(X)\quad,\\
\nn (g,x)~&\longmapsto ~ \big(L(g),\mathsf{s}g ,x\big)\quad,\\
(g,f,x)~&\longmapsto~\big(L(g),f,x\big)\quad, \label{eqn:etatilde}
\end{flalign}
for all $N\in\CC$, which determine a natural transformation between functors
from $\CC$ to $\Ch_\bbK$.
\begin{rem}\label{rem:qmap}
We would like to emphasize that $Q(X) = \Tot(X^\Delta)$ is indeed naturally weakly 
equivalent to $X\in \Fun(\CC,\Ch_\bbK)$ via the natural weak equivalence $q_X : Q(X)\to X$
determined by the components
\begin{flalign}
\nn q_{X,N}\,:\, Q(X)(N) ~&\longrightarrow~X(N)\quad,\\
\nn (g,x)~&\longmapsto ~ X(g)x\quad,\\
(g,f,x)~&\longmapsto ~ 0\quad, \label{eqn:qmap}
\end{flalign}
for all $N\in\CC$. See the standard argument in \cite[Lemma 13.3.3]{Fresse}. 
An explicit quasi-inverse
for the component $q_{X,N}: Q(X)(N)\to X(N)$ is given by the chain map
\begin{flalign}\label{eqn:qmapquasiinverse}
s_{X,N} \,:\, X(N)~\longrightarrow~Q(X)(N)~,~~x~\longmapsto~(\id_N, x)\quad,
\end{flalign}
which is of course a quasi-isomorphism too.
\end{rem}

With these preparations, we can now prove the main result of this section.
\begin{theo}\label{theo:rectification}
The derived counit of the Quillen adjunction \eqref{eqn:linearadjunction} is a natural weak equivalence.
The restriction of the derived unit to the full subcategory
$\Fun(\CC,\Ch_\bbK)^{\mathrm{ho}\mathrm{All}}\subseteq \Fun(\CC,\Ch_\bbK)$ of functors
that send every $\CC$-morphism to a quasi-isomorphism
is a natural weak equivalence too.
\end{theo}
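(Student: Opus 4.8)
The plan is to verify the two claims essentially by direct computation using the explicit models for $\bbL L_!$, $Q(X)$, and the derived unit/counit assembled in \eqref{eqn:tildeXbicomplex}--\eqref{eqn:etatilde}. For the derived counit $\epsilon_Y : \bbL L_!(L^\ast Y) \to Y$, I would observe that $\bbL L_!(L^\ast Y) = \Tot(\widetilde{L^\ast Y})$ is the totalization of a bicomplex concentrated in vertical degrees $0$ and $1$, and $\epsilon_Y$ sends $(n,N,y)\mapsto Y(n)y$, $(n,f,y)\mapsto 0$. Since every complex $L^\ast Y(N)$ is a copy of $Y(\ast)$, the bicomplex $\widetilde{L^\ast Y}$ has a very rigid shape: in vertical degree $0$ it is $\bigoplus_{n\in\bbZ}\bigoplus_{N\in\CC} Y(\ast)$, and the vertical differential $\delta$ identifies, up to the $\bbZ$-shift $L(f)$, the copies indexed by the source and target of each non-identity $\CC$-morphism. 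So I would either exhibit an explicit contracting homotopy for the mapping cone of $\epsilon_Y$, or argue that the spectral sequence of the bicomplex (filtering by vertical degree) degenerates after the $E_1$ page: taking vertical homology first collapses the $\bigoplus_{n,N}$ down to a single $\bigoplus_n Y(\ast)$ (one copy per level of the spiral \eqref{eqn:spiral}), on which the remaining structure maps isomorphically to $Y$ via $Y(n)$. The combinatorial input is just that $\CC$ (hence $L/\ast$, displayed as the spiral \eqref{eqn:spiral}) is a contractible category — a zig-zag of morphisms — so its nerve has trivial reduced homology.

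For the derived unit restricted to $\Fun(\CC,\Ch_\bbK)^{\mathrm{ho}\mathrm{All}}$, I would use the triangle identity together with Remark \ref{rem:qmap}. Since $q_X : Q(X)\to X$ is always a natural weak equivalence, it suffices to show that $L^\ast \bbL L_!(X)$ is weakly equivalent to $X$ via a map compatible with $\eta_X$; by two-out-of-three applied to the commuting triangle built from $\eta_X$, $q_X$, $s_{X,N}$ \eqref{eqn:qmapquasiinverse} and the counit, it is enough to show $L^\ast\epsilon_{\bbL L_!(X)}\circ L^\ast\bbL L_!(\eta_X)$ or, more directly, that the composite $X(N)\xrightarrow{s_{X,N}} Q(X)(N)\xrightarrow{\eta_{X,N}} L^\ast\bbL L_!(X)(N)$ is a quasi-isomorphism for each $N\in\CC$. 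Unwinding the formulas, this composite sends $x\in X(N)$ to the class of $(0,\id_N,x)$, i.e.\ to the class of $x$ in the summand of $\widetilde{X}_{0,\bullet}$ indexed by $(L(\id_N),N)=(0,N)$ of the spiral. So the content reduces to: for $X\in\Fun(\CC,\Ch_\bbK)^{\mathrm{ho}\mathrm{All}}$, the inclusion of any single summand $X(N)\hookrightarrow \widetilde{X}_{0,\bullet}$ at a chosen spiral vertex induces an isomorphism on homology after passing to $\Tot(\widetilde{X})$.

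To see that, I would again run the vertical spectral sequence of $\widetilde{X}$, but now for a general $X$ in $\Fun(\CC,\Ch_\bbK)^{\mathrm{ho}\mathrm{All}}$. The vertical differential $\delta$ has components of the form $(n,f,x)\mapsto \pm\big((n+L(f),\mathsf{s}f,x) - (n,\mathsf{t}f,X(f)x)\big)$; computing vertical homology amounts to computing the homology of the (reduced) simplicial replacement of $X$ over the spiral category $L/\ast$. Because each $X(f)$ is a quasi-isomorphism, one can use the telescope/mapping-cylinder argument: a zig-zag of quasi-isomorphisms between chain complexes has homotopy colimit quasi-isomorphic to any one of its terms, and $\Tot(\widetilde{X})$ is exactly a model for the homotopy colimit $\hocolim_{L/\ast}(X\circ\pi)$ — which, since $L/\ast$ is the contractible spiral \eqref{eqn:spiral}, is quasi-isomorphic to $X(N)$ at any vertex $(n,N)$. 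Concretely: the spiral is a bi-infinite zig-zag $\cdots \leftarrow \bullet \rightarrow \bullet \leftarrow \cdots$, and I would either split off a contractible acyclic subcomplex of $\widetilde{X}$ for each adjacent pair of vertices joined by an edge (deformation-retracting $\widetilde{X}_{1,\bullet}$-summands onto $\widetilde{X}_{0,\bullet}$-summands, using chosen homotopy inverses of the $X(f)$), iteratively collapsing the spiral to a single vertex, or invoke the standard fact that $\Tot$ of the bar resolution computes the homotopy colimit and that homotopy colimits over a category with contractible nerve of a diagram valued in quasi-isomorphisms are computed by any vertex.

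The main obstacle I anticipate is the bookkeeping in the last step: making the telescope argument precise for a \emph{bi-infinite} zig-zag with $\bigoplus$-totalization (as opposed to a finite one) requires care that the contracting homotopies one writes down are genuinely well-defined degree-wise-finite maps on the direct sum, and that the iterated collapse converges in the appropriate sense. I expect this is handled cleanly by noting that each individual summand only ever interacts with finitely many neighbours in the spiral, so the homotopy is locally finite; alternatively one can appeal directly to \cite[Section 13.3]{Fresse} or a standard reference that $\hocolim$ over a contractible small category of an objectwise-weak-equivalence diagram is weakly equivalent to any of its values, thereby sidestepping the explicit homotopies entirely. The other claims — $\bbZ$-equivariance of all the maps involved and naturality in $X$ — are routine and follow by inspection of the formulas \eqref{eqn:epsilontilde}, \eqref{eqn:etatilde}.
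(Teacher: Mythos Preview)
Your proposal is correct and takes essentially the same route as the paper. Two remarks. First, for the counit the paper writes down an explicit quasi-inverse $\kappa: y\mapsto (0,M,y)$ together with an explicit chain homotopy $\rho$ (built from the shortest zig-zag in the spiral from $(0,M)$ to each $(n,N)$) rather than running a spectral sequence; your spectral sequence alternative also works, but note a small slip---taking vertical homology collapses $\bigoplus_{n,N} Y$ to a \emph{single} copy of $Y$, not to $\bigoplus_n Y$, because the $i_-$-edges link consecutive levels of the spiral and all the maps $Y(L(f))$ are isomorphisms. Second, for the unit the paper carries out exactly the telescope computation you sketch, just concretely: after the same reduction to $\phi: x\mapsto (0,M,x)$, it splits $\mathrm{coker}(\phi) = F^{\mathrm{L}}\oplus F^{\mathrm{R}}$ along the two halves of the spiral, filters each half by zig-zag length, identifies the filtration quotients as mapping cones of the quasi-isomorphisms $X(f)$ (this is where the homotopy time-slice hypothesis enters), and invokes the convergence theorem \cite[Theorem~5.5.1]{Weibel} for the resulting bounded-below exhaustive filtration---which is precisely the bi-infinite bookkeeping you flagged as the main obstacle.
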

\begin{proof}
We first prove that, for every $Y\in\Fun(\BB\bbZ,\Ch_\bbK)$, 
the component $\epsilon_Y : \bbL L_! (L^\ast Y)\to Y$ of the 
derived counit \eqref{eqn:epsilontilde} is a weak equivalence. Our strategy is
to construct an explicit quasi-inverse of its underlying chain map.
We define the chain map 
\begin{flalign}
\kappa \,:\, Y~ \longrightarrow ~\bbL L_! (L^\ast Y)~,~~y~\longmapsto~(0,M,y)\quad,
\end{flalign}
where $M$ denotes the left object in the category $\CC$ in 
\eqref{eqn:RCEcategory} (i.e., the unperturbed spacetime),
which satisfies $\epsilon_Y\,\kappa=\id$. The other composition $\kappa\, \epsilon_Y$
is homotopic to the identity, i.e.\
\begin{flalign}\label{eqn:homotopytmpcounit}
\kappa\, \epsilon_Y-\id = \partial \rho\quad,
\end{flalign}
via a chain homotopy $\rho\in\hom\big(\bbL L_! (L^\ast Y),\bbL L_! (L^\ast Y)\big)_1$
that we shall now describe. Given any element
$(n,N,y)\in \bbL L_! (L^\ast Y)=\Tot(\widetilde{L^\ast Y})$ of vertical degree $0$, 
let us denote the shortest zig-zag in \eqref{eqn:spiral} from $(0,M)$ to $(n,N)$ by
\begin{subequations}\label{eqn:homotopytmpcounit2}
\begin{flalign}
\resizebox{.9 \textwidth}{!} 
{$
\xymatrix@C=1.5em{
(0,M) = (n_{-1},N_{-1}) &\ar[l]_-{f_0} (n_0,N_0) \ar[r]^-{f_1}& (n_1,N_1) &\ar[l]_-{f_2}   (n_2,N_2) \ar[r]^-{f_3} &\cdots &\ar@{<->}[l]_-{f_m} (n_m,N_m)= (n,N)
}
$}
\end{flalign}
and write 
\begin{flalign}
f_i \,:\, (n_i^{\mathsf{s}} , N_i^{\mathsf{s}})~\longrightarrow~ (n_i^{\mathsf{t}},N_i^{\mathsf{t}})
\end{flalign}
\end{subequations} 
for the $i$-th morphism.
We define the chain homotopy $\rho$ by
\begin{flalign}
\rho(n,N,y)\,=\, -(-1)^{\vert y\vert}\sum_{i=0}^m (-1)^i\,\big(n_i^{\mathsf{t}}, f_i, Y(n-n_i^{\mathsf{s}})y\big)\quad,\qquad
\rho(n,f,y)\,=\,0\quad,
\end{flalign}
for all $(n,N,y)\in \bbL L_! (L^\ast Y)$ of vertical degree $0$
and $(n,f,y)\in \bbL L_! (L^\ast Y)$ of vertical degree $1$.
It is straightforward to check that \eqref{eqn:homotopytmpcounit} holds true.
For elements $(n,N,y)\in \bbL L_! (L^\ast Y)$ of vertical degree $0$, we find that
\begin{flalign}
\nn \partial \rho (n,N,y) &= (\delta + \dd)\rho (n,N,y) + \rho (\delta+\dd)(n,N,y)
= \delta \rho (n,N,y)\\
\nn &= - \sum_{i=0}^m(-1)^i \, \Big(\big(n_i^{\mathsf{s}},N_i^{\mathsf{s}},Y(n-n_{i}^\mathsf{s}) y\big)
- \big(n_i^\mathsf{t}, N_i^{\mathsf{t}}, Y(n-n_i^{\mathsf{t}}) y\big) \Big)\\
&= \big(0,M,Y(n)y\big) - \big(n,N,y\big)\quad,
\end{flalign}
where the second equality holds because $Y(k)$ is a chain map, 
and the last step follows from the observation that the terms associated with the inner nodes in
\eqref{eqn:homotopytmpcounit2} cancel out, leaving only the two 
boundary terms associated with the ends of \eqref{eqn:homotopytmpcounit2}.
By a similar calculation, one checks that
\begin{flalign}
 \partial \rho (n,f,y)  = \rho\delta(n,f,y) = -(n,f,y)\quad,
\end{flalign}
for all elements $(n,f,y)\in \bbL L_! (L^\ast Y)$ of vertical degree $1$. This completes the proof
that all components $\epsilon_Y$ of the derived counit are weak equivalences.
\sk

Let us now prove that, for all $X\in \Fun(\CC,\Ch_\bbK)^{\mathrm{ho}\mathrm{All}}$,
the component $\eta_{X} : Q(X)\to L^\ast \bbL L_!(X)$ of the derived unit \eqref{eqn:etatilde}
is a weak equivalence in $\Fun(\CC,\Ch_\bbK)$. This amounts to proving that the underlying 
component chain maps $\eta_{X,N} : Q(X)(N) \to \bbL L_!(X)$
are quasi-isomorphisms, for all objects $N\in\CC$. Because
the category $\CC$ in \eqref{eqn:RCEcategory} is connected and
$Q(X)(f)$ is a quasi-isomorphism for each $\CC$-morphism $f$,
naturality of $\eta_X$ and the $2$-out-of-$3$ property
of weak equivalences implies that it suffices to prove
that one of the components of $\eta_X$ is a quasi-isomorphism.
We consider the left object $M$ in the category $\CC$ in \eqref{eqn:RCEcategory}
and shall prove the equivalent condition that the chain map
\begin{flalign}\label{eqn:phimapping}
\phi :=\, \eta_{X,M}\,s_{X,M} \,:\, X(M)~\longrightarrow~ \bbL L_!(X)~,~~x~\longmapsto ~(0,M,x)
\end{flalign}
obtained by pre-composing with the quasi-isomorphism \eqref{eqn:qmapquasiinverse}
is a quasi-isomorphism. Because $\phi$ is injective, we obtain a short exact sequence
\begin{flalign}\label{eqn:SES}
\xymatrix@C=1.5em{
0 \ar[r]~&~X(M)\ar[r]^-{\phi} ~&~ \bbL L_!(X)\ar[r] ~&~ \mathrm{coker}(\phi)\ar[r] ~&~ 0
}
\end{flalign}
in $\Ch_\bbK$. Our strategy is to prove, via a spectral sequence argument, 
that the homology of $\mathrm{coker}(\phi)$ is trivial, which implies via the long exact 
homology sequence associated to \eqref{eqn:SES}
that $\phi$ is a quasi-isomorphism. The cokernel of \eqref{eqn:phimapping} 
admits a direct sum decomposition
$\mathrm{coker}(\phi) = F^{\mathrm{L}} \oplus F^{\mathrm{R}}$,
where $F^{\mathrm{L}}\in\Ch_\bbK$ is the chain complex associated with 
the objects and morphisms in \eqref{eqn:spiral} that are to the left of $(0,M)$
and $F^{\mathrm{R}}\in\Ch_\bbK$ is associated with the objects and morphisms  in \eqref{eqn:spiral} 
that are to the right of $(0,M)$.  Explicitly, 
using the following convenient notation
\begin{flalign}
\xymatrix@C=1.5em{
\Big( (M,0)& \ar[l]_-{f_0^R} N^R_0
 \ar[r]^-{f_1^R} 
 & N^R_1
 & \ar[l]_-{f_2^R}
  \cdots \Big) 
 : = 
\Big( (M,0)&\ar[l]_-{i_-}
(1, M_- )  
 \ar[r]^-{j_-} 
 & (1, M_h )
 & \ar[l]_-{j_+}
 \cdots \Big) 
}
\end{flalign}
for the objects and morphisms to the right of $(0,M)$,  we can write
\begin{subequations}\label{eqn:FRcomplex}
\begin{flalign}
F^{\mathrm{R}}\,=\,\Tot\Big(\xymatrix{
\bigoplus\limits_{k=0}^\infty X(N^{\mathrm{R}}_k) ~&~\ar[l]_-{\delta^{\mathrm{R}}} \bigoplus\limits_{k=0}^\infty X(\mathsf{s}f^{\mathrm{R}}_k)
}\Big)\quad.
\end{flalign}
The vertical differential $\delta^{\mathrm{R}}$
on $(k,x) \in \bigoplus_{k=0}^\infty X(\mathsf{s}f^{\mathrm{R}}_k)$ is given by
\begin{flalign}
\delta^{\mathrm{R}}(k,x) \,=\,\begin{cases} 
(-1)^{\vert x\vert}\, (0,x) &~,~~\text{for $k=0$}\quad, \\
(-1)^{\vert x\vert}\,\big(\big(k,x\big) - \big(k-1, X(f_{k}^\mathrm{R}) x\big) \big) &~,~~\text{for $k\geq 2$ even}\quad,\\
(-1)^{\vert x\vert}\, \big(\big(k-1,x\big) - \big(k, X(f_k^{\mathrm{R}})x\big)\big) &~,~~\text{for $k$ odd}\quad,
\end{cases}
\end{flalign}
\end{subequations}
where the difference to \eqref{eqn:tildeXbicomplex} for $k=0$ is due to the fact that the object associated with 
$(0,M)$ is divided out in the cokernel.
The chain complex $F^{\mathrm{L}}$ is defined analogously by considering 
objects and morphisms that are to the left of $(0,M)$.
To prove that the homology of $F^{\mathrm{R}}$ is trivial, we consider 
the (bounded below and exhaustive) filtration
$0\subseteq F^{\mathrm{R}}_0 \subseteq F^{\mathrm{R}}_1\subseteq \cdots\subseteq F^{\mathrm{R}}_p\subseteq \cdots 
\subseteq F^{\mathrm{R}}$
given by restricting the direct sums in \eqref{eqn:FRcomplex} to $k\leq p$.
Observe that the quotients
\begin{flalign}
F^{\mathrm{R}}_p\big/F^{\mathrm{R}}_{p-1} \, =\,\begin{cases}
\mathrm{cone}\left(\id : X(N_p^{\mathrm{R}})\to X(N_p^{\mathrm{R}}) \right) &~,~~\text{for $p\geq 0$ even}\quad,\\
\mathrm{cone}\left(-X(f_p^\mathrm{R}) : X(N_{p-1}^{\mathrm{R}})\to X(N_p^{\mathrm{R}})\right) &~,~~\text{for $p\geq 0$ odd}\quad,
\end{cases} 
\end{flalign}
are mapping cone complexes associated with quasi-isomorphisms. (It is in this step where
we use that $X\in \Fun(\CC,\Ch_\bbK)^{\mathrm{ho}\mathrm{All}}$ satisfies the homotopy time-slice axiom.)
Using \cite[Corollary 1.5.4]{Weibel}, we obtain that 
$H_\bullet \big(F^{\mathrm{R}}_p/F^{\mathrm{R}}_{p-1}\big) =0$,
for all $p$, and the convergence theorem for spectral sequences in \cite[Theorem 5.5.1]{Weibel}
then implies that $H_\bullet(F^{\mathrm{R}})=0$. Repeating the same argument for $F^{\mathrm{L}}$
we obtain $H_\bullet(F^{\mathrm{L}})=0$, from which we conclude that $H_\bullet(\mathrm{coker}(\phi))=0$. 
This completes the proof.
\end{proof}

It is worthwhile to recall and explain in more detail the precise sense in which
the results of this section provide a strictification of the homotopy time-slice axiom
for linear observables and how they are useful for defining a concept of relative Cauchy evolution.
Suppose that $\LLL : \CC\to \Ch_\bbK$ is a functor that assigns the
chain complexes of linear observables of a linear homotopy AQFT. (Recall that the Poisson
structure $\tau : \LLL\wedge \LLL\to \bbK$ is neglected in the present 
section and will be considered later in Section \ref{sec:Poisson}.)
By definition, such $\LLL$ satisfies the homotopy time-slice axiom if and only
if it defines an object $\LLL \in \Fun(\CC,\Ch_\bbK)^{\mathrm{ho}\mathrm{All}}$.
Combining Remark \ref{rem:qmap} and Theorem \ref{theo:rectification},
we obtain a zig-zag of weak equivalences
\begin{flalign}\label{eqn:zigzagWE}
\xymatrix@C=4em{
\LLL ~&~ \ar[l]^-{\sim}_-{q_{\LLL}}Q(\LLL) \ar[r]_-{\sim}^-{\eta_\LLL}~&~ L^\ast \bbL L_! (\LLL) 
}
\end{flalign}
in the model category $\Fun(\CC,\Ch_\bbK)$. 
With this zig-zag we can replace our original $\LLL$, which satisfies
the homotopy time-slice axiom, by the equivalent object $L^\ast \bbL L_! (\LLL) $,
which satisfies the {\em strict time-slice axiom} because it arises via the pullback functor $L^\ast$.
From this equivalent perspective, the relative Cauchy evolution takes an extremely simple form
because it is implemented by the $\bbZ$-action of the generator $1\in\bbZ$. 
Explicitly, we obtain the RCE automorphism for linear observables
\begin{flalign}
\nn \rce_{(M,h)}^{}\,:=\, \bbL L_!(\LLL)(1) \,:\, \bbL L_!(\LLL)~&\longrightarrow~\bbL L_!(\LLL)\quad,\\
\nn (n,N,x)~&\longmapsto~(n+1,N,x)\quad,\\
(n,f,x)~&\longmapsto~(n+1,f,x)\quad,\label{eqn:smallRCEmap}
\end{flalign}
which we denote by small letters to distinguish it from the induced
RCE automorphism on CCR-algebras that we shall study in the next section.
Observe that, similarly to the reformulation in \eqref{eqn:RCESec2L!} of the traditional RCE, 
this RCE automorphism is given simply by adding $+1$
to the level $n\in\bbZ$ in the spiral \eqref{eqn:spiral} 
which enters our model for the left derived functor in
\eqref{eqn:leftderivedfunctor} and \eqref{eqn:tildeXbicomplex}.


\section{\label{sec:Poisson}Poisson structure and lift along the $\CCR$-functor}
Our constructions and results in Section \ref{sec:rectification} neglected the Poisson 
structure $\tau : \LLL\wedge \LLL\to \bbK$ on the functor $\LLL : \CC\to \Ch_\bbK$
that describes the assignment of the chain complexes of linear observables of a linear homotopy AQFT.
The aim of this section is to introduce a suitable Poisson structure
$\tau_\bbL : \bbL L_!(\LLL)\wedge \bbL L_!(\LLL)\to \bbK$ on our model
$\bbL L_!(\LLL) = \Tot(\widetilde{\LLL}): \BB\bbZ\to\Ch_\bbK$ for the left derived functor
given in \eqref{eqn:leftderivedfunctor} and \eqref{eqn:tildeXbicomplex}
that is compatible with the original Poisson structure in the following sense:
Using pullbacks along the zig-zag of weak equivalences in \eqref{eqn:zigzagWE},
we can induce two Poisson structures on $Q(\LLL):\CC\to\Ch_\bbK$, namely
\begin{subequations}\label{eqn:PBPoissonhomtopy}
\begin{flalign}
q_\LLL^\ast(\tau):= \tau\circ (q_\LLL\wedge q_\LLL)\,:\, Q(\LLL)\wedge Q(\LLL)~&\longrightarrow \bbK\quad,\\
\qquad \eta_\LLL^\ast (L^\ast\tau_\bbL):= (L^\ast \tau_\bbL)\circ (\eta_\LLL\wedge \eta_\LLL)\,:\,  Q(\LLL)\wedge Q(\LLL)~&\longrightarrow \bbK\quad,
\end{flalign}
where $\tau$ denotes the original Poisson structure on $\LLL:\CC\to\Ch_\bbK$ 
and $L^\ast\tau_\bbL$ the Poisson structure
induced by $\tau_\bbL$ on the pullback $L^\ast \bbL L_!(\LLL):= \bbL L_!(\LLL) \circ L :\CC\to \Ch_\bbK$ .
Our compatibility condition on $\tau_\bbL$
demands the existence of a natural chain homotopy 
$\rho\in \hom(Q(\LLL)\wedge Q(\LLL),\bbK)_1$ such that 
\begin{flalign}
\eta_\LLL^\ast (L^\ast \tau_\bbL) - q_\LLL^\ast(\tau) = \partial\rho\quad,
\end{flalign}
\end{subequations}
where $\partial$ denotes the differential on mapping complexes.
(Here, we recall that if $\kappa\in \hom(V,W)_n$, representing a collection 
of maps $\kappa_m:V_m\to W_{m+n}$ for all $m\in\bbZ$, 
then $\partial\kappa\in\hom(V,W)_{n-1}$ is given by 
$(\partial\kappa)_m =\dd^W\circ\kappa_m - (-1)^n\,\kappa_{m-1}\circ \dd^V$.)
Using  previous results on homotopical properties of the 
$\CCR$-functor from \cite{BruinsmaSchenkel,LinearYM}, it then follows that there exists
a zig-zag of weak equivalences between $\CCR(\LLL,\tau)$
and $\CCR(L^\ast \bbL L_!(\LLL),L^\ast \tau_\bbL)$,
which provides an explicit strictification construction
of the homotopy time-slice axiom of linear homotopy AQFTs
and hence allows us to study their relative Cauchy evolution.
\sk

In order to introduce a Poisson structure $\tau_\bbL$ 
on $\bbL L_!(\LLL)$, we make use of the pictorial interpretation 
from \eqref{eqn:spiral}. Let us also choose, for each 
quasi-isomorphism $\LLL(f) : \LLL(N)\to\LLL(N^\prime)$,
a quasi-inverse chain map $\LLL(f)^{-1}: \LLL(N^\prime)\to \LLL(N)$,
two chain homotopies $\lambda_f\in \hom(\LLL(N^\prime),\LLL(N^\prime))_1$ 
and $\gamma_f\in\hom(\LLL(N),\LLL(N))_1 $, and a chain $2$-homotopy
$\xi_f \in \hom(\LLL(N),\LLL(N^\prime))_2$, such that
\begin{subequations}\label{eqn:homotopy4quasiinverse}
\begin{flalign}
\LLL(f)\,\LLL(f)^{-1} -\id &= \partial \lambda_{f}\quad,\\
\LLL(f)^{-1}\,\LLL(f) - \id &= \partial \gamma_f\quad,\\
\LLL(f)\,\gamma_f - \lambda_f\,\LLL(f) &= \partial \xi_f\quad.\label{eqn:homotopy4quasiinverse3}
\end{flalign}
\end{subequations}
For the case of $f=\id_N$ being an identity morphism, we choose
$\LLL(\id_N)^{-1}=\id_{\LLL(N)}$ the identity, 
together with $\lambda_{\id_N}=0$, $\gamma_{\id_N}=0$ and $\xi_{\id_N} =0$.
\begin{rem}
The homotopy coherence data in \eqref{eqn:homotopy4quasiinverse} always exists by the following argument:
The category $\Ch_\bbK$ may be enriched 
to a strict $2$-category whose $2$-morphisms are homotopy classes of chain homotopies, i.e.\
equivalence classes $[\theta]$ of elements $\theta \in \hom(V,W)_1$ modulo elements
of the form $\partial \ell$ for all $\ell\in\hom(V,W)_2$. Because $\bbK$ has characteristic $0$, 
quasi-isomorphisms of chain complexes are precisely the equivalences in this $2$-category.
Using the standard result in $2$-category theory that each equivalence  in a $2$-category 
can be improved to an adjoint equivalence, see e.g.\ \cite[Exercise 2.2]{Lack}, we obtain for each
quasi-isomorphism $\LLL(f) : \LLL(N)\to\LLL(N^\prime)$
the data in \eqref{eqn:homotopy4quasiinverse}, where
$\LLL(f)^{-1}$ plays the role of the right adjoint, $\gamma_f$ the role 
of the adjunction unit and $-\lambda_f$ the role of the adjunction counit.
The last identity \eqref{eqn:homotopy4quasiinverse3} states one of 
the two triangle identities for this adjunction, while the other
one reads as
\begin{flalign}\label{eqn:homotopy4quasiinverse4}
\gamma_f\, \LLL(f)^{-1}- \LLL(f)^{-1} \lambda_f = \partial \widetilde{\xi}_f\quad,
\end{flalign}
for some chain $2$-homotopy
$\widetilde{\xi}_f \in \hom(\LLL(N^\prime),\LLL(N))_2$. We did not include
this additional coherence condition \eqref{eqn:homotopy4quasiinverse4}
in the main text because it is not needed for our construction.
We already would like to emphasize at this point that for concrete
examples, such as the linear Yang-Mills model \cite{LinearYM}, 
the homotopy coherence data in \eqref{eqn:homotopy4quasiinverse}
may be constructed from Green operators and a partition of unity.
See Section \ref{sec:Example} for the details. In general, the $2$-categorical argument
automates the explicit but tedious linear algebra that 
constructs quasi-inverses and homotopies, based on choices of 
linear projections onto $\ker \dd_i$ within $X_i$ and onto $H_i(X)$ within $\ker \dd_i$.
\end{rem}

Now consider the diagram formed by applying $\LLL$ to the spiral in \eqref{eqn:spiral}. 
Given any two $(n,N),(n^\prime,N^\prime)\in \bbZ\times\CC$, we note that 
there exists a unique chain of zig-zags 
of minimal length between them. Using the chosen quasi-inverses $\LLL(f)^{-1}$ to reverse arrows as 
needed, we obtain {\em zig-zagging chain maps}
\begin{flalign}\label{eqn:zigzaggingmaps}
Z_{(n^\prime,N^\prime)}^{(n,N)} \,:\, \LLL(N^\prime)~\longrightarrow~\LLL(N) 
\end{flalign}
through the shortest zig-zag from $(n^\prime,N^\prime)$ to $(n,N)$.
For example, the zig-zagging chain map from $(n,M_-)$ to $(n,M)$ is 
given by
\begin{subequations}
\begin{flalign}
Z_{(n,M_-)}^{(n,M)}\,=\,\LLL(i_+)\,\LLL(j_+)^{-1}\,\LLL(j_-)\,:\, \LLL(M_-)~\longrightarrow~\LLL(M)
\end{flalign}
and the one from $(n+1,M_-)$ to $(n,M_h)$ is given by
\begin{flalign}
Z_{(n+1,M_-)}^{(n,M_h)}\,=\, \LLL(j_+)\, \LLL(i_+)^{-1}\,\LLL(i_-)\,:\, \LLL(M_-)~\longrightarrow~\LLL(M_h)\quad.
\end{flalign}
\end{subequations}
By definition, the zig-zagging chain maps \eqref{eqn:zigzaggingmaps} are invariant
under the $\bbZ$-action, i.e.\
\begin{flalign}\label{eqn:translationinvariancezigzagging}
Z^{(n+k,N)}_{(n^\prime+k,N^\prime)} = Z^{(n,N)}_{(n^\prime,N^\prime)}\quad,
\end{flalign}
for all $k\in\bbZ$ and $(n,N),(n^\prime,N^\prime)\in\bbZ\times\CC$.
\sk

In Appendix \ref{app:zigzag}, we 
construct, for every $(n,f)\in \bbZ\times\mathrm{Mor}(\CC)$ and $(n^\prime,N^\prime)\in\bbZ\times\CC$, 
from the chain homotopy $\lambda_f$ a {\em zig-zagging homotopy} (for left composition)
\begin{subequations}\label{eqn:zigzagginghomotopyh}
\begin{flalign}
\Lambda^{(n,f)}_{(n^\prime,N^\prime)} \,\in\, \hom\big(\LLL(N^\prime) , \LLL(\mathsf{t}f)\big)_1\quad
\end{flalign}
such that
\begin{flalign}
\LLL(f)\, Z^{(n+L(f),\mathsf{s}f)}_{(n^\prime,N^\prime)} - Z^{(n,\mathsf{t}f)}_{(n^\prime,N^\prime)} \,=\,\partial \Lambda^{(n,f)}_{(n^\prime,N^\prime)}\quad.
\end{flalign}
\end{subequations}
Similarly, we construct, for every $(n,N)\in \bbZ\times\CC$ 
and $(n^\prime,f^\prime)\in\bbZ\times\mathrm{Mor}(\CC)$,
from the chain homotopy $\gamma_f$ a {\em zig-zagging homotopy} (for right composition)
\begin{subequations}\label{eqn:zigzagginghomotopyk}
\begin{flalign}
\Gamma^{(n,N)}_{(n^\prime,f^\prime)} \,\in\, \hom\big(\LLL(\mathsf{s}f^\prime) , \LLL(N)\big)_1
\end{flalign}
such that
\begin{flalign}
Z^{(n,N)}_{(n^\prime,\mathsf{t} f^\prime)}\, \LLL(f^\prime) - Z^{(n,N)}_{(n^\prime+L(f^\prime),\mathsf{s} f^\prime)} 
\,=\,\partial \Gamma^{(n,N)}_{(n^\prime,f^\prime)}\quad.
\end{flalign}
\end{subequations}
Finally, using also the chain $2$-homotopies $\xi_f$, 
the construction in Appendix \ref{app:zigzag} defines,
for every $(n,f),(n^\prime,f^\prime)\in \bbZ\times\mathrm{Mor}(\CC)$,
a {\em zig-zagging $2$-homotopy}
\begin{subequations}\label{eqn:zigzagging2homotopy}
\begin{flalign}
\Xi^{(n,f)}_{(n^\prime,f^\prime)} \,\in\, \hom\big(\LLL(\mathsf{s} f^\prime) , \LLL(\mathsf{t}f)\big)_2
\end{flalign}
such that
\begin{flalign}
\LLL(f)\, \Gamma_{(n^\prime,f^\prime)}^{(n+L(f),\mathsf{s}f)} - \Gamma_{(n^\prime,f^\prime)}^{(n,\mathsf{t}f)} +
\Lambda_{(n^\prime+L(f^\prime),\mathsf{s}f^\prime)}^{(n,f)} - \Lambda^{(n,f)}_{(n^\prime,\mathsf{t}f^\prime)}\,\LLL(f^\prime)\,=\, \partial \Xi^{(n,f)}_{(n^\prime,f^\prime)}\quad.
\end{flalign}
\end{subequations}
The role of this homotopy coherence data is to relate the shortest zig-zags in \eqref{eqn:spiral}
to longer zig-zags that are obtained by additional compositions with morphisms and their quasi-inverses.
Note that this is just part of a higher tower of homotopy coherence data for relating
arbitrary zig-zags to the shortest ones, which however will not be required in our work.
\sk

With these preparations, we can now define a Poisson structure
$\tau_{\bbL}:\bbL L_!(\LLL)\wedge \bbL L_!(\LLL)\to\bbK$ on $\bbL L_!(\LLL)=\Tot(\widetilde{\LLL})$.
Recall that $\tau_{\bbL}$ should be a chain map, therefore obeying
$\tau_{\bbL}\circ \dd^{\Tot(\widetilde{\LLL}\wedge \widetilde{\LLL})}=\dd^{\mathbb{K}}\circ \tau_{\bbL}=0$.
Using a direct sum decomposition into the vertical degrees of
$\widetilde{\LLL}$, such a Poisson structure is given by a family of $n$-chains
\begin{subequations}\label{eqn:bicomplexPoisson}
\begin{flalign}
\Big\{\tau_{\bbL,n}\in \hom\big((\widetilde{\LLL}\wedge\widetilde{\LLL})_{n,\bullet},\bbK\big)_n\Big\}_{n\in\bbZ}
\end{flalign}
that are invariant under the $\bbZ$-action on $\widetilde{\LLL}$ and
(as $\dd^{\Tot(\widetilde{\LLL}\wedge \widetilde{\LLL})}= 
\delta^{\widetilde{\LLL}\wedge\widetilde{\LLL}} + \dd^{\widetilde{\LLL}\wedge \widetilde{\LLL}}$)
satisfy the conditions
\begin{flalign}
\tau_{\bbL,n}\circ\delta^{\widetilde{\LLL}\wedge\widetilde{\LLL}} + \tau_{\bbL,n+1}\circ \dd^{\widetilde{\LLL}\wedge\widetilde{\LLL}}\,=\,0\quad,
\end{flalign}
\end{subequations}
for all $n\in\bbZ$. Because $\widetilde{\LLL}\in \bCh_{\bbK} $ is concentrated 
in vertical degrees $0$ and $1$ (see \eqref{eqn:tildeXbicomplex}) 
and hence $\widetilde{\LLL}\wedge \widetilde{\LLL}$ is concentrated in degrees $0$, $1$ and $2$,
it follows that
\begin{subequations}\label{eqn:taubbL}
\begin{flalign}
\tau_{\bbL,n}=0~~,\quad\text{for all }n\not\in\{0,1,2\}\quad.
\end{flalign}
We define the degree $0$ component by
\begin{flalign}
\tau_{\bbL,0}\Big((n,N,x)\otimes (n^\prime,N^\prime,x^\prime)  \Big) = \tfrac{1}{2}\Big(\tau_{N}\Big(x\otimes Z_{(n^\prime,N^\prime)}^{(n,N)} x^\prime \Big) + \tau_{N^\prime}\Big(Z_{(n,N)}^{(n^\prime,N^\prime)}x \otimes x^\prime\Big)\Big)\quad,
\end{flalign}
for all $(n,N,x),(n^\prime,N^\prime,x^\prime)\in\widetilde{\LLL}_{0,\bullet}$,
where $\tau_N$ and $\tau_{N^{\prime}}$ denote the components of 
the given Poisson structure $\tau: \LLL\otimes \LLL\to \bbK$. The degree
$1$ component is defined by
\begin{flalign}
\resizebox{.9 \textwidth}{!} 
{$
\tau_{\bbL,1}\Big( (n,f,x)\otimes (n^\prime, N^\prime,x^\prime)\Big)=
\tfrac{1}{2}\Big(\tau_{\mathsf{t} f} \Big(\LLL(f) x \otimes \Lambda^{(n,f)}_{(n^\prime,N^\prime)}x^\prime\Big)
+ (-1)^{\vert x^\prime\vert} \,\tau_{N^\prime}\Big( \Gamma^{(n^\prime,N^\prime)}_{(n,f)} x\otimes x^\prime\Big)\Big)~~,
$}
\end{flalign}
for all $(n,f,x)\in \widetilde{\LLL}_{1,\bullet}$ and 
$(n^\prime,N^\prime,x^\prime)\in\widetilde{\LLL}_{0,\bullet}$,
and
\begin{flalign}
\resizebox{.9 \textwidth}{!} 
{$
\tau_{\bbL,1}\Big( (n,N,x)\otimes (n^\prime, f^\prime,x^\prime)\Big) = \tfrac{(-1)^{\vert x\vert}}{2} \Big(
\tau_N\Big(x\otimes \Gamma^{(n,N)}_{(n^\prime,f^\prime)}x^\prime\Big)
+(-1)^{\vert x^\prime\vert} \,
\tau_{\mathsf{t}f^\prime}\Big(\Lambda^{(n^\prime,f^\prime)}_{(n,N)}x\otimes \LLL(f^\prime)x^\prime\Big)
\Big)~~,
$}
\end{flalign}
for all $(n,N,x)\in\widetilde{\LLL}_{0,\bullet} $ and 
$(n^\prime,f^\prime,x^\prime)\in\widetilde{\LLL}_{1,\bullet}$.
Finally, the degree $2$ component is defined by
\begin{flalign}
\resizebox{.9 \textwidth}{!} 
{$
\tau_{\bbL,2}\Big((n,f,x)\otimes(n^\prime,f^\prime,x^\prime)\Big) =\tfrac{(-1)^{\vert x\vert +1}}{2}\,\Big(
\tau_{\mathsf{t}f}\Big(\LLL(f)x\otimes \Xi_{(n^\prime,f^\prime)}^{(n,f)} x^\prime\Big) -
\tau_{\mathsf{t}f^\prime}\Big(\Xi_{(n,f)}^{(n^\prime,f^\prime)}x\otimes \LLL(f^\prime)x^\prime\Big)
\Big)~~,
$}
\end{flalign}
\end{subequations}
for all $(n,f,x),(n^\prime,f^\prime,x^\prime)\in \widetilde{\LLL}_{1,\bullet}$.
\begin{propo}\label{prop:homotopyPoisson}
The family $\{\tau_{\bbL,n} \in 
\hom((\widetilde{\LLL}\wedge\widetilde{\LLL})_{n,\bullet} , \bbK)_n \}_{n\in\bbZ}$ 
given by \eqref{eqn:taubbL} defines a Poisson structure on $\bbL L_!(\LLL) = \Tot(\widetilde{\LLL})
: \BB\bbZ\to \Ch_{\bbK}$.
\end{propo}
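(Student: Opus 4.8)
The plan is to verify directly the three properties that, by the reduction recalled just before the statement, are equivalent to $\tau_{\bbL}$ being a Poisson structure on $\Tot(\widetilde{\LLL})\in\Fun(\BB\bbZ,\Ch_\bbK)$: (i) each $\tau_{\bbL,n}$, though written on $\otimes$ in \eqref{eqn:taubbL}, descends to a well-defined map out of the antisymmetrised slice $(\widetilde{\LLL}\wedge\widetilde{\LLL})_{n,\bullet}$; (ii) the family is invariant under the $\bbZ$-action on $\widetilde{\LLL}$; (iii) the closedness conditions \eqref{eqn:bicomplexPoisson} hold. I would dispose of (i) and (ii) first. For (ii), the invariance is immediate from the translation invariance \eqref{eqn:translationinvariancezigzagging} of the zig-zagging chain maps together with the analogous translation invariance of $\Lambda$, $\Gamma$ and $\Xi$ established in Appendix \ref{app:zigzag}, since every formula in \eqref{eqn:taubbL} refers only to these data and to $\tau$. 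For (i), I would check that exchanging the two tensor arguments reproduces each formula up to the Koszul sign dictated by the total degrees in $\Tot(\widetilde{\LLL})$; this uses only the graded antisymmetry of the original $\tau$, and is exactly what the symmetrisations $\tfrac12(\cdots+\cdots)$ in $\tau_{\bbL,0}$ and $\tau_{\bbL,1}$, and the relative minus sign between the two $\Xi$-terms in $\tau_{\bbL,2}$, are built for. (In vertical degree $1$, descent is the statement that the two displayed formulas for $\tau_{\bbL,1}$ agree, up to sign, under swapping the degree-$0$ and degree-$1$ arguments.)

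The substance of the proposition is (iii). Since $\widetilde{\LLL}$ is concentrated in vertical degrees $0$ and $1$, the wedge $\widetilde{\LLL}\wedge\widetilde{\LLL}$ sits in vertical degrees $0,1,2$ and $\tau_{\bbL,n}=0$ for $n\not\in\{0,1,2\}$, so only the instances $n=-1,0,1$ of \eqref{eqn:bicomplexPoisson} carry content. The case $n=-1$ reads $\tau_{\bbL,0}\circ\dd^{\widetilde{\LLL}\wedge\widetilde{\LLL}}=0$ on vertical degree $0$, and follows at once from $\tau$ being a chain map and the zig-zagging maps $Z$ being chain maps. The cases $n=0$ and $n=1$ I would prove by evaluating both summands of \eqref{eqn:bicomplexPoisson} on generating tensors. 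For $n=0$, on a generator $(n,f,x)\otimes(n',N',x')$ of $(\widetilde{\LLL}\wedge\widetilde{\LLL})_{1,\bullet}$ (by (i), the opposite-type generator need not be treated separately), the term $\tau_{\bbL,1}\circ\dd^{\widetilde{\LLL}\wedge\widetilde{\LLL}}$ produces $\tau\big(\LLL(f)x\otimes\partial\Lambda^{(n,f)}_{(n',N')}x'\big)$ and $\tau\big(\partial\Gamma^{(n',N')}_{(n,f)}x\otimes x'\big)$, which by the defining relations \eqref{eqn:zigzagginghomotopyh} and \eqref{eqn:zigzagginghomotopyk} become $\tau$ evaluated against $\LLL(f)\,Z^{(n+L(f),\mathsf{s}f)}_{(n',N')}-Z^{(n,\mathsf{t}f)}_{(n',N')}$ and against $Z^{(n',N')}_{(n,\mathsf{t}f)}\LLL(f)-Z^{(n',N')}_{(n+L(f),\mathsf{s}f)}$; these are precisely the four terms produced by $\tau_{\bbL,0}\circ\delta^{\widetilde{\LLL}\wedge\widetilde{\LLL}}$ once $\delta(n,f,x)=(-1)^{\vert x\vert}\big((n+L(f),\mathsf{s}f,x)-(n,\mathsf{t}f,\LLL(f)x)\big)$ is inserted, and one checks they cancel in pairs. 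The case $n=1$ is handled the same way: on a generator $(n,f,x)\otimes(n',f',x')$ of $(\widetilde{\LLL}\wedge\widetilde{\LLL})_{2,\bullet}$, the term $\tau_{\bbL,2}\circ\dd^{\widetilde{\LLL}\wedge\widetilde{\LLL}}$ feeds $\partial\Xi^{(n,f)}_{(n',f')}$ into $\tau$, which \eqref{eqn:zigzagging2homotopy} rewrites as $\LLL(f)\Gamma^{(n+L(f),\mathsf{s}f)}_{(n',f')}-\Gamma^{(n,\mathsf{t}f)}_{(n',f')}+\Lambda^{(n,f)}_{(n'+L(f'),\mathsf{s}f')}-\Lambda^{(n,f)}_{(n',\mathsf{t}f')}\LLL(f')$, exactly matching the output of $\tau_{\bbL,1}\circ\delta^{\widetilde{\LLL}\wedge\widetilde{\LLL}}$ after $\delta$ is applied to both tensor factors.

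The hard part is entirely the sign bookkeeping. The homotopies $\Lambda$, $\Gamma$ carry degree $+1$ and $\Xi$ degree $+2$, so each time a differential passes one of them — inside $\partial\Lambda$, $\partial\Gamma$, $\partial\Xi$, and whenever $\dd^{\widetilde{\LLL}\wedge\widetilde{\LLL}}$ hits the second tensor factor — a Koszul sign appears, and these must be reconciled with the explicit signs $(-1)^{\vert x\vert}$, $(-1)^{\vert x'\vert}$ in the formulas \eqref{eqn:taubbL} and in $\delta$ from \eqref{eqn:tildeXbicomplex}, as well as with the sign convention for $\partial$ on mapping complexes recorded after \eqref{eqn:PBPoissonhomtopy}. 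I would therefore organise the computation by fixing, for each of the two non-trivial bidegrees, an explicit finite list of generator types, expanding both $\tau_{\bbL,n}\circ\delta^{\widetilde{\LLL}\wedge\widetilde{\LLL}}$ and $\tau_{\bbL,n+1}\circ\dd^{\widetilde{\LLL}\wedge\widetilde{\LLL}}$ in full, grouping the resulting terms by which zig-zagging datum they involve, and then invoking \eqref{eqn:zigzagginghomotopyh}--\eqref{eqn:zigzagging2homotopy} together with graded antisymmetry of $\tau$ to read off the cancellations, using translation invariance of the zig-zagging data throughout to line up indices. Since $\CC$ has only finitely many morphisms, each bidegree produces only finitely many generator types and the verification, though lengthy, terminates.
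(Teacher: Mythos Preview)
Your proposal is correct and follows essentially the same route as the paper: verify $\bbZ$-invariance via the translation invariance of $Z$, $\Lambda$, $\Gamma$, $\Xi$; reduce the tower \eqref{eqn:bicomplexPoisson} to the three conditions in vertical degrees $0,1,2$; and check these using, respectively, that $\tau$ and $Z$ are chain maps, the defining relations \eqref{eqn:zigzagginghomotopyh}--\eqref{eqn:zigzagginghomotopyk}, and the $2$-homotopy relation \eqref{eqn:zigzagging2homotopy}. The paper compresses your point (i) into a parenthetical remark that the $\tau_{\bbL,n}$ are built by antisymmetrisation, so it suffices to track only one of the two summands in each formula of \eqref{eqn:taubbL}; you may wish to adopt that shortcut to halve the sign bookkeeping.
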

\begin{proof}
Invariance under the $\bbZ$-action follows directly from \eqref{eqn:translationinvariancezigzagging}
and the explicit formulas for the $\Lambda$, $\Gamma$ and $\Xi$ 
given in \eqref{eqn:Hhformula}, \eqref{eqn:Hkformula}
and \eqref{eqn:Kformula}.
The tower of conditions \eqref{eqn:bicomplexPoisson} reduces in the present case to
\begin{flalign}\label{eqn:tmpconditions}
\tau_{\bbL,0}\circ \dd^{\widetilde{\LLL}\wedge\widetilde{\LLL}}=0~~, \quad
\tau_{\bbL,0} \circ \delta^{\widetilde{\LLL}\wedge\widetilde{\LLL}} + \tau_{\bbL,1}\circ  \dd^{\widetilde{\LLL}\wedge\widetilde{\LLL}}=0~~,\quad
\tau_{\bbL,1}\circ \delta^{\widetilde{\LLL}\wedge\widetilde{\LLL}} + \tau_{\bbL,2}\circ  \dd^{\widetilde{\LLL}\wedge\widetilde{\LLL}}=0\quad.
\end{flalign}
The first condition is immediate because $\tau_{\bbL,0}$ given in \eqref{eqn:taubbL} is clearly a chain map. 
By slightly lengthy but straightforward calculations, the second condition 
follows from \eqref{eqn:zigzagginghomotopyh} 
and \eqref{eqn:zigzagginghomotopyk}, and the third condition follows from \eqref{eqn:zigzagging2homotopy}.
({\em Note:} Because the $\tau_{\bbL,n}$ are obtained by antisymmetrization with respect to the total degrees
and the conditions \eqref{eqn:tmpconditions} are invariant under the symmetric braiding in $\bCh_\bbK$,
it is sufficient to consider in these calculations only the first of the two terms in \eqref{eqn:taubbL}.)
\end{proof}

We shall now construct the components
$\rho_N\in \hom(Q(\LLL)(N)\wedge Q(\LLL)(N),\bbK)_1$ of the natural chain homotopy $\rho$
from \eqref{eqn:PBPoissonhomtopy}, for all $N\in\CC$. Using that $Q(\LLL)(N) = \Tot(\LLL^\Delta(N))$ 
is the $\bigoplus$-totalization of the bicomplex \eqref{eqn:Xdelta},
we can use again a direct sum decomposition into the vertical degrees of 
$\LLL^\Delta(N)$ to define $\rho_N$ by a family of $n+1$-chains
\begin{subequations}\label{eqn:rhoNrelations}
\begin{flalign}
\Big\{\rho_{N,n} \in \hom\big((\LLL^{\Delta}(N)\wedge \LLL^\Delta(N))_{n,\bullet},\bbK\big)_{n+1}\Big\}_{n\in\bbZ}
\end{flalign}
that satisfy
\begin{flalign}
\tau_{\bbL,n}\circ (\eta_\LLL\wedge\eta_\LLL) - \tau_{N,n}\circ(q_\LLL\wedge q_\LLL) = \rho_{N,n-1}\circ \delta^{\LLL^{\Delta}(N)\wedge \LLL^\Delta(N)} + \rho_{N,n}\circ \dd^{\LLL^{\Delta}(N)\wedge \LLL^\Delta(N)}\quad,
\end{flalign}
\end{subequations}
where $\tau_{N,n}$ denotes the decomposition of the original Poisson 
structure $\tau_N : \LLL(N)\wedge\LLL(N)\to\bbK$
into vertical degrees, i.e.\ $\tau_{N,0}=\tau_N$ and $\tau_{N,n}=0$ for all $n\neq 0$.
Here, we have used the formula $\partial\rho_N = 
\dd^\mathbb{K}\circ\rho_N+ \rho_N\circ \dd^{Q(\LLL)(N)}= \rho_N\circ \dd^{ \Tot(\LLL^\Delta(N))}$. 
We define
\begin{subequations}\label{eqn:rhoNexplicit}
\begin{flalign}
\rho_{N,n}=0~~,\quad\text{for all }n\not\in\{0,1\}\quad.
\end{flalign}
The degree $0$ component is defined by
\begin{flalign}
\resizebox{.9 \textwidth}{!} 
{$
\rho_{N,0}\Big((g,x)\otimes(g^\prime,x^\prime)\Big) = \tfrac{1}{2} \Big((-1)^{\vert x\vert}\,\tau_{N}\Big(\LLL(g)x\otimes \Lambda^{(0,g)}_{(L(g^\prime),\mathsf{s}g^\prime)}x^\prime\Big)+ \tau_N\Big(\Lambda_{(L(g),\mathsf{s}g)}^{(0,g^\prime)} x \otimes \LLL(g^\prime) x^\prime\Big)\Big)~~,
$}
\end{flalign}
for all $(g,x),(g^\prime,x^\prime)\in \LLL^{\Delta}(N)_{0,\bullet}$,
and the degree $1$ component is defined by
\begin{flalign}
\rho_{N,1}\Big((g,f,x)\otimes(g^\prime,x^\prime)\Big) = -\tfrac{(-1)^{\vert x^\prime\vert}}{2}\,\tau_N\Big(\Xi^{(0,g^\prime)}_{(0,f)}x\otimes \LLL(g^\prime) x^\prime\Big) \quad,
\end{flalign}
for all $(g,f,x)\in  \LLL^{\Delta}(N)_{1,\bullet}$ and $(g^\prime,x^\prime)\in 
\LLL^{\Delta}(N)_{0,\bullet}$, and
\begin{flalign}
\rho_{N,1}\Big((g,x)\otimes(g^\prime,f^\prime, x^\prime)\Big) 
= -\tfrac{1}{2} \,\tau_N\Big(\LLL(g)x\otimes \Xi^{(0,g)}_{(0,f^\prime)}x^\prime\Big)\quad,
\end{flalign}
\end{subequations}
for all $(g,x)\in  \LLL^{\Delta}(N)_{0,\bullet}$ and $(g^\prime,f^\prime,x^\prime)\in 
\LLL^{\Delta}(N)_{1,\bullet}$. (Recall that, due to the fact that the nerve of $\CC$ is degenerate
in degrees $\geq 2$, we necessarily have that $(g,f,x) = (\id_N,f,x)$
and $(g^\prime,f^\prime,x^\prime)= (\id_N,f^\prime,x^\prime)$ in these expressions.)
\begin{propo}\label{prop:Poissonhomotopy}
The components \eqref{eqn:rhoNexplicit}
define a natural chain homotopy 
$\rho\in \hom(Q(\LLL)\wedge Q(\LLL),\bbK)_1$ satisfying \eqref{eqn:PBPoissonhomtopy}.
\end{propo}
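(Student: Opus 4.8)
The plan is to verify, in turn, that each $\rho_N$ is a well-defined element of $\hom(Q(\LLL)(N)\wedge Q(\LLL)(N),\bbK)_1$, that the family $\{\rho_N\}_{N\in\CC}$ satisfies the homotopy relation in \eqref{eqn:PBPoissonhomtopy}, and that it is natural in $N$. Well-definedness is pure bookkeeping: with the degree conventions for $\Lambda$, $\Gamma$ and $\Xi$ recorded in \eqref{eqn:zigzagginghomotopyh}, \eqref{eqn:zigzagginghomotopyk} and \eqref{eqn:zigzagging2homotopy}, the formulas \eqref{eqn:rhoNexplicit} define functionals of the correct total degree on each vertical-degree summand, so that $\rho_{N,n}\in\hom((\LLL^\Delta(N)\wedge\LLL^\Delta(N))_{n,\bullet},\bbK)_{n+1}$, and the $\bigoplus$-totalization then assembles them into a degree-$1$ chain $\rho_N$.

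For the homotopy relation I would decompose everything along the vertical degrees of the bicomplex $\LLL^\Delta(N)$. Since $\LLL^\Delta(N)$ is concentrated in vertical degrees $0$ and $1$ (see \eqref{eqn:Xdelta}), its wedge square is concentrated in vertical degrees $0$, $1$ and $2$; moreover $\eta_\LLL$ preserves the vertical grading whereas $q_\LLL$ is supported in vertical degree $0$. Hence the identity $\eta_\LLL^\ast(L^\ast\tau_\bbL)-q_\LLL^\ast(\tau)=\partial\rho$ is equivalent to the finite system \eqref{eqn:rhoNrelations} for $n\in\{0,1,2\}$, with $\rho_{N,-1}=\rho_{N,2}=0$. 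Exactly as in the proof of Proposition \ref{prop:homotopyPoisson}, since the $\tau$'s are antisymmetrised with respect to the total-degree braiding and the conditions are braiding-invariant, it suffices to check the first of the two summands appearing in each component of $\tau_{\bbL}$ and of $\rho_N$.

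I would then handle the three equations one by one. For $n=0$, inserting $\eta_\LLL(g,x)=(L(g),\mathsf{s}g,x)$ and $q_\LLL(g,x)=\LLL(g)x$ and using naturality of $\tau$ to replace $\tau_{\mathsf{s}g}$ by $\tau_N\circ(\LLL(g)\wedge\LLL(g))$ collapses the left-hand side to a difference of the shape $\tfrac12\tau_N\big(\LLL(g)x\otimes(\LLL(f)\LLL(f)^{-1}-\id)(\cdots)x'\big)$ governed by the homotopies $\lambda_f$; writing $\LLL(f)\LLL(f)^{-1}-\id=\partial\lambda_f$ and comparing with $\rho_{N,0}\circ\dd$ via the chain-map property of $\tau$ and the defining relation \eqref{eqn:zigzagginghomotopyh} of the zig-zagging homotopies $\Lambda$ closes this case. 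For $n=1$ the same manipulation produces contributions of both $\lambda$-type and $\gamma$-type, and the identity follows by combining \eqref{eqn:zigzagginghomotopyh} with the right-composition analogue \eqref{eqn:zigzagginghomotopyk} for $\Gamma$. For $n=2$ the remaining terms are assembled from the chain $2$-homotopies, and the required identity is precisely the zig-zagging $2$-homotopy relation \eqref{eqn:zigzagging2homotopy} for $\Xi$; here one exploits that the nerve of $\CC$ is degenerate in degrees $\geq 2$, so that every element of $\LLL^\Delta(N)_{1,\bullet}$ has the form $(\id_N,f,x)$, which keeps the combinatorics finite.

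Naturality is the final step. Since $\CC$ has no non-identity morphisms out of $M$ or $M_h$, the only squares to verify are those for $h\in\{i_+,i_-,j_+,j_-\}$, and $Q(\LLL)(h)$ acts by post-composition, $(g,x)\mapsto(hg,x)$ and $(g,f,x)\mapsto(hg,f,x)$. Unwinding $\rho_{N',n}\circ(Q(\LLL)(h)\wedge Q(\LLL)(h))$ against $\rho_{N,n}$ reduces the claim to compatibility of the zig-zagging data with $\LLL(h)$, together with the identity-morphism conventions $\lambda_{\id}=\gamma_{\id}=\xi_{\id}=0$, which force the relevant zig-zagging homotopies to vanish whenever a composite already realises the shortest zig-zag; both are read off from the explicit formulas \eqref{eqn:Hhformula}, \eqref{eqn:Hkformula} and \eqref{eqn:Kformula} of Appendix \ref{app:zigzag}. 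I expect the main obstacle to be not a conceptual one but the simultaneous sign and index bookkeeping in the cases $n=1$ and $n=2$: one has to track the Koszul signs coming from the totalization, the $(-1)^{\vert x\vert}$ prefactors in \eqref{eqn:taubbL} and \eqref{eqn:rhoNexplicit}, and the degree shifts of $\Lambda$, $\Gamma$ and $\Xi$ at once, applying the coherence relations of Appendix \ref{app:zigzag} with exactly the right shifted source/target arguments. The antisymmetrisation remark and the degeneracy of the nerve of $\CC$ cut the workload down, but the verification stays a somewhat lengthy direct computation.
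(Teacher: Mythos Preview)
Your proposal is correct and follows essentially the same approach as the paper: the paper's own proof is a single sentence stating that the claim is a straightforward check using the properties \eqref{eqn:zigzagginghomotopyh}, \eqref{eqn:zigzagginghomotopyk} and \eqref{eqn:zigzagging2homotopy} of the zig-zagging homotopies together with naturality of the original Poisson structure $\tau$. You have unpacked this sketch into the obvious vertical-degree decomposition and spelled out which coherence relation closes each of the cases $n=0,1,2$, which is exactly what one would do to actually carry out the check.
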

\begin{proof}
This is a straightforward check using the properties \eqref{eqn:zigzagginghomotopyh},
\eqref{eqn:zigzagginghomotopyk} and \eqref{eqn:zigzagging2homotopy}
of the zig-zagging homotopies and naturality of the original Poisson structure
$\tau:\LLL\wedge\LLL\to \bbK$.
\end{proof}

Let us recall from \cite[Section 5]{LinearYM} the functor
\begin{flalign}
\CCR\,:\, \PoCh_\bbK~\longrightarrow~\dgAlg_\bbK
\end{flalign}
that assigns to a Poisson chain complex $(V,\tau)\in\PoCh_\bbK$
its associated differential graded CCR-algebra 
$\CCR(V,\tau)\in \dgAlg_\bbK$. 
Composing this functor with our given
functor $(\LLL,\tau) : \CC\to\PoCh_\bbK$ that models
the Poisson chain complexes of linear observables of a 
linear homotopy AQFT, we obtain a homotopy AQFT 
\begin{flalign}\label{eqn:originalAQFT}
\AAA := \CCR(\LLL,\tau) \,:\, \CC~\longrightarrow~\dgAlg_\bbK
\end{flalign}
on the category $\CC$ given in \eqref{eqn:RCEcategory}.
Furthermore, composing $\CCR$ with the pullback of the functor
$(\bbL L_!(\LLL),\tau_{\bbL}) : \BB\bbZ\to\PoCh_\bbK$ 
along the localization functor $L : \CC\to\BB\bbZ$ defines
another homotopy AQFT 
\begin{flalign}\label{eqn:strictifiedAQFT}
\AAA^{\mathrm{st}} := \CCR\big(L^\ast \bbL L_!(\LLL),L^\ast\tau_{\bbL}\big)
= L^\ast\big(\CCR\big(\bbL L_!(\LLL),\tau_{\bbL}\big)\big)\,:\, \CC~\longrightarrow~\dgAlg_\bbK
\end{flalign}
on $\CC$ which, due to the fact that it is obtained as a pullback along the localization functor,
satisfies the strict time-slice axiom.
The following rectification theorem is the main result of this paper.
\begin{theo}\label{theo:rectificationhomotopyAQFT}
Let $(\LLL,\tau) : \CC\to\PoCh_\bbK$ be any functor on the category $\CC$ in \eqref{eqn:RCEcategory}
that satisfies the homotopy time-slice axiom, i.e.\ $\LLL\in \Fun(\CC,\Ch_\bbK)^{\mathrm{ho}\mathrm{All}}$.
Then its associated homotopy AQFT \eqref{eqn:originalAQFT}, which satisfies the homotopy time-slice axiom,
is equivalent via a zig-zag of weak equivalences in the model category $\Fun(\CC,\dgAlg_\bbK)$
to the homotopy AQFT \eqref{eqn:strictifiedAQFT} that satisfies the strict time-slice axiom.
\end{theo}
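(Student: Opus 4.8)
The plan is to produce, functorially over the diagram category $\CC$, a zig-zag of weak equivalences in the category $\PoCh_\bbK$ of Poisson chain complexes linking $(\LLL,\tau)$ to $(L^\ast\bbL L_!(\LLL),L^\ast\tau_\bbL)$, and then to transport it objectwise along the $\CCR$-functor. The starting point is the zig-zag of weak equivalences \eqref{eqn:zigzagWE} in $\Fun(\CC,\Ch_\bbK)$, which combines the natural weak equivalence $q_\LLL:Q(\LLL)\to\LLL$ of Remark \ref{rem:qmap} with the natural weak equivalence $\eta_\LLL:Q(\LLL)\to L^\ast\bbL L_!(\LLL)$ supplied by Theorem \ref{theo:rectification}; the latter applies precisely because $\LLL\in\Fun(\CC,\Ch_\bbK)^{\mathrm{ho}\mathrm{All}}$ by hypothesis. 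By Proposition \ref{prop:homotopyPoisson} the family $\{\tau_{\bbL,n}\}$ is a genuine Poisson structure on $\bbL L_!(\LLL)=\Tot(\widetilde{\LLL})$, so $(L^\ast\bbL L_!(\LLL),L^\ast\tau_\bbL)$ is a well-defined functor $\CC\to\PoCh_\bbK$. By the very definition of the pulled-back pairings, $q_\LLL$ is a morphism $(Q(\LLL),q_\LLL^\ast\tau)\to(\LLL,\tau)$ and $\eta_\LLL$ is a morphism $(Q(\LLL),\eta_\LLL^\ast(L^\ast\tau_\bbL))\to(L^\ast\bbL L_!(\LLL),L^\ast\tau_\bbL)$ in $\Fun(\CC,\PoCh_\bbK)$, both of which are \emph{strict} Poisson weak equivalences, their underlying chain maps being quasi-isomorphisms.

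The only obstacle to splicing these two legs together is that the two Poisson structures $q_\LLL^\ast\tau$ and $\eta_\LLL^\ast(L^\ast\tau_\bbL)$ on the common functor $Q(\LLL)$ do not agree on the nose; by Proposition \ref{prop:Poissonhomotopy} they differ precisely by the coboundary $\partial\rho$ of the natural chain homotopy $\rho\in\hom(Q(\LLL)\wedge Q(\LLL),\bbK)_1$. I would close this gap with a cylinder construction. Writing $C_\bullet(\Delta^1)$ for the cellular chains of the $1$-simplex (two generators in degree $0$, one in degree $1$ with boundary their difference), set $\mathrm{Cyl}(Q(\LLL)):=Q(\LLL)\otimes C_\bullet(\Delta^1)\in\Fun(\CC,\Ch_\bbK)$, with its two endpoint inclusions $\iota_0,\iota_1:Q(\LLL)\to\mathrm{Cyl}(Q(\LLL))$; both are natural weak equivalences since $\bbK$ is a field and $C_\bullet(\Delta^1)$ is quasi-isomorphic to $\bbK$. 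One then equips $\mathrm{Cyl}(Q(\LLL))$ with a natural Poisson structure $\widetilde\tau$ whose degree-$0$ part linearly interpolates $q_\LLL^\ast\tau$ and $\eta_\LLL^\ast(L^\ast\tau_\bbL)$ across the $1$-cell and whose degree-$1$ part is built from $\rho$, arranged so that $\iota_0^\ast\widetilde\tau=q_\LLL^\ast\tau$ and $\iota_1^\ast\widetilde\tau=\eta_\LLL^\ast(L^\ast\tau_\bbL)$; the identity required for $\widetilde\tau$ to be a chain map is exactly the relation $\eta_\LLL^\ast(L^\ast\tau_\bbL)-q_\LLL^\ast\tau=\partial\rho$ of \eqref{eqn:PBPoissonhomtopy} together with $\tau$ being a chain map. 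This turns $\iota_0$ and $\iota_1$ into weak equivalences in $\Fun(\CC,\PoCh_\bbK)$, and concatenating everything yields the zig-zag of weak equivalences
\begin{flalign*}
\resizebox{0.95\textwidth}{!}{$(\LLL,\tau)\xleftarrow{\,q_\LLL\,}(Q(\LLL),q_\LLL^\ast\tau)\xrightarrow{\,\iota_0\,}(\mathrm{Cyl}(Q(\LLL)),\widetilde\tau)\xleftarrow{\,\iota_1\,}(Q(\LLL),\eta_\LLL^\ast(L^\ast\tau_\bbL))\xrightarrow{\,\eta_\LLL\,}(L^\ast\bbL L_!(\LLL),L^\ast\tau_\bbL)$}
\end{flalign*}
in $\Fun(\CC,\PoCh_\bbK)$.

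To finish, I would apply $\CCR$ objectwise. By the homotopical properties of the $\CCR$-functor established in \cite{BruinsmaSchenkel,LinearYM} — concretely, that $\CCR:\PoCh_\bbK\to\dgAlg_\bbK$ sends weak equivalences of Poisson chain complexes to weak equivalences of dg-algebras, which over the characteristic-$0$ field $\bbK$ follows from the PBW identification of the underlying chain complex of $\CCR(V,\tau)$ with $\mathrm{Sym}(V)$ and the exactness of $\mathrm{Sym}^n$ over $\bbK$ — the above zig-zag maps to a zig-zag of weak equivalences in $\Fun(\CC,\dgAlg_\bbK)$. Since $\CCR(\LLL,\tau)=\AAA$ and $\CCR(L^\ast\bbL L_!(\LLL),L^\ast\tau_\bbL)=\AAA^{\mathrm{st}}$ by \eqref{eqn:originalAQFT} and \eqref{eqn:strictifiedAQFT}, this is the asserted zig-zag, both ends of which satisfy, respectively, the homotopy time-slice axiom and the strict time-slice axiom.

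I expect the main obstacle to be the explicit construction and verification of the interpolating Poisson structure $\widetilde\tau$ on the cylinder: one must fix all Koszul signs so that the chain-map condition for $\widetilde\tau$ collapses to exactly the content of Propositions \ref{prop:homotopyPoisson} and \ref{prop:Poissonhomotopy} (with no further coherence data entering, since $Q(\LLL)$ and $\mathrm{Cyl}(Q(\LLL))$ are concentrated in low vertical degrees), and to check naturality in $\CC$, which rests on the naturality of $\rho$ from Proposition \ref{prop:Poissonhomotopy}. The remaining input, preservation of weak equivalences by $\CCR$, is quoted from \cite{BruinsmaSchenkel,LinearYM} and is unproblematic over a field.
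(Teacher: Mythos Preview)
Your proposal is correct and follows essentially the same overall strategy as the paper: both start from the zig-zag \eqref{eqn:zigzagWE}, use Propositions \ref{prop:homotopyPoisson} and \ref{prop:Poissonhomotopy} to control the two pulled-back Poisson structures on $Q(\LLL)$, and then invoke the homotopical properties of $\CCR$ from \cite{LinearYM}. The one genuine difference is in how the gap between $(Q(\LLL),q_\LLL^\ast\tau)$ and $(Q(\LLL),\eta_\LLL^\ast(L^\ast\tau_\bbL))$ is bridged. The paper simply cites \cite[Proposition 5.3(b)]{LinearYM}, which furnishes an interpolating object $A_{(Q(\LLL),q_\LLL^\ast(\tau),\rho)}$ directly in $\Fun(\CC,\dgAlg_\bbK)$ together with a span of weak equivalences out of it. You instead build the interpolating object one level down, as a cylinder $(\mathrm{Cyl}(Q(\LLL)),\widetilde\tau)$ in $\Fun(\CC,\PoCh_\bbK)$ with a cospan of weak equivalences into it, and then push everything through $\CCR$ using only the weak-equivalence preservation property (part~(a) of the cited proposition). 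Your route is slightly more hands-on and has the advantage of keeping the entire argument at the level of Poisson chain complexes until the very end; the paper's route is shorter because it treats the interpolation step as a black box already established in \cite{LinearYM}. Your caveat about the Koszul signs in $\widetilde\tau$ is well placed but not a real obstruction: the required identities do reduce to exactly $\partial\rho=\eta_\LLL^\ast(L^\ast\tau_\bbL)-q_\LLL^\ast\tau$ once the cross-terms are fixed consistently, and naturality follows from that of $\rho$.
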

\begin{proof}
From the zig-zag of weak equivalences in \eqref{eqn:zigzagWE}
and \cite[Proposition 5.3.\. (a)]{LinearYM}, we obtain the following two
weak equivalences
\begin{flalign}\label{eqn:tmpWE}
\xymatrix@C=1.5em{
\CCR(q_{\LLL})\,:\, \CCR\big(Q(\LLL), q_{\LLL}^\ast(\tau) \big) \ar[r]^-{\sim}~&~\AAA
}~~,\quad 
\xymatrix@C=1.5em{
\CCR(\eta_{\LLL}) \,:\, \CCR\big(Q(\LLL),\eta_{\LLL}^\ast(L^\ast \tau_\bbL)\big)~\ar[r]^-{\sim}&~\AAA^{\mathrm{st}}
}
\end{flalign}
in $\Fun(\CC,\dgAlg_\bbK)$. Using that 
$q_{\LLL}^\ast(\tau)$ and $\eta_{\LLL}^\ast(L^\ast \tau_\bbL) = q_{\LLL}^\ast(\tau) + \partial\rho$
are homotopic by Proposition \ref{prop:Poissonhomotopy}, we obtain from \cite[Proposition 5.3.\. (b)]{LinearYM}
a zig-zag of weak equivalences
\begin{flalign}\label{eqn:tmpWE2}
\xymatrix@C=1.5em{
\CCR\big(Q(\LLL), q_{\LLL}^\ast(\tau) \big) ~&~\ar[l]_-{\sim} A_{(Q(\LLL),q_{\LLL}^\ast(\tau) ,\rho)} \ar[r]^-{\sim}
~&~ \CCR\big(Q(\LLL),\eta_{\LLL}^\ast(L^\ast \tau_\bbL)\big)
}
\end{flalign}
in $\Fun(\CC,\dgAlg_\bbK)$, where the interpolating 
object $ A_{(Q(\LLL),q_{\LLL}^\ast(\tau) ,\rho)}\in \Fun(\CC,\dgAlg_\bbK)$
was constructed  explicitly in \cite{LinearYM}. Combining \eqref{eqn:tmpWE} and \eqref{eqn:tmpWE2}
completes the proof.
\end{proof}

We conclude by emphasizing that the relative Cauchy evolution for the
strictified homotopy AQFT \eqref{eqn:strictifiedAQFT}
is given by applying the $\CCR$-functor on the RCE automorphism 
for linear observables from \eqref{eqn:smallRCEmap}, i.e.\
\begin{flalign}\label{eqn:bigRCEmap}
\RCE_{(M,h)}^{} \,:=\,\CCR(\rce_{(M,h)}^{})\,:\, \CCR\big(\bbL L_!(\LLL),\tau_\bbL\big)~\longrightarrow~
\CCR\big(\bbL L_!(\LLL),\tau_\bbL\big)\quad.
\end{flalign}
In particular, the relative Cauchy evolution 
of the strictified model is given by a {\em strict  automorphism}
of dg-algebras, in contrast to an $A_\infty$-quasi-automorphism.

\begin{rem}\label{rem:involution}
Throughout the bulk of this paper we have neglected $\ast$-involutions
on the homotopy AQFTs in order to streamline our presentation.
However, we would like to emphasize that our main result in Theorem \ref{theo:rectificationhomotopyAQFT}
also holds true for homotopy AQFTs with $\ast$-involutions.
The proof is obtained by choosing 
up to Proposition \ref{prop:Poissonhomotopy} the ground field $\bbK=\bbR$
and recalling that the homotopical
properties of the $\CCR$-functor entering 
the proof of Theorem \ref{theo:rectificationhomotopyAQFT}
were shown in \cite[Proposition 5.3]{LinearYM} also for the 
complexified $\CCR$-functor with $\ast$-involutions over $\bbC$.
\end{rem}


\section{\label{sec:Example}Example: Linear quantum Yang-Mills theory}
The aim of this section is to illustrate our construction of the relative Cauchy 
evolution for linear homotopy AQFTs through a simple example,
the linear Yang-Mills model presented in \cite{LinearYM}.
This example is based on the solution complexes
\begin{flalign}\label{eqn:solcomplex}
\Sol(M) \,:=\, \Big(
\xymatrix@C=2.5em{
\stackrel{(-2)}{\Omega^0(M)} 
& \ar[l]_-{\delta_M} \stackrel{(-1)}{\Omega^1(M)} 
& \ar[l]_-{\delta_M\dd_{M} }\stackrel{(0)}{\Omega^1(M)} 
& \ar[l]_-{\dd_{M}} \stackrel{(1)}{\Omega^0(M)}
}
\Big)\,\in\,\Ch_\bbR\quad,
\end{flalign}
for each spacetime $M\in\Loc$, where the round brackets indicate homological degrees,
$\dd_M$ is the de Rham differential and $\delta_M$ the codifferential.
To be explicit, we work in arbitrary spacetime dimension and our convention 
for the metric signature is $(+--\cdots -)$. The Hodge operator $\ast_M^{}$ is defined so that
\begin{flalign}
\omega\wedge\ast_M^{} \eta \,=\, \frac{1}{p!} \omega_{a_1\cdots a_p} \eta^{a_1\cdots a_p}\vol_M\quad,  
\end{flalign}
for $p$-forms $\omega,\eta\in \Omega^p(M)$, where $\vol_M=\ast_M(1)$ is 
the metric-induced volume form and we have used abstract index notation when 
regarding $p$-forms as anti-symmetric tensors. The codifferential $\delta_M$ 
is the adjoint of $\dd_M$ with respect to the pairing
\begin{flalign}
\langle \omega, \eta\rangle_M^{} \,=\, \int_M\omega\wedge \ast_M^{} \eta \quad,
\end{flalign}
which is defined when the supports of $\omega$ and $\eta$ have compact intersection.
These conventions coincide with those of \cite{FewsterLang} and were implicit in \cite{LinearYM}.
\sk

Elements $A\in \Sol(M)_0=\Omega^1(M)$ in degree $0$
are interpreted as gauge fields, elements $c\in \Sol(M)_1=\Omega^0(M)$ 
in degree $1$ as ghost fields, and elements $A^\ddagger\in \Sol(M)_{-1}=\Omega^1(M)$ and
$c^\ddagger\in\Sol(M)_{-2}=\Omega^0(M)$ in negative degrees as the antifields
of the theory. Note that the zeroth homology of the chain complex
\eqref{eqn:solcomplex} is the vector space of gauge equivalence classes of solutions
to Maxwell's equations, i.e.\ $H_0(\Sol(M)) =
 \{A\in \Omega^1(M)\,:\, \delta_M\dd_M A=0\}/\dd_M\Omega^0(M)$.
The homologies of $\Sol(M)$ in non-zero degrees carry refined information
about the gauge symmetries and dynamics of linear Yang-Mills theory, see
\cite[Example 3.9]{LinearYM} for a discussion.
\sk

The chain complex of linear observables\footnote{The terminology is chosen 
for simplicity and consistency with~\cite{LinearYM}; in other contexts some 
elements would be described as smearings of unobservable fields.} on $M\in\Loc$ for this theory 
is given by the smooth dual of the solution complex \eqref{eqn:solcomplex}, i.e.\
\begin{flalign}\label{eqn:linearobscomplex}
\LLL(M) \,:=\, \Big(
\xymatrix@C=2.5em{
\stackrel{(-1)}{\Omega^0_\cc(M)} 
& \ar[l]_-{-\delta_M} \stackrel{(0)}{\Omega^1_\cc(M)} 
& \ar[l]_-{\delta_M\dd_M}\stackrel{(1)}{\Omega^1_\cc(M)} 
& \ar[l]_-{-\dd_M} \stackrel{(2)}{\Omega^0_\cc(M)}
}
\Big)\,\in\,\Ch_\bbR\quad,
\end{flalign}
where the subscript ${}_\cc$ denotes differential forms of compact support.
Elements $\varphi\in \LLL(M)_0=\Omega^1_\cc(M)$ in degree $0$ are interpreted
as linear gauge field observables, elements $\chi\in \LLL(M)_{-1}=\Omega^0_\cc(M)$ 
in degree $-1$ as linear ghost field observables, and elements
$\alpha\in \LLL(M)_{1}=\Omega^1_\cc(M)$ and $\beta\in \LLL(M)_2=\Omega^0_\cc(M)$
in positive degrees as linear observables for the antifields.
The evaluation pairing 
\begin{subequations}\label{eqn:pairing}
\begin{flalign}
\langle-,-\rangle_M^{} \,:\,\LLL(M)\otimes\Sol(M)~\longrightarrow~\bbR
\end{flalign}
between linear observables and the fields in \eqref{eqn:solcomplex} is the chain map defined by
\begin{flalign}
\langle \varphi, A \rangle_M^{} &= \int_M \varphi \wedge \ast_M A ~~\quad,\qquad
\langle \chi , c \rangle_M^{} = \int_M \chi \, c\, \vol_M \quad, \qquad\\
\langle \alpha , A^\ddagger \rangle_M^{} &=\int_M \alpha \wedge \ast_M A^\ddagger \quad,\qquad
\langle \beta, c^\ddagger \rangle_M^{} = \int_M \beta \, c^\ddagger \,\vol_M \quad.
\end{flalign}
\end{subequations} 
\sk

The Poisson structure $\tau_M : \LLL(M)\wedge\LLL(M)\to\bbR$ on \eqref{eqn:linearobscomplex}
is constructed by chain complex analogs of retarded/advanced Green operators, which were
called retarded/advanced trivializations in \cite{LinearYM} and were shown 
to be unique up to contractible choices.
These structures provide chain homotopies trivializing the inclusion chain map
\begin{flalign}\label{eqn:jmapping}
\parbox{0.3cm}{\xymatrix{
\LLL(M)\ar[d]_-{j_M}\\
\Sol(M)[1]
}
}:=~~ \left(\parbox{2cm}{\xymatrix@C=2.5em{
\ar[d]_-{-\subseteq}\Omega^0_\cc(M)
& \ar[d]_-{-\subseteq}\ar[l]_-{-\delta_M} \Omega^1_\cc(M)
& \ar[d]_-{\subseteq} \ar[l]_-{\delta_M\dd_M} \Omega^1_\cc(M)
&\ar[d]_-{\subseteq} \ar[l]_-{-\dd_M} \Omega^0_\cc(M)
\\ 
\Omega^0(M)
& \ar[l]^-{-\delta_M} \Omega^1(M)
& \ar[l]^-{-\delta_M\dd_{M} } \Omega^1(M)
& \ar[l]^-{-\dd_{M}} \Omega^0(M)
}
}\right)
\end{flalign}
from the chain complex of linear observables to the shifted solution complex.
(In \eqref{eqn:jmapping} we have chosen sign conventions
for the vertical inclusion maps that are opposite to the ones in \cite{LinearYM}.
Our present choice is convenient to fix a minor sign error in Eqn.~(4.26) of the latter paper.)
A concrete model for such retarded/advanced Green homotopies 
\begin{subequations}\label{eqn:Greenhomotopies}
\begin{flalign}
\mathcal{G}^\pm_M \,\in\, \hom\big(\LLL(M),\Sol(M)[1]\big)_1
\end{flalign}
satisfying
\begin{flalign}\label{eqn:Greenhomotopies2}
j_M \,=\, \partial_M \mathcal{G}^\pm_M \,=\, \dd^{\Sol(M)[1]}\, \mathcal{G}^\pm_M  + \mathcal{G}^\pm_M \,\dd^{\LLL(M)}
\end{flalign}
can be given in terms of the retarded/advanced Green operators $G_{M}^\pm$
of the d'Alembert operator $\square_M := \delta_M\dd_M+\dd_M\delta_M$
on differential forms,\footnote{This convention for $\Box_M$ follows~\cite{LinearYM} but 
is opposite in sign to that adopted in~\cite{FewsterLang}; accordingly, 
the Green operators in these two references differ by an overall sign.} reading explicitly as
\begin{flalign}\label{eqn:Greenhomotopies3}
\mathcal{G}^\pm_M (\chi ) \,:=\, G_{M}^{\pm} (\dd_M\chi)~,~~
\mathcal{G}^\pm_M (\varphi )\,:=\, G_M^{\pm}(\varphi)~,~~
\mathcal{G}^\pm_M (\alpha )\,:=\, -G_M^{\pm}( \delta_M \alpha)~,~~
\mathcal{G}^\pm_M (\beta )\,:=\,0 \quad,
\end{flalign}
\end{subequations}
for all $\chi\in \LLL(M)_{-1}$, $\varphi\in \LLL(M)_0$,
$\alpha\in \LLL(M)_{1}$ and $\beta\in \LLL(M)_{2}$.
Taking the difference 
\begin{flalign}
\mathcal{G}_M\,:=\, \mathcal{G}_M^{+} - \mathcal{G}_M^-\,:\,\LLL(M)~\longrightarrow~\Sol(M)
\end{flalign}
between the retarded and advanced Green homotopy defines a chain map
to the unshifted solution complex, which plays a similar role to the 
causal propagator in ordinary AQFT. Explicitly, this chain map reads as
\begin{flalign}\label{eqn:causalpropagator}
\parbox{0.3cm}{\xymatrix{
\LLL(M)\ar[d]_-{\mathcal{G}_M}\\
\Sol(M)
}
} =~~ \left(\parbox{2cm}{\xymatrix@C=2.5em{
\ar[d]_-{0}0
&\ar[l]_-{0}\ar[d]_-{G_M\dd_M}\Omega^0_\cc(M)
& \ar[d]_-{G_M}\ar[l]_-{-\delta_M} \Omega^1_\cc(M)
& \ar[d]_-{-G_M\delta_M} \ar[l]_-{\delta_M\dd_M} \Omega^1_\cc(M)
&\ar[d]_-{0} \ar[l]_-{-\dd_M} \Omega^0_\cc(M)
\\ 
\Omega^0(M)
& \ar[l]^-{\delta_M} \Omega^1(M)
& \ar[l]^-{\delta_M\dd_{M} } \Omega^1(M)
& \ar[l]^-{\dd_{M}} \Omega^0(M)
& \ar[l]^-{0}0
}
}\right)\quad,
\end{flalign}
where $G_M := G_M^+ - G_M^-$ denotes the causal propagator for the d'Alembert 
operator on differential forms. The Poisson structure is then defined
by the chain map
\begin{subequations}\label{eqn:PoissonYM}
\begin{flalign}
\tau_M\,:=\, \langle-,\mathcal{G}_M(-)\rangle_M^{}\,:\, \LLL(M)\wedge\LLL(M)~\longrightarrow~\bbR\quad,
\end{flalign}
whose non-zero components explicitly read as
\begin{flalign}
\tau_M(\varphi_1\otimes \varphi_2) &\,=\, \int_M \varphi_1\wedge \ast_M G_M (\varphi_2) \,=\, -\tau_M(\varphi_2\otimes \varphi_1)\quad,\\
\tau_M(\alpha\otimes\chi) &\,=\, \int_M\alpha\wedge \ast_M G_M(\dd_M\chi) \,=\, \tau_M(\chi\otimes\alpha)  \quad,
\end{flalign}
\end{subequations}
for all $\chi\in \LLL(M)_{-1}$, $\varphi_1,\varphi_2 \in \LLL(M)_0$ and
$\alpha\in \LLL(M)_{1}$. (As already mentioned above, there is a minor 
sign error in the corresponding equation (4.26) in \cite{LinearYM}. Our opposite
sign convention for the vertical maps in \eqref{eqn:jmapping} implies that our 
Poisson structure \eqref{eqn:PoissonYM} has a positive sign for linear observables in degree $0$,
in contrast to the negative sign in the corrected version of Eqn.\ (4.26a) in \cite{LinearYM}.)
\sk

Naturality of \eqref{eqn:causalpropagator} and \eqref{eqn:pairing}
implies that the assignment $M\mapsto (\LLL(M),\tau_M)$ can be promoted to 
a functor $(\LLL,\tau) : \Loc\to\PoCh_\bbR$. To a $\Loc$-morphism
$f:M\to N$, this functor assigns the chain map
$\LLL(f) := f_\ast : (\LLL(M),\tau_M)\to(\LLL(N),\tau_N)$ of Poisson chain complexes 
given by pushforward (i.e.\ extension by zero) of compactly supported differential forms.
By \cite[Theorem 6.19]{LinearYM}, the composition of this functor
with the $\CCR$-functor $\CCR : \PoCh_\bbR\to \astdgAlg_\bbC$ 
(in the context of chain complexes and with $\ast$-involutions) defines a
homotopy AQFT $\AAA :=\CCR(\LLL,\tau)\in {}^\ast\AQFT_\infty(\overline{\Loc})^{\mathrm{ho}W}$
with $\ast$-involution that satisfies the homotopy time-slice axiom.
With an abuse of notation, we denote by the same symbol
$\AAA: \CC\to \astdgAlg_\bbC$ the restriction of this homotopy AQFT 
to the category $\CC$ in \eqref{eqn:RCEcategory} that is relevant for the discussion
of the relative Cauchy evolution induced by the pair $(M,h)$
consisting of a spacetime $M$ and a compactly supported metric perturbation $h$.
\sk

Our main Theorem \ref{theo:rectificationhomotopyAQFT}, see also Remark \ref{rem:involution}, 
provides us with a concept of relative Cauchy evolution for this homotopy AQFT. Let us recall that 
our construction is slightly abstract as it involves a replacement of the theory $\AAA$
by a weakly equivalent theory $\AAA^\mathrm{st}: \CC\to \astdgAlg_\bbC$ (see \eqref{eqn:strictifiedAQFT})
that satisfies the {\em strict} time-slice axiom and hence admits an RCE automorphism
in the ordinary sense \eqref{eqn:bigRCEmap}. In order to physically interpret the resulting concept
of relative Cauchy evolution, and to compute its associated stress-energy tensor, we make
use of a concept similar to the quantum fields in the sense of \cite{BFV,FewsterVerch}, 
where they are understood as natural transformations 
between the functors assigning test function spaces and field algebras to spacetimes. 
We shall use the chain complex $\LLL(M)$ of linear observables on the object $M$ in \eqref{eqn:RCEcategory}
for labeling these quantum fields and consider the chain map
\begin{flalign}\label{eqn:iotamap}
\iota\,:\, \LLL(M) ~\longrightarrow~\bbL L_!(\LLL)~,~~\omega~\longmapsto ~ (0,M,\omega)
\end{flalign}
to the chain complex of linear observables for the strictified theory given in
\eqref{eqn:leftderivedfunctor} and \eqref{eqn:tildeXbicomplex}. In words,
we embed the linear observables on $M$ at the level $n=0$ in the spiral 
\eqref{eqn:spiral} controlling our strictification construction. (Due to $\bbZ$-invariance
of $\bbL L_!(\LLL)$ and its Poisson structure $\tau_\bbL$, it does not matter 
at which level $n\in\bbZ$ we embed $\LLL(M)$.)
Using further the chain map $\bbL L_!(\LLL)\to \CCR(\bbL L_!(\LLL),\tau_\bbL)$
that assigns to linear observables their corresponding generators in the CCR-algebra,
we can compose with \eqref{eqn:iotamap} to obtain a chain map from
$\LLL(M)$ to $\CCR(\bbL L_!(\LLL),\tau_\bbL)$ that plays a similar role as the quantum fields
in \cite{BFV,FewsterVerch}.  We can now introduce a concept of relative Cauchy evolution 
for this particular quantum field, which can be interpreted as a combination of the gauge field, ghost
and antifields into a single multicomponent field.
\begin{propo}\label{propo:rcequantumfield}
The diagram
\begin{flalign}
\xymatrix@C=4em{
\ar[d]_-{Z_{(1,M)}^{(0,M)}} \LLL(M) \ar[r]^-{\iota} ~&~ \ar[d]^-{\rce_{(M,h)}^{}} \ar[r]^-{} \bbL L_!(\LLL) \ar@{<=}[dl]_-{\theta}~&~ \ar[d]^-{\RCE_{(M,h)}^{}}\CCR(\bbL L_!(\LLL),\tau_\bbL)\\
\LLL(M) \ar[r]_-{\iota}~&~ \bbL L_!(\LLL) \ar[r] ~&~\CCR(\bbL L_!(\LLL),\tau_\bbL)
}
\end{flalign}
in $\Ch_\bbR$ is homotopy commutative, where 
\begin{flalign}\label{eqn:zigzagrceexplicit}
Z_{(1,M)}^{(0,M)} \,=\, \LLL(i_-)\,\LLL(j_-)^{-1}\,\LLL(j_+)\,\LLL(i_+)^{-1} \,:\, \LLL(M) ~\longrightarrow~ \LLL(M)
\end{flalign}
denotes the zig-zagging chain map from \eqref{eqn:zigzaggingmaps} and 
the chain homotopy $\theta\in \hom(\LLL(M),\bbL L_!(\LLL))_1$
is constructed in the proof.
\end{propo}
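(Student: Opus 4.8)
The plan is to treat the two squares of the diagram separately. The right-hand square commutes \emph{on the nose}: by Proposition~\ref{prop:homotopyPoisson} the Poisson structure $\tau_\bbL$ on $\bbL L_!(\LLL)$ is invariant under the $\bbZ$-action, so $\rce_{(M,h)}^{}=\bbL L_!(\LLL)(1)$ is a morphism of Poisson chain complexes $(\bbL L_!(\LLL),\tau_\bbL)\to(\bbL L_!(\LLL),\tau_\bbL)$; since $\RCE_{(M,h)}^{}=\CCR(\rce_{(M,h)}^{})$ by \eqref{eqn:bigRCEmap} and the chain map $\bbL L_!(\LLL)\to\CCR(\bbL L_!(\LLL),\tau_\bbL)$ assigning to linear observables their generators is natural with respect to $\PoCh_\bbR$-morphisms, this naturality gives precisely the commutativity of the right square. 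It therefore remains to make the left-hand square homotopy commutative, i.e.\ to construct $\theta\in\hom(\LLL(M),\bbL L_!(\LLL))_1$ with $\partial\theta=\dd^{\bbL L_!(\LLL)}\circ\theta+\theta\circ\dd^{\LLL(M)}$ equal to the difference $\rce_{(M,h)}^{}\circ\iota-\iota\circ Z_{(1,M)}^{(0,M)}$ (up to an overall sign fixed by conventions). That some such $\theta$ exists is to be expected, since $\iota$ is a quasi-isomorphism (this is the content of the short exact sequence \eqref{eqn:SES} in the proof of Theorem~\ref{theo:rectification}, applied to $\LLL\in\Fun(\CC,\Ch_\bbK)^{\mathrm{ho}\mathrm{All}}$) and the same holds for the level-$1$ embedding $\omega\mapsto(1,M,\omega)=\rce_{(M,h)}^{}(\iota(\omega))$; the point of the proposition is to exhibit $\theta$ explicitly, as this is what enters the stress-energy computation later.

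Unravelling the two composites via \eqref{eqn:smallRCEmap} and \eqref{eqn:iotamap} gives $\rce_{(M,h)}^{}(\iota(\omega))=(1,M,\omega)$ and $\iota(Z_{(1,M)}^{(0,M)}(\omega))=(0,M,Z_{(1,M)}^{(0,M)}\omega)$, with $Z_{(1,M)}^{(0,M)}$ running through the shortest zig-zag $(1,M)\xleftarrow{i_+}(1,M_+)\xrightarrow{j_+}(1,M_h)\xleftarrow{j_-}(1,M_-)\xrightarrow{i_-}(0,M)$ of the spiral \eqref{eqn:spiral}, as recorded in \eqref{eqn:zigzagrceexplicit}. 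Both elements sit in vertical degree $0$ of $\widetilde{\LLL}$, and inside $\bbL L_!(\LLL)=\Tot(\widetilde\LLL)$ they are joined by the chain of degree-$1$ generators $(n_k,g_k,-)$ attached to the four morphisms $g_k\in\{i_+,j_+,j_-,i_-\}$ of this zig-zag, because the vertical differential $\delta$ identifies the source node of each $g_k$ with the $\LLL(g_k)$-image of its target node. This is exactly the mechanism behind the counit homotopy $\rho$ in the proof of Theorem~\ref{theo:rectification}; the one new ingredient is that the chosen quasi-inverses $\LLL(g_k)^{-1}$ occurring in $Z_{(1,M)}^{(0,M)}$ satisfy $\LLL(g_k)\LLL(g_k)^{-1}-\id=\partial\lambda_{g_k}$ and $\LLL(g_k)^{-1}\LLL(g_k)-\id=\partial\gamma_{g_k}$ only up to the chain homotopies of \eqref{eqn:homotopy4quasiinverse}.

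Accordingly I would define $\theta(\omega)$ as an alternating sum, over the four morphisms $g_k$, of two kinds of terms: a ``sweep'' contribution $\pm(n_k,g_k,z_k\omega)$ in vertical degree $1$, where $z_k:\LLL(M)\to\LLL(\mathsf{s}g_k)$ is the chain map built from the morphisms and chosen quasi-inverses along the portion of the zig-zag joining $(1,M)$ to $\mathsf{s}g_k$; and correction contributions in vertical degree $0$ assembled from the zig-zagging homotopies $\Lambda$ and $\Gamma$ of \eqref{eqn:zigzagginghomotopyh}--\eqref{eqn:zigzagginghomotopyk} (equivalently, from the $\lambda_{g_k},\gamma_{g_k}$ pre- and post-composed with the $z_k$), whose function is to repair the failure of the telescoping cancellation at each node. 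Computing $\partial\theta$ then follows the pattern of the counit calculation: the internal-differential parts $\dd^{\widetilde\LLL}\circ\theta$ cancel against $\theta\circ\dd^{\LLL(M)}$ because the $z_k$ are chain maps and by the defining identities of $\Lambda,\Gamma$; the parts $\delta\circ\theta$ telescope, the inner nodes of the zig-zag cancelling in pairs while the correction terms turn the leftover ``$\LLL(g_k)\LLL(g_k)^{-1}$'' discrepancies into genuine boundaries, so that only the two endpoint contributions $(1,M,\omega)$ and $(0,M,Z_{(1,M)}^{(0,M)}\omega)$ survive. In the linear Yang--Mills example all this data is explicit --- the $z_k$ are compositions of pushforwards with Green-operator-built quasi-inverses and the $\Lambda,\Gamma$ are assembled from the retarded/advanced Green homotopies \eqref{eqn:Greenhomotopies} together with a partition of unity --- so $\theta$ admits a closed-form description. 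The only real difficulty is the sign and index bookkeeping along the four-step zig-zag and the coordination of $\delta$ with the homotopy corrections, a ``slightly lengthy but straightforward'' check in the spirit of the proofs of Propositions~\ref{prop:homotopyPoisson} and~\ref{prop:Poissonhomotopy}.
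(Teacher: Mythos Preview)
Your approach is correct and matches the paper's in its essentials: the right square commutes strictly by the definition \eqref{eqn:bigRCEmap} of $\RCE_{(M,h)}$, and the left square is filled by a chain homotopy $\theta$ built from degree-$1$ generators $(n,f,-)$ along the four-step zig-zag together with degree-$0$ correction terms coming from the homotopy coherence data \eqref{eqn:homotopy4quasiinverse}. The paper's construction is, however, slightly cleaner and more elementary than what you sketch. Rather than invoking the zig-zagging homotopies $\Lambda$ and $\Gamma$ of Appendix~\ref{app:zigzag}, the paper factors $\theta$ into four \emph{single-step} homotopies, one for each arrow in \eqref{eqn:shortzigzag}: for an arrow $f$ pointing \emph{along} the direction of transport one simply takes $\theta_{\overleftarrow{f}}(x)=(-1)^{\vert x\vert}(n,f,x)$, while for an arrow pointing \emph{against} the transport one takes $\theta_{\overrightarrow{f}}(x)=-(-1)^{\vert x\vert}(n,f,\LLL(f)^{-1}x)-(n,\mathsf{t}f,\lambda_f x)$. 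In particular only the $\lambda_f$'s enter, not the $\gamma_f$'s (and hence not the $\Gamma$'s); your proposed use of $\Gamma$ is harmless but unnecessary. The advantage of the paper's step-by-step formulation is that each $\partial\theta_{\overleftarrow{f}}$ and $\partial\theta_{\overrightarrow{f}}$ is immediately seen to move an element one node along the zig-zag, so the telescoping is automatic and no separate ``correction'' bookkeeping is needed.
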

\begin{proof}
The right square commutes (strictly) by the definition of the RCE automorphism 
on CCR-algebras, see \eqref{eqn:bigRCEmap}.
Homotopy commutativity of the left square means that there
exists $\theta\in \hom(\LLL(M),\bbL L_!(\LLL))_1$ such that
$\rce_{(M,h)}^{}\,\iota -\iota\,Z_{(1,M)}^{(0,M)}= \partial \theta
= (\delta^{\widetilde{\LLL}}+\dd^{\widetilde{\LLL}}) \theta + \theta\,\dd^{\LLL(M)}$,
where we recall that
$\bbL L_!(\LLL)=\Tot(\widetilde{\LLL})$ is obtained as the $\bigoplus$-totalization 
of a bicomplex \eqref{eqn:tildeXbicomplex}
whose horizontal and vertical differentials are denoted here by $\dd^{\widetilde{\LLL}}$ 
and $\delta^{\widetilde{\LLL}}$. Evaluating on any $x \in\LLL(M)$ and recalling the RCE automorphism
on linear observables \eqref{eqn:smallRCEmap}, the relevant homotopy relation is given by
\begin{flalign}\label{eqn:lambdahomotopy}
\big(1,M,x\big)- \big(0,M,Z_{(1,M)}^{(0,M)}x\big) \,=\,(\delta^{\widetilde{\LLL}}+\dd^{\widetilde{\LLL}}) \theta(x) + \theta(\dd^{\LLL(M)} x)\quad. 
\end{flalign}
We construct the chain homotopy $\theta$ by transporting the element
$(1,M,x)\in \bbL L_!(\LLL)$ step by step from right to left along the relevant part
\begin{flalign}\label{eqn:shortzigzag}
\xymatrix{
(0,M) &\ar[l]_-{i_-} (1,M_-) \ar[r]^-{j_-}& (1,M_h)&\ar[l]_-{j_+} (1,M_+) \ar[r]^-{i_+} & (1,M)
}
\end{flalign}
of the zig-zags in \eqref{eqn:spiral}. 
\sk

To describe the chain homotopies for
the individual steps, let $f : (n+L(f) , \mathsf{s}f)\to (n,\mathsf{t}f)$
be any of the morphisms in \eqref{eqn:shortzigzag}. 
If $f$ points {\em from right to left}
in \eqref{eqn:shortzigzag}, we define the homotopy
$\theta_{\overleftarrow{f}}\in \hom(\LLL(\mathsf{s}f),\bbL L_!(\LLL))_1$ by
\begin{subequations}
\begin{flalign}
\theta_{\overleftarrow{f}}(x) \,:=\, (-1)^{\vert x\vert} \,\big(n,f,x\big)\quad,
\end{flalign}
which satisfies
\begin{flalign}
\partial \theta_{\overleftarrow{f}}(x) = \big(n+L(f),\mathsf{s}f,x\big) -\big(n,\mathsf{t} f, \LLL(f)x\big)\quad,
\end{flalign}
\end{subequations}
for all $x\in \LLL(\mathsf{s}f)$. (Observe that this homotopy moves the element
$(n+L(f),\mathsf{s}f,x)$ from right to left in \eqref{eqn:shortzigzag}.)
If $f$ points {\em from left to right}
in \eqref{eqn:shortzigzag}, we define the homotopy
$\theta_{\overrightarrow{f}}\in \hom(\LLL(\mathsf{t}f),\bbL L_!(\LLL))_1$ by
\begin{subequations}
\begin{flalign}
\theta_{\overrightarrow{f}}(x) \,:=\, - (-1)^{\vert x\vert} \,\big(n,f,\LLL(f)^{-1} x\big) -\big(n,\mathsf{t}f, \lambda_f x\big)\quad,
\end{flalign}
where $\LLL(f)^{-1}$  is a quasi-inverse of $\LLL(f)$ and $\lambda_f$ its associated homotopy coherence 
data from \eqref{eqn:homotopy4quasiinverse}, which satisfies
\begin{flalign}
\partial \theta_{\overrightarrow{f}}(x) = \big(n,\mathsf{t}f,x\big) -\big(n+L(f),\mathsf{s} f, \LLL(f)^{-1}x\big)\quad,
\end{flalign}
\end{subequations}
for all $x\in \LLL(\mathsf{t}f)$.  (Observe that this homotopy moves the element
$(n,\mathsf{t}f,x)$ from right to left in \eqref{eqn:shortzigzag}.)
Composing these basic homotopies defines the required chain homotopy in
\eqref{eqn:lambdahomotopy}.
\end{proof}
\begin{rem}
Because $\iota :  \LLL(M)\to \bbL L_!(\LLL)$ is a quasi-isomorphism,
our model given in Proposition \ref{propo:rcequantumfield}
for the relative Cauchy evolution for this linear quantum field is unique up to homotopy.
Note further that this model agrees with the ``naive approach'' consisting of quasi-inverting
the quasi-isomorphisms associated with the Cauchy morphisms in \eqref{eqn:RCEcategory} 
since the relevant zig-zagging chain map \eqref{eqn:zigzagrceexplicit}
is of the same form as the usual relative Cauchy evolution formula in \eqref{eqn:RCEA}.
It is important to stress that this ``naive approach'' works only at the level of 
linear quantum fields because the chain map \eqref{eqn:zigzagrceexplicit} does in general {\em not}
preserve the Poisson structure $\tau_M : \LLL(M)\wedge\LLL(M)\to \bbR$ 
and, as a consequence, it does {\em not} lift to the CCR-algebra $\CCR(\LLL(M),\tau_M)$. 
Hence, the role of our rectification Theorem \ref{theo:rectificationhomotopyAQFT}
is to establish {\em existence} of a well-behaved concept of relative Cauchy evolution for linear homotopy AQFTs,
while the role of Proposition \ref{propo:rcequantumfield} is to provide explicit {\em computational  tools}
to study the relative Cauchy evolution at the level of linear quantum fields.
\end{rem}

In order to obtain an explicit formula for the relative Cauchy evolution
for linear quantum fields determined by Proposition \ref{propo:rcequantumfield},
we have to construct explicitly the necessary quasi-inverses in \eqref{eqn:zigzagrceexplicit}
and their corresponding homotopy coherence data \eqref{eqn:homotopy4quasiinverse}.
Without much extra effort, we will perform this construction for a general
Cauchy morphism $f: N\to N^\prime$ in $\Loc$. Let us choose any two Cauchy surfaces
$\Sigma_\pm\subseteq N$ in the source spacetime $N$ such that
$\Sigma_+$ lies in the future of $\Sigma_-$, i.e.\ $\Sigma_+ \subseteq I^+_N(\Sigma_-)$.
We denote by $\Sigma_\pm^\prime := f(\Sigma_\pm)\subseteq N^\prime$
their images under $f$, which are Cauchy surfaces in $N^\prime$ because $f$ is by
hypothesis a Cauchy morphism. We then pick any partition of unity
$\rho^\prime_\pm$ subordinate to the open cover
$\{I^+_{N^\prime}(\Sigma^\prime_-), I^-_{N^\prime}(\Sigma^\prime_+)\}$ of $N^\prime$
and denote by $\rho_\pm := f^\ast(\rho_\pm^\prime)$ its pullback along $f$,
which defines a partition of unity subordinate to the open cover
$\{I^+_{N}(\Sigma_-), I^-_{N}(\Sigma_+)\}$ of $N$.
From this data and the retarded/advanced Green homotopies
in \eqref{eqn:Greenhomotopies}, we define the following 
two chain homotopies
\begin{subequations}\label{eqn:tildehomotopiesexplicit}
\begin{flalign}
\widetilde{\lambda}_f\,&:=\, - \rho_+^\prime\, \mathcal{G}_{N^\prime}^- - \rho_-^\prime\, \mathcal{G}_{N^\prime}^+\,\in\,\hom\big(\LLL(N^\prime),\Sol(N^\prime)[1]\big)_1\quad,\\
\widetilde{\gamma}_f\,&:=\,-  \rho_+\, \mathcal{G}_{N}^- - \rho_-\, \mathcal{G}_{N}^+\,\in\,\hom\big(\LLL(N),\Sol(N)[1]\big)_1\quad.
\end{flalign}
\end{subequations}
\begin{lem}\label{lem:homotopiesexplicit}
The chain homotopy $\widetilde{\lambda}_f \in\hom(\LLL(N^\prime),\Sol(N^\prime)[1])_1$ 
factors uniquely through the chain map
$j_{N^{\prime}} : \LLL(N^\prime)\to \Sol(N^\prime)[1]$ given in \eqref{eqn:jmapping}, i.e.\ there exists a unique
chain homotopy $\lambda_f\in \hom(\LLL(N^\prime),\LLL(N^\prime))_1$ such that 
\begin{flalign}
\widetilde{\lambda}_f \,=\, j_{N^\prime}\,\lambda_f\quad.
\end{flalign}
In complete analogy, the chain homotopy $\widetilde{\gamma}_f \in\hom(\LLL(N),\Sol(N)[1])_1$ 
factors uniquely through 
$j_{N} : \LLL(N)\to \Sol(N)[1]$, defining a unique
chain homotopy $\gamma_f\in \hom(\LLL(N),\LLL(N))_1$ such that 
\begin{flalign}\label{eqn:gammawithouttilde}
\widetilde{\gamma}_f  \,=\, j_{N}\,\gamma_f\quad.
\end{flalign}
\end{lem}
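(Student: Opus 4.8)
The crucial point is that, in every homological degree, the chain map $j_{N'}:\LLL(N')\to\Sol(N')[1]$ of \eqref{eqn:jmapping} is, up to a sign, the inclusion of compactly supported forms into all differential forms; in particular it is degreewise injective. Hence a factorization $\widetilde{\lambda}_f=j_{N'}\,\lambda_f$, if it exists, is automatically unique, and it exists if and only if $\widetilde{\lambda}_f$ maps $\LLL(N')$ into the subspace of compactly supported forms, i.e.\ into the image of $j_{N'}$; in that case one simply sets $\lambda_f:=j_{N'}^{-1}\,\widetilde{\lambda}_f$ degreewise, which is the claimed element of $\hom(\LLL(N'),\LLL(N'))_1$. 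The same remark applies verbatim to $\widetilde{\gamma}_f$ and $j_N$, so the whole lemma reduces to a support estimate, which I would carry out first for $\widetilde{\lambda}_f$ and then copy for $\widetilde{\gamma}_f$.

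\textbf{The estimate for $\widetilde{\lambda}_f$.} Fix $\omega\in\LLL(N')$. From the explicit formulas \eqref{eqn:Greenhomotopies3}, the standard support property $\supp(G^\pm_{N'}u)\subseteq J^\pm_{N'}(\supp u)$ of the retarded/advanced Green operators, and the fact that $\dd_{N'}$ and $\delta_{N'}$ do not enlarge supports, one obtains $\supp(\mathcal{G}^\pm_{N'}\omega)\subseteq J^\pm_{N'}(\supp\omega)$. Using $\supp\rho'_+\subseteq I^+_{N'}(\Sigma'_-)\subseteq J^+_{N'}(\Sigma'_-)$ and $\supp\rho'_-\subseteq I^-_{N'}(\Sigma'_+)\subseteq J^-_{N'}(\Sigma'_+)$ this gives
\[
\supp\big(\rho'_+\,\mathcal{G}^-_{N'}\omega\big)\subseteq J^+_{N'}(\Sigma'_-)\cap J^-_{N'}(\supp\omega)\,,
\]
and, symmetrically, $\supp(\rho'_-\,\mathcal{G}^+_{N'}\omega)\subseteq J^-_{N'}(\Sigma'_+)\cap J^+_{N'}(\supp\omega)$. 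Both of these sets are compact, because in a globally hyperbolic spacetime the intersection of the causal future (resp.\ past) of a Cauchy surface with the causal past (resp.\ future) of a compact set is compact. Hence $\widetilde{\lambda}_f\omega=-\rho'_+\mathcal{G}^-_{N'}\omega-\rho'_-\mathcal{G}^+_{N'}\omega$ is compactly supported, so $\widetilde{\lambda}_f$ factors through $j_{N'}$.

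\textbf{The estimate for $\widetilde{\gamma}_f$.} One repeats the argument inside $N$: now $\Sigma_\pm\subseteq N$ are Cauchy surfaces of $N$, the pulled-back functions $\rho_\pm=f^\ast\rho'_\pm$ form a partition of unity with $\supp\rho_+\subseteq I^+_N(\Sigma_-)$ and $\supp\rho_-\subseteq I^-_N(\Sigma_+)$, and the identical support bound on $\mathcal{G}^\pm_N$ together with the same compactness fact show that $\widetilde{\gamma}_f\omega$ has compact support for every $\omega\in\LLL(N)$, yielding the unique factorization $\widetilde{\gamma}_f=j_N\,\gamma_f$ with $\gamma_f\in\hom(\LLL(N),\LLL(N))_1$.

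\textbf{Main obstacle.} The only genuinely geometric input --- and the point I would be most careful about --- is the compactness of the ``lens'' regions $J^+(\Sigma)\cap J^-(K)$ for a Cauchy surface $\Sigma$ and a compact set $K$. This follows from two classical facts about globally hyperbolic spacetimes, namely that $J^+(K_1)\cap J^-(K_2)$ is compact for compact $K_1,K_2$ and that a Cauchy surface meets the causal future or past of a compact set in a compact set: combining these with the inclusion $J^+(\Sigma)\cap J^-(K)\subseteq J^+\big(\Sigma\cap J^-(K)\big)\cap J^-(K)$ and the closedness of $J^\pm$ of a compact set and of $J^+(\Sigma)$ for a Cauchy surface gives the claim. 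Everything else is bookkeeping with supports and with the degreewise injectivity of $j_{N'}$ and $j_N$.
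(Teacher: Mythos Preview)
Your proposal is correct and follows essentially the same strategy as the paper: uniqueness via degreewise injectivity of $j_{N'}$ (resp.\ $j_N$), and existence via a support estimate showing that $\widetilde{\lambda}_f(\omega)$ (resp.\ $\widetilde{\gamma}_f(\omega)$) is compactly supported. The only difference is packaging: where you explicitly identify the ``lens'' regions $J^+_{N'}(\Sigma'_-)\cap J^-_{N'}(\supp\omega)$ and argue directly for their compactness, the paper phrases the same estimate more abstractly by noting that $\mathcal{G}^\pm_{N'}(\omega)$ has strictly past/future compact support while $\rho'_\mp$ has past/future compact support, and then invokes \cite[Lemma~1.9]{Baer} to conclude compactness of the product's support.
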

\begin{proof}
The two proofs are identical, hence we shall write out only the one for $\widetilde{\gamma}_f$.
First, let us note that uniqueness of $\gamma_f$ follows from the fact that the chain map $j_N$ 
in \eqref{eqn:jmapping} is degree-wise injective. Concerning existence, 
we have to show that the image 
\begin{flalign}\label{eqn:tmpsupport}
\widetilde{\gamma}_f(\omega) = - \rho_+\, \mathcal{G}_{N}^-(\omega) - \rho_-\, \mathcal{G}_{N}^+(\omega)
\in\Sol(N)[1]
\end{flalign} 
of every element  $\omega\in\LLL(N)$ is a compactly supported differential form.
Recalling \eqref{eqn:Greenhomotopies3}, the support of $ \mathcal{G}_{N}^\pm(\omega)$ 
is strictly past/future compact, i.e.\ $\supp(\mathcal{G}_{N}^\pm(\omega))\subseteq J^\pm_N(K)$
for a compact subset $K\subseteq N$, because $ \mathcal{G}_{N}^\pm$ is a composition
of a retarded/advanced Green operator and a differential operator. 
Furthermore, the support of $\rho_\pm$ is by construction past/future compact.
It then follows from \cite[Lemma 1.9]{Baer} that both terms in \eqref{eqn:tmpsupport} are compactly supported
differential forms, hence so is $\widetilde{\gamma}_f(\omega) $.
\end{proof}
\begin{rem}
For later reference, we record explicit expressions for
the two chain homotopies constructed in Lemma \ref{lem:homotopiesexplicit}.
The chain homotopy $\lambda_f\in\hom(\LLL(N^\prime),\LLL(N^\prime))_1$ is given by
\begin{subequations}\label{eqn:hfcomponents}
\begin{flalign}
\lambda_f\, =\, - \rho^\prime_+\, G_{N^\prime}^-\, Q_{N^\prime} - \rho^\prime_-\, G_{N^\prime}^+ \, Q_{N^\prime}\quad,
\end{flalign}
where $G_{N^\prime}^\pm$ denotes the retarded/advanced
Green operator for the d'Alembert operator on differential forms and
\begin{flalign}\label{eqn:Qmap}
Q_{N^\prime}(\chi^\prime) =-\dd_{N^\prime} \chi^\prime~~,\quad 
Q_{N^\prime}(\varphi^\prime)=\varphi^\prime~~,\quad
Q_{N^\prime}(\alpha^\prime) =-\delta_{N^\prime}\alpha^\prime~~,\quad
Q_{N^\prime}(\beta^\prime)= 0\quad,
\end{flalign}
\end{subequations}
for all $\chi^\prime\in \LLL(N^\prime)_{-1}$, $\varphi^\prime\in\LLL(N^\prime)_0$,
$\alpha^\prime\in\LLL(N^\prime)_{1}$ and $\beta^\prime\in\LLL(N^\prime)_2$.
The chain homotopy $\gamma_f\in\hom(\LLL(N),\LLL(N))_1$ is given by
\begin{flalign}\label{eqn:kfcomponents}
\gamma_f \,=\, - \rho_+\, G_{N}^- \,Q_{N} - \rho_-\, G_{N}^+ \,Q_{N}\quad,
\end{flalign}
with $Q_N$ defined analogously to $Q_{N^\prime}$ in \eqref{eqn:Qmap}.
\end{rem}

Consider now the chain map
$j_{N^\prime} + \partial_{N^\prime} \widetilde{\lambda}_f : \LLL(N^\prime)\to \Sol(N^\prime)[1]$.
It is easy to see that the image 
$j_{N^\prime}(\omega) + \partial_{N^\prime} \widetilde{\lambda}_f (\omega)\in \Sol(N^\prime)[1]$
of every element $\omega\in\LLL(N^\prime)$ is a differential form supported 
in the closed time-slab $J^-_{N^\prime}(\Sigma^\prime_+) 
\cap J^+_{N^\prime}(\Sigma^\prime_-)\subseteq f(N)\subseteq N^\prime$
that is contained in the image of $f$. Indeed, restricting this differential form to the open subset
$U_+ := I^+_{N^\prime}(\Sigma^\prime_+)\subseteq N^\prime$, we obtain
\begin{flalign}
\big(j_{N^\prime}(\omega) + \partial_{N^\prime}\widetilde{\lambda}_f (\omega)\big)\big\vert_{U_+}
= j_{N^\prime}(\omega)\big\vert_{U_+} - \partial_{N^\prime}\mathcal{G}_{N^\prime}^- (\omega)\big\vert_{U_+}
= 0\quad,
\end{flalign}
where in the first equality we have used that $\rho_+^\prime\vert_{U_+}=1$ and
$\rho_-^\prime\vert_{U_+}=0$, and in the second equality we have used \eqref{eqn:Greenhomotopies2}.
A similar argument shows that the restriction to $U_- := I^-_{N^\prime}(\Sigma^\prime_-)\subseteq N^\prime$
is zero too. In terms of the untilded chain homotopies from Lemma \ref{lem:homotopiesexplicit},
this determines a chain map
\begin{flalign}\label{eqn:fNmapping}
\id + \partial_{N^\prime} \lambda_f \,:\, \LLL(N^\prime) ~\longrightarrow~\LLL(f(N))
\end{flalign}
that takes values in the sub-chain complex $\LLL(f(N))\subseteq \LLL(N^\prime)$ 
of linear observables on the image of $f:N\to N^\prime$.
Post-composing with the pullback $f^\ast : \LLL(f(N))\to\LLL(N)$ of differential forms
along the isomorphism $f:N\to f(N)$, we obtain a chain map
\begin{flalign}\label{eqn:quasiinverseexplicit}
\LLL(f)^{-1}\,:=\, f^\ast \,\big(\id + \partial_{N^\prime} \lambda_f\big)\,:\, \LLL(N^\prime)~\longrightarrow~\LLL(N)\quad.
\end{flalign}
We will now prove that \eqref{eqn:quasiinverseexplicit}  is indeed a
quasi-inverse of $\LLL(f) : \LLL(N)\to\LLL(N^\prime)$ and that the chain homotopies
from Lemma \ref{lem:homotopiesexplicit} provide the necessary homotopy coherence data
in \eqref{eqn:homotopy4quasiinverse} with trivial $2$-homotopy $\xi_f=0$.
\begin{propo}
For every Cauchy morphism $f:N\to N^\prime$ in $\Loc$, 
the following three identities hold true
\begin{subequations}
\begin{flalign}
\LLL(f)\,\LLL(f)^{-1} -\id &\,= \, \partial_{N^\prime} \lambda_{f}\quad,\\
\LLL(f)^{-1}\,\LLL(f) - \id &\, =\, \partial_N \gamma_f\quad,\\
\LLL(f)\,\gamma_f  -\lambda_f\,\LLL(f) &\,=\, 0\quad,
\end{flalign}
\end{subequations}
where the chain map $\LLL(f)^{-1}$ is defined in \eqref{eqn:quasiinverseexplicit} and
the chain homotopies $\lambda_f$ and $\gamma_f$ are defined in Lemma \ref{lem:homotopiesexplicit}.
\end{propo}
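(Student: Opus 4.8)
The plan is to establish the three identities in the order (first), (third), (second): the first is essentially a restatement of the construction of $\LLL(f)^{-1}$, the third is the one requiring a genuine (if short) computation from the explicit formulas for $\lambda_f$ and $\gamma_f$, and the second is then a purely formal consequence of the third. Along the way the third identity also yields the claim $\xi_f=0$ recorded before the proposition.

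For the first identity I would argue as follows. By \eqref{eqn:fNmapping} the chain map $\id+\partial_{N^\prime}\lambda_f$ already takes values in the subcomplex $\LLL(f(N))\subseteq\LLL(N^\prime)$ of compactly supported forms on the image of $f$. Since $f^\ast\colon\LLL(f(N))\to\LLL(N)$ is the restriction isomorphism along the diffeomorphism $f\colon N\xrightarrow{\sim}f(N)$ while $\LLL(f)=f_\ast$ is extension by zero, the composite $\LLL(f)\,f^\ast$ restricted to $\LLL(f(N))$ is just the inclusion back into $\LLL(N^\prime)$. Hence $\LLL(f)\,\LLL(f)^{-1}=\LLL(f)\,f^\ast\,(\id+\partial_{N^\prime}\lambda_f)=\id+\partial_{N^\prime}\lambda_f$, which is the first identity.

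For the third identity I would compute $\LLL(f)\,\gamma_f=f_\ast\circ\gamma_f$ and $\lambda_f\,\LLL(f)=\lambda_f\circ f_\ast$ directly from \eqref{eqn:kfcomponents} and \eqref{eqn:hfcomponents} and verify that they coincide on the nose. The input is: (i) the operators $\dd$, $\delta$, and therefore $Q$, commute with the pushforward $f_\ast$; (ii) $\rho_\pm=f^\ast\rho_\pm^\prime$, so that $f_\ast(\rho_\pm\,\eta)=\rho_\pm^\prime\cdot f_\ast\eta$ for compactly supported $\eta$ on $N$; and (iii) naturality of the retarded/advanced Green operators for $\Loc$-morphisms, $f_\ast\,G^\pm_N=G^\pm_{N^\prime}\,f_\ast$, which on the relevant forms amounts to comparing $G^\mp_{N^\prime}Q_{N^\prime}(f_\ast\omega)$ with $G^\mp_N Q_N\omega$ on $f(N)$ and follows from uniqueness of the Green operators together with the causal compatibility of $f(N)\subseteq N^\prime$. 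Putting these together, for $\omega\in\LLL(N)$ one finds $f_\ast\gamma_f(\omega)=-\rho_+^\prime\,G^-_{N^\prime}Q_{N^\prime}(f_\ast\omega)-\rho_-^\prime\,G^+_{N^\prime}Q_{N^\prime}(f_\ast\omega)=\lambda_f(f_\ast\omega)$. The point requiring care, and the only step that is not formal, is support-tracking: every application of $f_\ast$ above is to a compactly supported form (guaranteed by Lemma \ref{lem:homotopiesexplicit}), and $\rho_+^\prime\cdot G^-_{N^\prime}Q_{N^\prime}(f_\ast\omega)$ is in fact supported inside $f(N)$, since a point of $\supp\rho_+^\prime\subseteq I^+_{N^\prime}(\Sigma_-^\prime)$ lying in the causal past of the compact set $\supp Q_{N^\prime}(f_\ast\omega)\subseteq f(N)$ sits on a causal curve joining two points of $f(N)$ and hence lies in $f(N)$ by causal convexity of the image; the $\rho_-^\prime$-term is treated symmetrically. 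This is where the defining properties of $\Loc$-morphisms enter, exactly as in the underlying functoriality of the Green homotopies in \cite{LinearYM}.

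Finally, the second identity follows from the third. Using $f^\ast\circ f_\ast=\id$ on $\LLL(N)$ one has $\LLL(f)^{-1}\LLL(f)=f^\ast(\id+\partial_{N^\prime}\lambda_f)f_\ast=\id+f^\ast(\partial_{N^\prime}\lambda_f)f_\ast$. Since $f^\ast$ and $f_\ast$ are chain maps, the degree-$0$ map $f^\ast(\partial_{N^\prime}\lambda_f)f_\ast$ equals $\partial_N(f^\ast\lambda_f f_\ast)$, and $f^\ast\lambda_f f_\ast=f^\ast(\lambda_f f_\ast)=f^\ast(f_\ast\gamma_f)=\gamma_f$ by the third identity and $f^\ast f_\ast=\id$. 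Therefore $\LLL(f)^{-1}\LLL(f)-\id=\partial_N\gamma_f$, completing the proof.
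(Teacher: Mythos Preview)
Your proof is correct and follows essentially the same approach as the paper: the first identity is immediate from the definition via $f_\ast f^\ast=\id$ on $\LLL(f(N))$, and the remaining two both reduce to the single computation that $f^\ast\lambda_f f_\ast=\gamma_f$ (equivalently $f_\ast\gamma_f=\lambda_f f_\ast$), which is checked from the explicit formulas \eqref{eqn:hfcomponents}, \eqref{eqn:kfcomponents} using naturality of $Q$ and of the Green operators together with $\rho_\pm=f^\ast\rho_\pm^\prime$.

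The only organizational difference is that the paper proves the identities in order (first), (second), (third), carrying out the naturality computation twice (once at the level of the tilded homotopies $\widetilde{\lambda}_f,\widetilde{\gamma}_f$ for the second identity, once at the level of $\lambda_f,\gamma_f$ for the third), whereas you prove (first), (third), (second) and derive the second identity formally from the third via $f^\ast f_\ast=\id$. Your ordering is slightly more economical. One small point of exposition: your item (iii), written as $f_\ast G_N^\pm=G_{N'}^\pm f_\ast$, is not literally an identity of maps into $\LLL(N')$ since $G_{N'}^\pm f_\ast\omega$ need not be supported in $f(N)$; the correct naturality statement is $f^\ast G_{N'}^\pm f_\ast=G_N^\pm$, i.e.\ agreement upon restriction to $f(N)$. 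You clearly understand this, since your causal-convexity support argument is precisely what is needed to close the gap, but it would be cleaner to state (iii) in that form from the outset.
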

\begin{proof}
The first identity is a simple check
\begin{flalign}
\LLL(f)\,\LLL(f)^{-1} -\id = f_\ast\, f^\ast\,\big(\id + \partial_{N^\prime} \lambda_f\big) - \id = \partial_{N^\prime} \lambda_f\quad,
\end{flalign}
where in the second step we have used that $f_\ast\,f^\ast =\id$ on $\LLL(f(N))$. 
Because the chain map $j_N$ is degree-wise injective,
the second identity is equivalent to $j_N \, \LLL(f)^{-1}\,\LLL(f) - j_N = \partial_N \widetilde{\gamma}_f$.
Using \eqref{eqn:gammawithouttilde} and the fact that $j_N$ is a chain map,
one checks the latter identity as follows
\begin{flalign}
j_N \, \LLL(f)^{-1}\,\LLL(f) - j_N = j_N\,f^\ast\,\big(\id+\partial_{N^\prime} \lambda_f\big)\,f_\ast - j_N
=\partial_N\big(f^\ast\, \widetilde{\lambda}_f \,f_\ast\big) = \partial_N\widetilde{\gamma}_f\quad,
\end{flalign}
where in the second step we have  also used that $f^\ast\, f_\ast =\id$ on $\LLL(N)$
and naturality of $j_N$ and $\partial_N$. The last step follows from \eqref{eqn:tildehomotopiesexplicit}
and naturality of the Green homotopies. Explicitly,
\begin{flalign}
f^\ast\, \widetilde{\lambda}_f \,f_\ast = - f^\ast\,\big(\rho_+^\prime\, \mathcal{G}_{N^\prime}^- + 
\rho_-^\prime\, \mathcal{G}_{N^\prime}^+\big) f_\ast = - \rho_+ \,f^\ast\, \mathcal{G}_{N^\prime}^- \,f_\ast
- \rho_-\, f^\ast \,\mathcal{G}_{N^\prime}^+\,f_\ast=\widetilde{\gamma}_f\quad,
\end{flalign}
where we recall that, by definition, $ f^\ast(\rho^\prime_\pm) =\rho_\pm $.
\sk

To prove the third identity, we use the explicit expressions 
in \eqref{eqn:hfcomponents} and \eqref{eqn:kfcomponents} for the
untilded chain homotopies. 
Because both terms in $\LLL(f) \gamma_f - \lambda_f\,\LLL(f)$ take values in differential forms
supported in the image $f(N)\subseteq N^\prime$ of $f$, we may
prove the third identity by post-composing with $f^\ast$.
The relevant calculation is then given by
\begin{flalign}
\nn f^\ast \,\lambda_f\, \LLL(f) &= - f^\ast\, \rho^\prime_+\, G_{N^\prime}^-\, Q_{N^\prime} \,f_\ast 
-f^\ast\, \rho^\prime_-\, G_{N^\prime}^+ \, Q_{N^\prime}\,f_\ast \\
 &=-  \rho_+\, G_{N}^-\,  Q_{N}
- \rho_-\, G_{N}^+ \, Q_{N} = \gamma_f = f^\ast\, \LLL(f)\, \gamma_f\quad.
\end{flalign}
In the second step we have used $f^\ast(\rho^\prime_\pm) =\rho_\pm $
and naturality of $Q_N$ and $G^\pm_N$.
The last step follows from the fact that $f^\ast \,f_\ast=\id$ on $\LLL(N)$.
\end{proof}

With these preparations, we can now write down an explicit formula
for the zig-zagging chain map \eqref{eqn:zigzagrceexplicit} that models
the relative Cauchy evolution at the level of linear quantum fields.
Because each of the morphisms $f$ in the category $\CC$ in \eqref{eqn:RCEcategory}
is a subset inclusion, we can suppress all occurrences 
of pullbacks $f^\ast$ and pushforwards $f_\ast$ of differential forms 
as these are simply restrictions and extensions by zero.
The explicit formula  is then given by
\begin{flalign}
\nn Z_{(1,M)}^{(0,M)}\,&=\,\big(\id + \partial_{M_{h}}\lambda_{j_-}\big)\, \big(\id + \partial_{M}\lambda_{i_+}\big)\\
\nn \,&=\, \id + \big(\partial_{M_h} \lambda_{j_-}\big)\, \big(\id + \partial_M \lambda_{i_+}\big) + \partial_M \lambda_{i_+}\\
\,&=\, \id + \big((\partial_{M_h}-\partial_M) \lambda_{j_-}\big)\, \big(\id + \partial_M \lambda_{i_+}\big) + \partial_M 
\Big(\lambda_{i_+} + \lambda_{j_-}\,\big(\id + \partial_M \lambda_{i_+}\big)\Big)\quad.\label{eqn:Zexplicit}
\end{flalign}
Because homotopy commutativity of the diagram in Proposition \ref{propo:rcequantumfield} determines
the chain map  $Z_{(1,M)}^{(0,M)}:\LLL(M)\to\LLL(M)$ only up to homotopy, we can drop the
last term in \eqref{eqn:Zexplicit} and consider the chain map
\begin{flalign}\label{eqn:rcelinearexplicit}
\rce_{(M,h)}^{\mathrm{lin}}\,:=\,  \id + \big((\partial_{M_h}-\partial_M) \lambda_{j_-}\big)\, \big(\id + \partial_M \lambda_{i_+}\big)\,:\,
\LLL(M)~\longrightarrow~\LLL(M)
\end{flalign}
as an equivalent model for the relative Cauchy evolution for the linear quantum field.
\begin{propo}\label{propo:rcemapsexplicit}
The chain map \eqref{eqn:rcelinearexplicit} can be simplified as
\begin{flalign}\label{eqn:rcelinearexplicit2}
\rce_{(M,h)}^{\mathrm{lin}} \,=\, \id +  \big(\dd^{\LLL(M_h)} -\dd^{\LLL(M)}\big)
\, G_{M_h}\,Q_{M_h}\, \big(\id + \partial_M \lambda_{i_+}\big)\quad,
\end{flalign}
where $G_{M_h}$ is the causal propagator for the d'Alembert operator on differential forms 
and $Q_{M_h}$ is the differential operator defined in \eqref{eqn:Qmap} on the 
perturbed spacetime $M_h\in\Loc$.
\end{propo}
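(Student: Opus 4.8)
The plan is to start from the definition \eqref{eqn:rcelinearexplicit} of $\rce_{(M,h)}^{\mathrm{lin}}$, substitute the explicit formula \eqref{eqn:hfcomponents} for the chain homotopy $\lambda_{j_-}$, and then use support properties of the Green operators together with the standard geometry of the relative Cauchy evolution to collapse the resulting expression. The first observation is that, by construction (see the discussion around \eqref{eqn:fNmapping}), the chain map $\id+\partial_M\lambda_{i_+}$ takes values in the subcomplex $\LLL(M_+)\subseteq\LLL(M)$ of forms supported in $M_+=M\setminus J^-_M(\supp h)$; since $\supp h\subseteq J^-_M(\supp h)$, such forms vanish on an open neighbourhood of $\supp h$. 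The second observation is that the operator $\dd^{\LLL(M_h)}-\dd^{\LLL(M)}$ is ``supported on $\supp h$'': the de~Rham parts cancel, its only non-zero components are $-(\delta_{M_h}-\delta_M)$ and $(\delta_{M_h}-\delta_M)\,\dd$, and $\delta_{M_h}-\delta_M$ is a first-order differential operator whose coefficients (built from $\ast_{M_h}-\ast_M$) vanish outside $\supp h$. Hence $(\dd^{\LLL(M_h)}-\dd^{\LLL(M)})\circ A=0$ for every linear map $A$ whose image consists of forms vanishing near $\supp h$.

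Writing $\partial_{M_h}\kappa=\dd^{\LLL(M_h)}\kappa+\kappa\,\dd^{\LLL(M_h)}$ and analogously for $\partial_M$ on a degree-$1$ map $\kappa$, one obtains
\[
(\partial_{M_h}-\partial_M)\lambda_{j_-}=\big(\dd^{\LLL(M_h)}-\dd^{\LLL(M)}\big)\lambda_{j_-}+\lambda_{j_-}\big(\dd^{\LLL(M_h)}-\dd^{\LLL(M)}\big)\quad.
\]
Post-composing with $\id+\partial_M\lambda_{i_+}$, the second summand is annihilated by the two observations above, so $\rce_{(M,h)}^{\mathrm{lin}}=\id+\big((\dd^{\LLL(M_h)}-\dd^{\LLL(M)})\lambda_{j_-}\big)(\id+\partial_M\lambda_{i_+})$. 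It therefore remains to evaluate $(\dd^{\LLL(M_h)}-\dd^{\LLL(M)})\lambda_{j_-}$ on a form $\omega$ supported in $M_+$. Inserting \eqref{eqn:hfcomponents}, $\lambda_{j_-}=-\rho^\prime_+\,G^-_{M_h}Q_{M_h}-\rho^\prime_-\,G^+_{M_h}Q_{M_h}$, where $\rho^\prime_\pm$ is the partition of unity on $M_h$ used in the construction of $\LLL(j_-)^{-1}$, with $\supp\rho^\prime_-$ contained in the past of a Cauchy surface $\Sigma^\prime_+\subseteq M_-$ of $M_h$.

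The geometric input I would invoke, in the standard form used for the relative Cauchy evolution (see \cite{FewsterVerch}; it rests on $h\vert_{M_\pm}=0$ and the closedness of the causal condition under limits), is the identity $M_\pm=M_h\setminus J^{\mp}_{M_h}(\supp h)$ together with the fact that any Cauchy surface of $M_-$ is also a Cauchy surface of $M_h$. From this, two consequences are used. First, for $\omega$ supported in $M_+=M_h\setminus J^-_{M_h}(\supp h)$ the retarded support estimate $\supp\big(G^+_{M_h}Q_{M_h}\omega\big)\subseteq J^+_{M_h}(\supp\omega)$ shows that $G^+_{M_h}Q_{M_h}\omega$ vanishes on a neighbourhood of $\supp h$, whence $(\dd^{\LLL(M_h)}-\dd^{\LLL(M)})G^+_{M_h}Q_{M_h}\omega=0$. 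Second, $\Sigma^\prime_+\subseteq M_-=M_h\setminus J^+_{M_h}(\supp h)$ forces $\supp h\subseteq I^+_{M_h}(\Sigma^\prime_+)$, so $\supp\rho^\prime_-$ is disjoint from $\supp h$, i.e.\ $\rho^\prime_+\equiv1$ on a neighbourhood of $\supp h$. Combining these, on a neighbourhood of $\supp h$ one has $\lambda_{j_-}\omega=-G^-_{M_h}Q_{M_h}\omega=(G^+_{M_h}-G^-_{M_h})Q_{M_h}\omega=G_{M_h}Q_{M_h}\omega$ (the $G^+$-term being added at no cost), and since $\dd^{\LLL(M_h)}-\dd^{\LLL(M)}$ only probes a neighbourhood of $\supp h$ this gives $(\dd^{\LLL(M_h)}-\dd^{\LLL(M)})\lambda_{j_-}\omega=(\dd^{\LLL(M_h)}-\dd^{\LLL(M)})G_{M_h}Q_{M_h}\omega$. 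Substituting back yields \eqref{eqn:rcelinearexplicit2}.

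I expect the main obstacle to be the careful bookkeeping of supports, and in particular ensuring that all causal relations are computed consistently in $M$ and in $M_h$; the identity $M_\pm=M_h\setminus J^{\mp}_{M_h}(\supp h)$ is what makes the two viewpoints interchangeable and is really the crux of the argument. Everything else — the cancellation of the de~Rham parts in $\dd^{\LLL(M_h)}-\dd^{\LLL(M)}$, the locality of differential operators, and the support estimates for the retarded/advanced Green operators — is routine.
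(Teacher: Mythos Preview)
Your proof is correct and follows essentially the same route as the paper's own argument: reduce to forms supported in $M_+$ via $\id+\partial_M\lambda_{i_+}$, expand $(\partial_{M_h}-\partial_M)\lambda_{j_-}$ and kill the term $\lambda_{j_-}(\dd^{\LLL(M_h)}-\dd^{\LLL(M)})$ by support, then use $\rho'_+\equiv 1$ near $\supp h$ together with $\supp h\cap J^+_{M_h}(M_+)=\emptyset$ to replace $\lambda_{j_-}$ by $G_{M_h}Q_{M_h}$ modulo terms annihilated by $\dd^{\LLL(M_h)}-\dd^{\LLL(M)}$. The only cosmetic difference is that you invoke the identity $M_\pm=M_h\setminus J^\mp_{M_h}(\supp h)$ explicitly, whereas the paper uses its consequences directly.
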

\begin{proof}
This is a straightforward argument considering the supports of the differential forms involved.
By \eqref{eqn:fNmapping}, we have that $\id + \partial_M \lambda_{i_+} : \LLL(M)\to \LLL(M_+)$ takes
values in the sub-chain complex of linear observables supported in $M_+ \subseteq M$.
We then compute using \eqref{eqn:hfcomponents}
\begin{flalign}
\nn (\partial_{M_h}-\partial_M) \lambda_{j_-}\Big\vert_{\LLL(M_+)} &= \big(\dd^{\LLL(M_h)}-\dd^{\LLL(M)}\big) \,\lambda_{j_-} \Big\vert_{\LLL(M_+)}
+ \lambda_{j_-}\, \big(\dd^{\LLL(M_h)}-\dd^{\LLL(M)}\big) \Big\vert_{\LLL(M_+)}\\
\nn &= \big(\dd^{\LLL(M_h)}-\dd^{\LLL(M)}\big)\, \big(-\rho_+ \,G^-_{M_h} \, Q_{M_h} -\rho_{-}\,G^+_{M_h}\,Q_{M_h}\big)\Big\vert_{\LLL(M_+)}\\
\nn &=  - \big(\dd^{\LLL(M_h)}-\dd^{\LLL(M)}\big)\, G^-_{M_h} \, Q_{M_h} \Big\vert_{\LLL(M_+)}\\
\nn &=  \big(\dd^{\LLL(M_h)}-\dd^{\LLL(M)}\big)\, \big(G^+_{M_h}- G^-_{M_h}\big) \, Q_{M_h} \Big\vert_{\LLL(M_+)}\\
&= \big(\dd^{\LLL(M_h)}-\dd^{\LLL(M)}\big)\, G_{M_h}\,Q_{M_h}\Big\vert_{\LLL(M_+)}\quad,
\end{flalign}
where $\rho_\pm$ denotes the partition of unity arising from the choice of Cauchy surfaces in $M_-$.
In the first step we have written out the mapping complex differentials $\partial_{M_{h}}$ and $\partial_M$.
In the second step we have used that $\big(\dd^{\LLL(M_h)}-\dd^{\LLL(M)}\big) \big\vert_{\LLL(M_+)}=0$
because the differentials $\dd^{\LLL(M_h)}$ and $\dd^{\LLL(M)}$ given in 
\eqref{eqn:linearobscomplex} agree on the complement of the support of the metric perturbation.
The third step uses that $\rho_+=1$ and $\rho_-=0$ on $\supp(h)$ and that the differentials agree elsewhere.
In step four we have 
used that $\supp(h) \cap J^+_{M_h}(M_+)=\emptyset$ and the last step follows from
the definition of the causal propagator.
\end{proof}
\begin{rem}\label{rem:rcemapsexplicit}
Using \eqref{eqn:linearobscomplex} and
\eqref{eqn:Qmap}, we can spell out explicitly the components of the chain map \eqref{eqn:rcelinearexplicit2}.
For a linear ghost field observable $\chi \in\LLL(M)_{-1} = \Omega^0_\cc(M)$, we obtain
\begin{subequations}
\begin{flalign}
\nn \rce^{\mathrm{lin}}_{(M,h)} (\chi) \,&=\,\chi  +  \big(\delta_{M_h} -\delta_M\big) \,G_{M_h}\,\dd_{M_h} \big(\id + \partial_M \lambda_{i_+}\big)(\chi)\\
\,&=\,\chi  +  \big(\square_{M_h} -\square_M\big) \,G_{M_h} \big(\id + \partial_M \lambda_{i_+}\big)(\chi)\quad,
\end{flalign}
where $\square_{M_{(h)}}$ denotes the d'Alembert operator on $0$-forms
on the spacetime $M_{(h)}$. In the second step we have used that $G_{M_h}\,\dd_{M_h} = \dd_{M_h}\,G_{M_h}$ and
$\dd_{M_h}=\dd_M$ for the de Rham differential.
For a linear gauge field observable $\varphi\in\LLL(M)_0 = \Omega^1_\cc(M)$, we obtain
\begin{flalign}\label{eqn:rcevarphi}
\rce^{\mathrm{lin}}_{(M,h)} (\varphi)\,&=\, \varphi + \big(\delta_{M_h}\dd_{M_h} -\delta_M\dd_M\big) \, G_{M_h}\big(\id + \partial_M \lambda_{i_+}\big)(\varphi)\quad.
\end{flalign}
Finally, for linear antifield observables $\alpha\in \LLL(M)_{1} = \Omega^1_\cc(M)$ and
$\beta\in\LLL(M)_{2}=\Omega^0_\cc(M)$, we obtain
\begin{flalign}
\rce^{\mathrm{lin}}_{(M,h)} (\alpha) \,=\, \alpha + \big(\dd_{M_h}-\dd_M\big)\,G_{M_h}\,\delta_{M_h} \big(\id + \partial_M \lambda_{i_+}\big)(\alpha) \,=\,\alpha\quad,
\end{flalign}
because $\dd_{M_h}=\dd_M$, and
\begin{flalign}
\rce^{\mathrm{lin}}_{(M,h)} (\beta) \,=\,\beta\quad, 
\end{flalign}
\end{subequations}
because $Q_{M_h}(\beta) =0$ by \eqref{eqn:Qmap}. Note that our model for the
relative Cauchy evolution is trivial for all linear antifield observables 
$\alpha\in\LLL(M)_{1}$ and $\beta\in \LLL(M)_2$, and it is also trivial in homology
for all linear ghost field observables $\chi\in\LLL(M)_{-1}$, i.e.\ 
\begin{flalign}
\big[\rce^{\mathrm{lin}}_{(M,h)} (\chi) \big] = [\chi] - \big[\square_M \,G_{M_h} \big(\id + \partial_M \lambda_{i_+}\big)(\chi)\big]=[\chi]\quad,
\end{flalign}
where in the first step we have used that $\square_{M_h}G_{M_h}=0$. The second step follows from
\begin{flalign}
\nn \square_M \,G_{M_h} &= \ast_M \dd_M\ast_M \dd_M\,G_{M_h}
= \ast_M \dd_M (\ast_M-\ast_{M_h}) \dd_M\,G_{M_h}\\
&= \delta_M (\id - \ast_M^{-1}\ast_{M_h}) \dd_M\,G_{M_h}\quad,
\end{flalign}
which implies that $\square_M \,G_{M_h} \big(\id + \partial_M \lambda_{i_+}\big)(\chi)$
is exact in the chain complex \eqref{eqn:linearobscomplex}.
Furthermore, when passing to homology and considering 
linear gauge field observables of the form $\varphi = \delta_M\omega\in\LLL(M)_0$ with
$\omega\in \Omega^2_\cc(M_+)$, our expression in 
\eqref{eqn:rcevarphi} agrees with the relative Cauchy evolution 
for the field strength tensor of Maxwell's theory computed in \cite[Section~6.3]{FewsterLang}.
Here, one must take into account the difference in convention regarding the signs of $\Box_M$ and $G_M$.
\end{rem}

We conclude this section by computing the stress-energy tensor of the 
linear Yang-Mills model. To simplify the resulting expressions,
we pre-compose the chain map \eqref{eqn:rcelinearexplicit2}
with the quasi-isomorphism $\LLL(i_+) : \LLL(M_+)\to \LLL(M)$
and observe that the resulting chain map is homotopic to
\begin{flalign}
\rce_{(M,h)}^{\mathrm{lin},+} \,:=\, \id +  \big(\dd^{\LLL(M_h)} -\dd^{\LLL(M)}\big)
\, G_{M_h}\,Q_{M_h}\,:\, \LLL(M_+)~\longrightarrow~\LLL(M)\quad.
\end{flalign}
In terms of the components displayed in Remark \ref{rem:rcemapsexplicit}, this means
that we restrict to differential forms with compact support in $M_+$
and drop the chain map $\id + \partial_M \lambda_{i_+}$.
We then compute the first derivative
\begin{subequations}
\begin{flalign}
t_{(M,h)}\,:=\, \frac{d}{d\epsilon}\rce_{(M,\epsilon h)}^{\mathrm{lin},+}\big\vert_{\epsilon=0} \,:\, \LLL(M_+)~\longrightarrow~\LLL(M)\quad,
\end{flalign}
which is a chain map whose components read as
\begin{flalign}
\nn t_{(M,h)}(\chi) \,&=\,{\nabla_{\! M}}_a\big(h^{ab} \, (\dd_M G_M \chi)_b\big)-
\tfrac{1}{2} \big({\nabla_{\! M}}_b h^{a}_{~a}\big) \,(\dd_M G_M \chi)^b\quad,\\
\nn t_{(M,h)}(\varphi)_c  \,&=\, {\nabla_{\! M}}_a\big(h^{ab} \, (\dd_M G_M \varphi)_{bc}\big)-
\tfrac{1}{2} \big({\nabla_{\! M}}_b h^{a}_{~a}\big) \,(\dd_M G_M \varphi)^b_{~c} +
\big({\nabla_{\! M}}_a h_{bc}\big) \,(\dd_M G_M \varphi)^{ab}\quad,\\
\nn t_{(M,h)}(\alpha)_c \,&=\, 0\quad,\\
t_{(M,h)}(\beta) \,&=\, 0 \quad,
\end{flalign}
\end{subequations}
for all $\chi\in \LLL(M_+)_{-1}$, $\varphi\in\LLL(M_+)_0$,
$\alpha\in\LLL(M_+)_1$ and $\beta\in\LLL(M_+)_2$.
Here we have used as in \cite{FewsterLang} an index notation for tensor fields 
and differential forms on $M$.  The Levi-Civita connection $\nabla_{\! M}$ on the spacetime 
$M\in\Loc$ enters these expressions through the following identity
\begin{multline}
(\delta_{M_{\epsilon h}}\omega  -\delta_M\omega)_{a_1\cdots a_k} 
\,=\, \epsilon\,\bigg({\nabla_{\! M}}_a \big(h^{ab}\,\omega_{b a_1\cdots a_k}\big)
-\tfrac{1}{2}  \big({\nabla_{\! M}}_b h^{a}_{~a}\big)\,\omega^b_{~a_1\cdots a_k}\\
+ \sum_{j=1}^k (-1)^{j-1} \, \big({\nabla_{\! M}}_a  h_{ba_j}\big)\,\omega^{ab}_{~~a_1\cdots \widehat{a_j}\cdots a_k}\bigg)
+\mathcal{O}(\epsilon^2)\quad,
\end{multline}
for all $k+1$-forms $\omega = 
\omega_{a_0 a_1\cdots a_k}dx^{a_0}\wedge dx^{a_1}\wedge \cdots\wedge dx^{a_k}\in \Omega^{k+1} (M)$.
\sk

In contrast to the situation in ordinary AQFT, the chain map
$t_{(M,h)} : \LLL(M_+)\to \LLL(M)$ does {\em not} extend to a
derivation $\AAA(M_+)=\CCR(\LLL(M_+),\tau_{M_+})\to \AAA(M)=\CCR(\LLL(M),\tau_M)$ 
relative to $\AAA(i_+) : \AAA(M_+)\to \AAA(M)$  at the level of the observable dg-algebras.
This is a remnant of the fact that $\rce_{(M,h)}^{\mathrm{lin},+}$ does not preserve the Poisson structures,
which at the infinitesimal level (i.e.\ to first order in $\epsilon$) amounts to
\begin{flalign}
\tau_M\circ (t_{(M,h)}\wedge \LLL(i_+)) + \tau_{M}\circ (\LLL(i_+)\wedge t_{(M,h)}) \, \neq\, 0
\end{flalign}
as a chain map $\LLL(M_+)\wedge \LLL(M_+)\to \bbR$.
This issue can be rectified by considering the
homotopic model for $t_{(M,h)}$ given by the chain map
\begin{subequations}
\begin{flalign}
\widetilde{t}_{(M,h)} \,:=\, t_{(M,h)} + \partial \psi\,:\, \LLL(M_+)~\longrightarrow~\LLL(M)
\end{flalign}
and the chain homotopy $\psi\in\hom(\LLL(M_+),\LLL(M))_1$ 
defined by the components
\begin{flalign}
\psi(\chi)_b \,&=\, \tfrac{1}{2} h^{a}_{~a} \,(\dd_M G_M \chi)_b - h_{ab}\,(\dd_M G_M\chi)^a\quad,\\
\psi(\varphi)_b\,&=0\,~~,\quad \psi(\alpha)\,=\, 0~~,\quad \psi(\beta)\,=\,0\quad.
\end{flalign}
\end{subequations}
By a straightforward calculation using integration by parts, one proves that 
\begin{flalign}
\tau_M\circ (\widetilde{t}_{(M,h)}\wedge \LLL(i_+)) + \tau_{M}\circ (\LLL(i_+)\wedge \widetilde{t}_{(M,h)}) \,=\,0\quad,
\end{flalign}
which implies that $\widetilde{t}_{(M,h)}$ extends to derivation  $\widetilde{t}_{(M,h)} : 
\AAA(M_+)\to \AAA(M)$ relative to $\AAA(i_+) : \AAA(M_+)\to \AAA(M)$.
Following the arguments in \cite{FewsterLang} (and again adjusting for a 
difference in sign conventions for $G_M$ and therefore the Poisson structure), 
this derivation determines a polarized form of the stress-energy tensor through
\begin{flalign}
\tau_M\big(\widetilde{t}_{(M,h)}(\omega_1)\otimes\LLL(i_+)(\omega_2)\big)
\,=:\,  \int_M h_{ab}\,T^{ab}_M(\omega_1,\omega_2)\,\vol_M\quad,
\end{flalign}
for all $\omega_1,\omega_2\in\LLL(M_+)$. By another straightforward calculation
using integration by parts, we find the explicit formula
\begin{flalign}
T^{ab}_M(\omega_1,\omega_2)\,=\,
\tfrac{1}{4} g^{ab}\, (F_{\varphi_1})^{cd}\,(F_{\varphi_2})_{cd} -(F_{\varphi_1})^{ac}\, (F_{\varphi_2})^{b}_{~c}\quad,
\end{flalign}
where $\varphi_i :=\mathrm{pr}_0(\omega_i)\in\LLL(M_+)_0$ denotes 
the degree $0$ component of $\omega_i$ and $F_{\varphi_i} := \dd_M G_M\varphi_i \in\Omega^2(M)$
denotes its associated field strength $2$-form.
Observe that our polarized stress-energy tensor does not receive contributions from the ghosts and the
antifields, and that in degree $0$ it agrees with the usual Maxwell stress-energy tensor 
(in our $(+-\cdots-)$ signature) on setting $\omega_2=\omega_1$. This result shows how these
physically expected results can be provided with a first-principles justification on quite abstract grounds.


\section*{Acknowledgments}
We would like to thank Marco Benini and Victor Carmona for useful comments on this work.
S.B.\ is supported by a PhD scholarship (RG160517) of the Royal Society (UK).
A.S.\ gratefully acknowledges the financial support of 
the Royal Society (UK) through a Royal Society University 
Research Fellowship (UF150099), a Research Grant (RG160517) 
and two Enhancement Awards (RGF\textbackslash EA\textbackslash 180270 
and RGF\textbackslash EA\textbackslash 201051).

\appendix

\section{\label{app:bimcomplexes}Conventions for bicomplexes}
A {\em bicomplex} $V$ is a bigraded family of $\bbK$-vector spaces
$\{V_{p,q}\}_{p,q\in\bbZ}$ equipped with a vertical
differential $\delta : V_{p,q}\to V_{p-1,q}$ and a horizontal
differential $\dd : V_{p,q}\to V_{p,q-1}$ satisfying
\begin{flalign}
\delta^2 =0~~,\quad \dd^2=0~~,\quad \delta\,\dd + \dd\,\delta =0\quad.
\end{flalign}
Observe that the vertical and horizontal differentials are required to {\em anti-commute},
which is more convenient for discussing the symmetric monoidal structure and totalization 
of bicomplexes than the alternative convention of commuting differentials. We refer to e.g.\ \cite{Bicomplexes}
for more details and the relevant argument that both conventions are equivalent.
A morphism of bicomplexes $f : V\to W$ is a family of linear maps 
$\{f_{p,q} : V_{p,q}\to W_{p,q}\}_{p,q\in\bbZ}$ that commutes with both the vertical 
and the horizontal differential.
We denote by $\bCh_\bbK$ the category of bicomplexes of $\bbK$-vector spaces.
\sk

The category $\bCh_\bbK$ is symmetric monoidal with respect to the tensor 
product defined by
\begin{subequations}
\begin{flalign}
(V\otimes W)_{p,q} \,=\, \bigoplus_{\mycom{i+k=p}{j+l=q}} V_{i,j} \otimes W_{k,l}
\end{flalign}
together with the differentials
\begin{flalign}
\delta(v\otimes w) \,&=\, \delta(v) \otimes w + (-1)^{\vert v\vert^{\mathrm{tot}}}\, v\otimes \delta(w)\quad,\\
\dd(v\otimes w) \,&=\, \dd(v)\otimes w + (-1)^{\vert v \vert^{\mathrm{tot}}}\, v\otimes\dd(w)\quad,
\end{flalign}
\end{subequations}
where $\vert v\vert^{\mathrm{tot}} = i+j$  denotes the {\em total degree} of $v\in V_{i,j}$.
The monoidal unit is $\bbK$ concentrated in degree $(0,0)$
and the symmetric braiding is given by the Koszul sign rule with respect
to the total degrees, i.e.\
\begin{flalign}
V\otimes W ~\longrightarrow~W\otimes V~,~~v\otimes w ~\longmapsto~(-1)^{\vert v\vert^{\mathrm{tot}}\,\vert w\vert^{\mathrm{tot}}}\,w\otimes v\quad.
\end{flalign}
The {\em $\bigoplus$-totalization} of bicomplexes is given by the functor
\begin{flalign}
\Tot \,:\, \bCh_\bbK~\longrightarrow~\Ch_\bbK
\end{flalign}
that assigns to a bicomplex $V\in\bCh_\bbK$ the chain complex defined by
\begin{subequations}
\begin{flalign}
\Tot(V)_m\,:=\, \bigoplus_{i+j=m} V_{i,j}
\end{flalign}
together with the differential
\begin{flalign}
\dd^{\mathrm{tot}}\,:=\, \delta + \dd\quad.
\end{flalign}
\end{subequations}
It is easy to check that $\Tot$ is a strong symmetric monoidal functor
with respect to the obvious structure maps.

\section{\label{app:bar}Bar construction}
The bar construction is a powerful and efficient tool to obtain derived functors.
In this appendix we shall spell out some computational details
that will help the reader to understand better our explicit formulas in Section \ref{sec:rectification}.
We refer to \cite{Fresse} and also \cite{Riehl} for a more detailed presentation of the bar construction.
\sk

Let $F : \CC\to\DD$ be a functor between any two categories $\CC$ and $\DD$.
(The case of interest in the bulk of the paper is the localization functor
$L:\CC\to \BB\bbZ$ from \eqref{eqn:localizationfunctor}.)
The associated bar construction is then a functor
\begin{flalign}
B_{\Delta}(\DD,\CC,-)\,:\, \Fun(\CC,\Ch_\bbK)~\longrightarrow ~\Fun(\DD,\mathbf{sCh}_\bbK)
\end{flalign}
that assigns to each chain complex-valued functor $X : \CC\to\Ch_{\bbK}$ on $\CC$ a functor 
$B_{\Delta}(\DD,\CC,X) : \DD\to \mathbf{sCh}_\bbK$ on $\DD$ with values in simplicial chain complexes.
In order to avoid simplicial technology, which may be unfamiliar to some readers,
we prefer to present this construction in the more familiar language of bicomplexes, 
see Appendix \ref{app:bimcomplexes} for our conventions. (Technically, this uses the Dold-Kan correspondence
between simplicial chain complexes and bicomplexes.) From this perspective the bar construction is a functor
\begin{flalign}
\overline{B_{\Delta}}(\DD,\CC,-)\,:\, \Fun(\CC,\Ch_\bbK)~\longrightarrow ~\Fun(\DD,\bCh_\bbK)
\end{flalign}
assigning functors on $\DD$ with values in bicomplexes.
The application of this functor on any $X\in\Fun(\CC,\Ch_\bbK)$ admits the following
explicit description: The functor $\overline{B_{\Delta}}(\DD,\CC,X):\DD\to \bCh_\bbK$
assigns to an object $d\in\DD$ the bicomplex concentrated in vertical degrees $m\geq 0$
given at vertical degree $m=0$ by
\begin{subequations}\label{eqn:bargeneral}
\begin{flalign}
\overline{B_{\Delta}}(\DD,\CC,X)(d)_{0,\bullet}\,=\,\bigoplus_{c\in\CC} ~\bigoplus_{g\in \DD(Fc,d)}~X(c)_\bullet\quad
\end{flalign}
and for vertical degrees $m\ge 1$ by
\begin{flalign}\label{eqn:bargeneralmge1}
\overline{B_{\Delta}}(\DD,\CC,X)(d)_{m,\bullet}\,=\,\bigoplus_{c\in\CC} ~\bigoplus_{g\in \DD(Fc,d)}~\bigoplus_{\mycom{(f_1,\dots,f_m)\in \mathrm{Mor}_m(\CC)}{\mathrm{t}f_1 =c\,,~f_i\neq \id}} ~X(\mathsf{s}f_m)_\bullet\quad,
\end{flalign}
where $\mathrm{Mor}_m(\CC)$ denotes the set of
composable $m$-tuples 
$\mathsf{t}f_1  \stackrel{f_1}{\longleftarrow} \mathsf{s}f_1=\mathsf{t}f_2 \stackrel{f_2}{\longleftarrow}
 \cdots \stackrel{f_m}{\longleftarrow} \mathsf{s}f_m$ 
of morphisms in $\CC$ and $\mathsf{s}$/$\mathsf{t}$ denotes the source/target 
of a morphism. (Note that $g :Fc\to d$ is a $\DD$-morphism, while the $f_i$ are morphisms in $\CC$.)
The vertical differential is given by the alternating sum
\begin{flalign}
\nn \delta\big(c,g,f_1,\dots,f_m,x\big)\,&=\,(-1)^{\vert x\vert}\, \Big(
\big(\mathsf{s}f_1,g\circ Ff_1,f_2,\dots,f_m,x\big) \\
\nn &\qquad\quad + \sum_{j=1}^{m-1}(-1)^j ~\big(c,g,f_1,\dots,f_j\circ f_{j+1},\dots,f_m,x\big)\\
&\qquad\quad + (-1)^m\, \big(c,g,f_1,\dots,f_{m-1}, X(f_m)x\big )
\Big)\quad,
\end{flalign}
\end{subequations}
for all $(c,g,f_1,\dots,f_m,x)\in \overline{B_{\Delta}}(\DD,\CC,X)(d)_{m,\bullet}$,
where $\vert x\vert$ denotes the degree of $x\in X(\mathsf{s}f_m)$.
To any $\DD$-morphism $k : d\to d^\prime$, the functor assigns the map of bicomplexes
\begin{flalign}
\nn\overline{B_{\Delta}}(\DD,\CC,X)(k) \,:\, \overline{B_{\Delta}}(\DD,\CC,X)(d)~&\longrightarrow~\overline{B_{\Delta}}(\DD,\CC,X)(d^\prime)\quad,\\
\big(c,g,f_1,\dots,f_m,x\big) ~&\longmapsto~\big(c,k\circ g,f_1,\dots,f_m,x\big)
\end{flalign}
determined by post-composition with $k$.
\sk

Roughly speaking, the role of the bar construction is to `fatten up'
the ordinary left Kan extension $F_! = \Lan_F : \Fun(\CC,\Ch_\bbK)\to \Fun(\DD,\Ch_\bbK)$
to provide a homotopically meaningful construction, i.e.\ one that is compatible 
with weak equivalences. (Note that the $0$-th vertical homology
of $\overline{B_{\Delta}}(\DD,\CC,X)$ is a model for $F_!X$.) This is done by
introducing many redundant copies via the direct sums over composable tuples 
of arrows in \eqref{eqn:bargeneral}. This allows us to define a model for the derived
left Kan extension 
\begin{flalign}
\bbL F_!\,:=\, \Tot\big(\overline{B_{\Delta}}(\DD,\CC,-) \big)\,:\, 
\Fun(\CC,\Ch_\bbK)~\longrightarrow ~\Fun(\DD,\Ch_\bbK)
\end{flalign}
by applying the totalization functor $\Tot$ from Appendix \ref{app:bimcomplexes}.
For a proof of this statement (and its generalization to operad algebras) 
we refer to e.g.\ \cite[Theorem 17.2.7 and Section 13.3]{Fresse}. 
\sk

We shall also need concrete models for the derived counit and unit associated with
the derived left Kan extension. Denoting by $F^\ast : \Fun(\DD,\Ch_\bbK)\to\Fun(\CC,\Ch_\bbK)$
the pullback functor, which we recall does not have to be derived because each object in 
$\Fun(\DD,\Ch_\bbK)$ is fibrant in the projective model structure, the component
at $Y\in \Fun(\DD,\Ch_\bbK)$ of the derived counit is the natural transformation
\begin{subequations}
\begin{flalign}
\epsilon_Y\,:\, \bbL F_! F^\ast(Y) ~\longrightarrow~ Y
\end{flalign}
defined by the following components: For each $d\in\DD$, 
\begin{flalign}
\nn \epsilon_{Y,d}\,:\,\bbL F_! F^\ast(Y)(d) ~&\longrightarrow~ Y(d)\quad,\\
\nn (c,g,y)~&\longmapsto~ Y(g)y\quad,\\
(c,g,f_1,\dots,f_m,y)~&\longmapsto~0\quad,
\end{flalign}
\end{subequations}
for all vertical degrees $m\geq 1$.
\sk

The derived unit is slightly more involved to describe. Let us denote by
\begin{flalign}
Q \,:=\, \Tot\big(\overline{B_\Delta}(\CC,\CC,-)\big)\,:\, \Fun(\CC,\Ch_\bbK)~\longrightarrow~\Fun(\CC,\Ch_\bbK)
\end{flalign}
the resolution that is obtain by totalizing the bar construction for the identity functor $\id:\CC\to\CC$.
The component at $X\in \Fun(\CC,\Ch_\bbK)$ of the derived unit
is the natural transformation
\begin{subequations}
\begin{flalign}
\eta_{X}\,:\, Q(X) ~\longrightarrow~ F^\ast \bbL F_!(X)
\end{flalign}
defined by the following components: For each $\tilde{c}\in \CC$,
\begin{flalign}
\nn \eta_{X,\tilde{c}} \,:\, Q(X)(\tilde{c})~&\longrightarrow~\bbL F_!(X)(F\tilde{c})\quad,\\
(c,f_0,f_1,\dots, f_m,x)~&\longmapsto~(c,Ff_0,f_1,\dots,f_m,x)\quad,
\end{flalign}
\end{subequations}
for all $m\geq 0$, where we note that $f_0: c\to\tilde{c}$ corresponds to
$g: Fc\to d$ in \eqref{eqn:bargeneral} because $Q(X)$ is obtained from the bar
construction associated with the identity functor $\id:\CC\to \CC$.

\section{\label{app:zigzag}Zig-zagging homotopy coherence data}
The zig-zagging homotopies and $2$-homotopies  postulated in
\eqref{eqn:zigzagginghomotopyh}, \eqref{eqn:zigzagginghomotopyk} 
and \eqref{eqn:zigzagging2homotopy} can be constructed
explicitly from the homotopy data \eqref{eqn:homotopy4quasiinverse} 
associated to our choice of quasi-inverses $\LLL(f)^{-1}$.
This construction requires some case distinctions,
for which we exclude for the moment the case of identity morphisms.
(We shall see below that the homotopy coherence data for identity 
morphisms can be set consistently to $0$.)
\sk

Given $(n^\prime,N^\prime)\in \bbZ\times \CC$
and $(n,f)\in \bbZ\times \mathrm{Mor}(\CC)$,
the arrow $f$ can point either {\em along} the direction of
the shortest zig-zag in \eqref{eqn:spiral} from $(n^\prime,N^\prime)$ 
to $ (n+L(f),\mathsf{s}f)$, i.e.\
\begin{subequations}
\begin{flalign}\label{eqn:caseA1}
\xymatrix@C=3em{
(n^\prime,N^\prime)\ar@{<->}[r]^-{\text{zig-zag}}~&~ (n+L(f),\mathsf{s}f)\ar[r]^-{f} ~&~ (n,\mathsf{t}f)
}\quad,
\end{flalign}
or it can point {\em against} the shortest zig-zag from $(n^\prime,N^\prime)$ 
to $(n,\mathsf{t}f)$, i.e.\
\begin{flalign}\label{eqn:caseA2}
\xymatrix@C=3em{
(n^\prime,N^\prime)\ar@{<->}[r]^-{\text{zig-zag}}~&~ (n,\mathsf{t}f)~&~ \ar[l]_-{f}  (n+L(f),\mathsf{s}f)
}\quad.
\end{flalign}
\end{subequations}
Note that the orientation in these pictures is given by the direction of the zig-zag 
and hence it does not have to coincide with the orientation of \eqref{eqn:spiral}.
We also note that the situation $(n^\prime,N^\prime) = (n+L(f),\mathsf{s}f)$
is covered by the first case, while $(n^\prime,N^\prime) = (n,\mathsf{t}f)$
is covered by the second one. 
\sk

Similarly, given $(n^\prime,f^\prime)\in \bbZ\times \mathrm{Mor}(\CC)$
and $(n,N)\in \bbZ\times \CC$,
the arrow $f^\prime$ can point either along the direction of
the shortest zig-zag in \eqref{eqn:spiral} to $(n,N)$, i.e.\
\begin{subequations}
\begin{flalign}\label{eqn:caseB1}
\xymatrix@C=3em{
(n^\prime+L(f^\prime),\mathsf{s}f^\prime) \ar[r]^-{f^\prime} ~&~ (n^\prime,\mathsf{t} f^\prime)
\ar@{<->}[r]^-{\text{zig-zag}} ~&~ (n,N)
}\quad,
\end{flalign}
or it can point against it, i.e.\
\begin{flalign}\label{eqn:caseB2}
\xymatrix@C=3em{
(n^\prime,\mathsf{t}f^\prime) ~&~\ar[l]_-{f^\prime}  (n^\prime+L(f^\prime),\mathsf{s} f^\prime)
\ar@{<->}[r]^-{\text{zig-zag}} ~&~ (n,N)
}\quad.
\end{flalign}
\end{subequations}

Given two arrows $(n,f),(n^\prime,f^\prime)\in \bbZ\times \mathrm{Mor}(\CC)$,
there exist $5$ different cases for their alignment relative to the zig-zag from $(n^\prime,f^\prime)$ 
to $(n,f)$, namely
\begin{subequations}
\begin{flalign}\label{eqn:caseC1}
\xymatrix@C=3em{
(n^\prime+L(f^\prime),\mathsf{s}f^\prime) \ar[r]^-{f^\prime} ~&~ (n^\prime,\mathsf{t} f^\prime)\ar@{<->}[r]^-{\text{zig-zag}}
~&~  (n,\mathsf{t}f)~&~ \ar[l]_-{f}  (n+L(f),\mathsf{s}f)
}\quad,
\end{flalign}
\begin{flalign}\label{eqn:caseC2}
\xymatrix@C=3em{
(n^\prime,\mathsf{t}f^\prime)~&~ \ar[l]_-{f^\prime}  (n^\prime+L(f^\prime),\mathsf{s} f^\prime)\ar@{<->}[r]^-{\text{zig-zag}}
~&~  (n+L(f),\mathsf{s}f) \ar[r]^-{f}~&~  (n,\mathsf{t}f)
}\quad,
\end{flalign}
\begin{flalign}\label{eqn:caseC3}
\xymatrix@C=3em{
(n^\prime+L(f^\prime),\mathsf{s}f^\prime) \ar[r]^-{f^\prime} ~&~ (n^\prime,\mathsf{t} f^\prime)\ar@{<->}[r]^-{\text{zig-zag}}
~&~   (n+L(f),\mathsf{s}f) \ar[r]^-{f}~&~  (n,\mathsf{t}f)
}\quad,
\end{flalign}
\begin{flalign}\label{eqn:caseC4}
\xymatrix@C=3em{
(n^\prime,\mathsf{t}f^\prime)~&~ \ar[l]_-{f^\prime}  (n^\prime+L(f^\prime),\mathsf{s} f^\prime)\ar@{<->}[r]^-{\text{zig-zag}}
~&~   (n,\mathsf{t}f)~&~ \ar[l]_-{f}  (n+L(f),\mathsf{s}f)
}\quad,
\end{flalign}
\begin{flalign}\label{eqn:caseC5}
(n,f) = (n^\prime,f^\prime)\quad.
\end{flalign}
\end{subequations}

Using these case distinctions, we define the zig-zagging homotopy (for left composition) in \eqref{eqn:zigzagginghomotopyh} by
\begin{flalign}\label{eqn:Hhformula}
\Lambda_{(n^\prime,N^\prime)}^{(n,f)}\,=\,\begin{cases}
0 &,~~\text{for case \eqref{eqn:caseA1}}\quad,\\
\lambda_f\,Z_{(n^\prime,N)}^{(n,\mathsf{t}f)} &,~~\text{for case \eqref{eqn:caseA2}}\quad.
\end{cases}
\end{flalign}
The zig-zagging homotopy (for right composition) in \eqref{eqn:zigzagginghomotopyk} is given by
\begin{flalign}\label{eqn:Hkformula}
\Gamma_{(n^\prime,f^\prime)}^{(n,N)}\,=\,\begin{cases}
0 & ,~~\text{for case \eqref{eqn:caseB1}}\quad,\\
Z^{(n,N)}_{(n^\prime +L(f^\prime),\mathsf{s}f^\prime)}\,\gamma_{f^\prime}
& ,~~\text{for case \eqref{eqn:caseB2}}\quad,
\end{cases}
\end{flalign}
and the zig-zagging $2$-homotopy in \eqref{eqn:zigzagging2homotopy} is given by
\begin{flalign}\label{eqn:Kformula}
\Xi_{(n^\prime,f^\prime)}^{(n,f)}\,=\,
\begin{cases}
0 &,~~\text{for cases \eqref{eqn:caseC1}, \eqref{eqn:caseC2} and \eqref{eqn:caseC3}}\quad,\\
\lambda_f\,Z_{(n^\prime+L(f^\prime),\mathsf{s}f^\prime)}^{(n,\mathsf{t}f)}\gamma_{f^\prime} &,~~\text{for case \eqref{eqn:caseC4}}\quad,\\
\xi_f &,~~\text{for case \eqref{eqn:caseC5}}\quad.
\end{cases}
\end{flalign}
Observe that the case distinctions in these expressions all collapse to $0$ 
in the case where $f$ or $f^\prime$ is an identity morphism
since we have chosen $\lambda_\id =0$, $\gamma_\id=0$ and $\xi_\id=0$. Hence, we can set consistently
the zig-zagging homotopies associated with an identity morphism
and also the zig-zagging $2$-homotopies associated with at least one identity morphism
to zero.
\sk

It is easy to confirm that \eqref{eqn:Hhformula}, \eqref{eqn:Hkformula}
and \eqref{eqn:Kformula} provides the required homotopy coherence data in 
\eqref{eqn:zigzagginghomotopyh}, \eqref{eqn:zigzagginghomotopyk} and \eqref{eqn:zigzagging2homotopy}.
As an illustration, consider for example \eqref{eqn:zigzagginghomotopyh} for the case \eqref{eqn:caseA2}.
Then we compute
\begin{flalign}
\resizebox{.9 \textwidth}{!} 
{$
\LLL(f)\, Z^{(n+L(f),\mathsf{s}f)}_{(n^\prime,N^\prime)} - Z^{(n,\mathsf{t}f)}_{(n^\prime,N^\prime)} \,=\,
\big(\LLL(f)\, \LLL(f)^{-1} -\id \big) \,Z^{(n,\mathsf{t}f)}_{(n^\prime,N^\prime)} \,=\,\partial\big(\lambda_f\,Z^{(n,\mathsf{t}f)}_{(n^\prime,N^\prime)}\big)
\,=\, \partial \Lambda^{(n,f)}_{(n^\prime,N^\prime)}~~.
$}
\end{flalign}
The other cases follow similarly.


\end{document}